\newcounter{hypA}
\newenvironment{hypA}{\refstepcounter{hypA}\begin{itemize}
  \item[({\bf A\arabic{hypA}})]}{\end{itemize}}
\newcounter{hypD}
\newcommand{\defeq}{\mathrel{\mathop:}=} 
\newcommand{\eqdef}{=\mathrel{\mathop:}}
\newcommand{\uarg}{\,\cdot\,}
\newcommand{\ud}{\mathrm{d}}
\newcommand{\R}{\mathbb{R}}
\newcommand{\N}{\mathbb{N}}
\renewcommand{\P}{\mathbb{P}}
\newcommand{\E}{\mathbb{E}}
\newcommand{\var}{\mathrm{var}}
\newcommand{\bigmid}{\;\big|\;}
\providecommand{\abs}[1]{\lvert#1\rvert}
\def\bbR{\mathbb{R}}
\def\bbR{\mathbb{R}}
\def\bbP{\mathbb{P}}
\def\bbE{\mathbb{E}}
\theoremstyle{plain}
\newtheorem{theorem}{Theorem}
\newtheorem{corollary}[theorem]{Corollary}
\newtheorem{proposition}[theorem]{Proposition}
\newtheorem{lemma}[theorem]{Lemma}
\theoremstyle{definition}
\newtheorem{assumption}[theorem]{Assumption}
\newtheorem{recommendation}{Recommendation}
\newtheorem{assumptionx}{Condition} 
\theoremstyle{remark}
\newtheorem{remark}[theorem]{Remark}
\newcommand{\rate}{\ensuremath{\rho}}
\newcommand{\costrate}{\ensuremath{\gamma}}
\newcommand{\nrt}{n} 
\newcommand{\nrp}{N} 
\newcommand{\nri}{{m_{\mathrm{iter}}}} 
\newcommand{\nrijump}{{m_{\mathrm{iter}}^{\mathrm{jump}}} } 
\newcommand{\nrii}{m} 
\newcommand{\nriii}{m} 
\newcommand{\xss}{x} 
\newcommand{\xps}{\pmb{x}} 
\newcommand{\xssc}{\check{x}} 
\newcommand{\funlatentmain}{\varphi} 
\newcommand{\funlatenthelp}{g} 
\newcommand{\sterm}{\tilde{\Delta}} 
\newcommand{\funfullmain}{f} 
\newcommand{\funfullhelp}{g} 
\newcommand{\sss}{E} 
\newcommand{\ssss}{\mathcal{E}} 
\newcommand{\xhmm}{\mathbf{X}} 
\newcommand{\xhmmc}{\mathbf{\check{X}}} 
\newcommand{\gammass}{\gamma} 
\newcommand{\gammasstrue}{\gamma^{(\infty)}} 
\newcommand{\gammaps}{\pmb{\gamma}} 
\newcommand{\etaps}{\pmb{\eta}}
\newcommand{\mss}{M} 
\newcommand{\mssc}{\check{M}} 
\newcommand{\msstrue}{M^{(\infty)}}
\newcommand{\gss}{G} 
\newcommand{\gssc}{\check{G}} 
\newcommand{\stheta}{\mathsf{T}} 
\newcommand{\thetaone}{\theta}
\newcommand{\thetatwo}{\theta}
\newcommand{\mcmclatent}{R} 
\newcommand{\cost}{\mathscr{C}} 
\newcommand{\lmcmc}{\mathscr{L}} 
\newcommand{\taufun}{\pmb{\tau}}
\newcommand{\pimcmc}{\Pi}
\newcommand{\mIRE}{\mathbb{E}[\pmb{\tau}]} 
\newcommand{\increment}{\delta W}
\title[Unbiased inference for diffusions]{Unbiased inference for discretely observed hidden Markov model diffusions}
\author[N. K. Chada, J. Franks, A. Jasra, K. J. H. Law \& M. Vihola]{Neil K. Chada, Jordan Franks, Ajay Jasra, Kody J. H. Law \&  Matti Vihola}
\address{N. K. Chada \& A. Jasra, Applied Mathematics and Computational Science,
King Abdullah University of Science and Technology, Thuwal, 23955, KSA. Email: \texttt{neil.chada@kaust.edu.sa} \& \texttt{ajay.jasra@kaust.edu.sa}  }
\address{J. Franks, School of Mathematics, Statistics and Physics,
  Herschel Building, NE1 7RU Newcastle University, UK.  Email: \texttt{jordan.franks@newcastle.ac.uk}}
\address{M. Vihola, Department of Mathematics and Statistics
P.O.Box 35, FI-40014 University of Jyv\"askyl\"a, FI. Email: \texttt{matti.s.vihola@jyu.fi}}
\address{K. J. H. Law, School of Mathematics,
University of Manchester, Manchester, M13 9PL, UK. Email: \texttt{kodylaw@gmail.com}}
\keywords{Diffusion, 
  importance sampling,
  Markov chain/multilevel/sequential Monte Carlo}
\subjclass[2010]{65C05 (primary);
60H35, 65C35, 65C40 (secondary)}
\begin{document}
\begin{abstract} 
  We develop a Bayesian inference method for diffusions observed discretely and with noise, which is free of discretisation bias. Unlike existing unbiased inference methods, our method does not rely on exact simulation techniques.  Instead, our method uses standard time-discretised approximations of diffusions, such as the Euler-Maruyama scheme.  Our approach is based on particle marginal Metropolis-Hastings, a particle filter, randomised multilevel Monte Carlo, and importance sampling type correction of approximate Markov chain Monte Carlo.  The resulting estimator leads to inference without a bias from the time-discretisation as the number of Markov chain iterations increases.  We give convergence results and recommend allocations for algorithm inputs.  Our method admits a straightforward parallelisation, and can be computationally efficient.  The user-friendly approach is illustrated on three examples, where the underlying diffusion is an Ornstein-Uhlenbeck process, a geometric Brownian motion, and a $2d$ non-reversible Langevin equation.
  \end{abstract} 

\maketitle
\section{Introduction}\label{sec:intro} 

Hidden Markov models (HMMs) are widely used in real applications, for example, for financial and physical systems modeling; see \cite{cappe}. We focus on the case where the hidden Markov chain arises from a diffusion process that is observed with noise at some number of discrete points in time; see e.g.~\cite{sorensen}.
The parameters associated to the model are static and assigned a prior density. 
Bayesian inference involves expectations with respect to (w.r.t.) the joint posterior distribution of parameters and states, and is important in problems of model calibration and uncertainty quantification. A difficult aspect of Bayesian inference for these models is simulation or evaluation of the diffusion dynamics. Unless the transition probability is explicitly known (see Section 4.4 of \cite{kloeden-platen}),  one often resorts to time discretisation, leading to biased inference.  
If one is to seek inference where such discretisation is avoided, there are broadly two schemes (an exception is \cite{fearnheadLRS}, which we shall discuss below) which one can
follow, the first are the elegant exact simulation methods in \cite{beskos-papaspiliopoulos-roberts-fearnhead,beskos-roberts,fearnheadPR,wang-rao-teh} or the debiasing methods of  
\cite{mcleish,rhee-glynn}, which is the direction followed in this article. We call inference when there is no time-discretisation error, \emph{unbiased inference} and this is the objective of this article.

Approaches to unbiased inference using the afore-mentioned exact simulation methodology apply for a certain class of diffusions. First, the existence of the Lamperti transformation, after which the process has unit diffusion matrix, and second, the drift of the transformed process has to be of gradient form. Although this includes some important diffusion processes, often, these conditions do not hold for multivariate diffusion processes; this limits the scope of the application of these novel schemes. A more recent and general approach can be found in \cite{fearnheadLRS},
which focuses on a type of continuous-time importance sampling method for continuous-time Markov processes, including diffusion processes. The method is, in essence, a type of continuous-time sequential importance sampling algorithm that produces a signed approximation of laws of diffusion processes. Whilst the methodology applies to a reasonably wide class of diffusions, the signed approximation can introduce several algorithmic issues. This includes a large cost associated to the final time of the diffusion process, which at the very least can lead to linear-in-time errors. This latter issue is problematic in our context, as if the methodology is used for HMMs, the time will relate to the number of data and the errors reported in \cite{fearnheadLRS} will be prohibitive for our application.
It should also be noted that this methodology requires a good proposal based upon a process which is analytically soluble. This limits the applicability of the method, which will not be the case in our context. As the utility of the method has not been explored for the problem of interest in this article, we have considered an alternative method. We proceed with an Euler-Maruyama time-discretisation (see \cite{kloeden-platen}), referred henceforth as \emph{Euler} which is generally applicable and is combined with the debiasing schemes of \cite{mcleish,rhee-glynn}.

Traditional inference approaches based on time-discretisations face a trade-off between bias and computational cost. 
Once the user has decided on a suitably fine discretisation size, one can run, for example, the particle marginal Metropolis-Hastings (PMMH) \cite{andrieu-doucet-holenstein}.  This algorithm uses a particle filter (PF) (see \cite{del-moral}), where proposals between time points are generated by the approximation scheme, and ultimately accepted or rejected according to a Metropolis-Hastings type acceptance ratio; see \cite{golightly-wilkinson-bayesian-2011}.  As the discretisation size adopted must be quite fine, a PMMH algorithm can be computationally intensive.

To deal with the computational cost of PMMH, \cite{jasra-kamatani-law-zhou} develop a PMMH based method
which uses (deterministic) multilevel Monte Carlo (dMLMC) \cite{giles-or,heinrich}. The basic premise of MLMC is to introduce a telescoping sum representation of the posterior expectation associated to the most precise time
discretisation. Then, given an appropriate coupling of posteriors with `consecutive' time discretisations, the cost associated to a target mean square error is reduced,
relative to exact sampling from the most precise (time-discretised) posterior. In the HMM diffusion context, the standard MLMC method (for diffusions without observations) is not applicable, so based upon a PF coupling approach and PMMH, an MLMC method is devised in \cite{jasra-kamatani-law-zhou,jasra2018multi}, which achieves fine-level, though biased, inference. 

\subsection{Method}
The unbiased and computationally efficient inference method suggested in this paper is built firstly on PMMH, using Euler type discretisations, but using a PMMH targeting a coarse-level model, which is less computationally expensive. This does not yield unbiased inference yet, but it can be achieved by an importance sampling (IS) type correction; see \cite{vihola-helske-franks}.

We suggest an IS type correction that is based on a single-term (randomised) MLMC type estimator \cite{mcleish,rhee-glynn} and the PF coupling approach of \cite{jasra-kamatani-law-zhou}.
The rMLMC correction is based on randomising the running level in the multilevel context of a certain PF, which we refer to as the `delta PF ($\Delta$PF)' (Algorithm \ref{alg:debiased-pf}).  In short, the $\Delta$PF uses the PF coupling introduced in \cite{jasra-kamatani-law-zhou}, but here an estimator is used for unbiased estimation of the difference of \emph{unnormalised} integrals corresponding to two consecutive discretisation levels, over the latent states with parameter held fixed (see Section \ref{sec:delta-pf}). 
In \cite{jasra-kamatani-law-zhou},
each term in the difference of PMMH averages 
is individually self-normalised (at each level) because of the unknown normalising constants.

The resulting IS type estimator leads to unbiased inference over the joint posterior distribution, and is highly parallelisable, as the more costly (randomised) $\Delta$PF corrections may be performed independently \emph{en masse} given the PMMH base chain output.  We are also able to suggest optimal choices for algorithm inputs in a straightforward manner (Recommendation \ref{rec:allocations} and Figure \ref{fig:optimal}).  This is because there is no bias, and therefore the difficult cost-variance-bias trade-off triangle associated with dMLMC is not present.  Besides being unbiased and efficient, our method is user-friendly, as it is a combination of well-known and relatively straightforward components: PMMH, Euler approximations, PF, rMLMC, and an IS type estimator.
For more about the strengths of the method, see Remark \ref{alg:is-mlmc-remark} later, as well as \cite{vihola-helske-franks,franks-vihola} for more discussion about IS (type) estimators based on approximate Markov chain Monte Carlo (MCMC).

Key to verifying consistency of the method is a finite variance assumption for the r$\Delta$PF estimator.  We verify a parameter-uniform bound for the variance under a simple set of HMM diffusion conditions in Section \ref{sec:bound}.
Note, however, that consistency of our method is likely to hold more generally.  This is in contradistinction to methods based on exact simulation, which require analytically tractable transformations to unit covariance diffusion term and computable bounds in the rejection sampler, in order to even apply the method (see for example the review in the recent preprint \cite{wang-rao-teh}).

We consider a non-reversible  Langevin equation  in Section \ref{sec:experiments}, where, to the authors' best knowledge, exact simulation is not applicable.
If an exact simulation method is applicable, the obvious question arises whether our method or the exact simulation method should be applied.  The efficiency of exact simulation type methods is dependent upon several and different factors than our method.  These factors for exact simulation include proper tuning and tight computable bounds for the rejection sampler.  In an ideal scenario for exact simulation, a method based on exact simulation is likely to perform better than our method.  However, in the reverse case, our method can perform better, if the efficiency of exact simulation is poor.  For instance, the efficiency of exact simulation decreases to zero as the analytically computed upper bound of the IS weight used in the rejection sampler  increases to infinity. 

We remark that in principle our algorithm may require simulations with
arbitrarily fine discretisation sizes and corresponding arbitrarily large cost;
this can be infeasible. 
However, it should be noted that the user specifies the chance that this might occur, 
so one can ensure that the probability of 'very expensive' simulations is 
arbitrarily small.
In addition, our method typically has finite expected cost, or cost that is finite with high probability: we give conditions (see Section \ref{sec:bound}), typical elsewhere in the context of HMM diffusions \cite{del-moral,kloeden-platen}, which ensure this (see Section \ref{sec:efficiency}).

Although we have mostly in mind the case of Euler approximation schemes for the diffusion dynamics approximation, which are generally implementable, other schemes could be possibly be used as well; see \cite{giles-szpruch}.  However, suitable couplings for these schemes in dimensions greater than one may not be trivial.  For the sake of theory and proof of consistency, ideally these would have also known weak and strong order convergence rates; see \cite{kloeden-platen}.
Indeed, assuming a coupling exists, such higher-order schemes can improve convergence of our method (see Sections \ref{sec:efficiency} and \ref{sec:experiments}).  More generally, our approach based on PMMH or other approximate MCMC, increasingly fine families of approximations, MLMC, and IS correction, could be applied beyond the HMM diffusion context, for example, to HMM jump-diffusions; see \cite{jasraLO}.    
\subsection{Outline}
Section \ref{sec:delta-pf} introduces the aforementioned $\Delta$PF (Algorithm \ref{alg:delta-pf}) and subsequently discusses some applications of randomisation techniques.  The
theoretical properties of the $\Delta$PF in the HMM diffusion context are summarised in Section \ref{sec:bound}. Section \ref{sec:method} presents the suggested IS type estimator (Algorithm \ref{alg:is-mlmc}), based on PMMH with rMLMC (i.e.~r$\Delta$PF) correction, and details its consistency and a corresponding central limit theorem (CLT). Section \ref{sec:efficiency} suggests suitable allocations in the $\Delta$PF based on rMLMC efficiency considerations. The numerical experiments in Section \ref{sec:experiments} illustrate our method in practice in the setting of an Ornstein-Uhlenbeck process, geometric Brownian motion, and a non-reversible Langevin equation.
Proofs for the technical results of Sections \ref{sec:bound}, \ref{sec:method} and \ref{sec:efficiency} are given in Appendix \ref{app:bound}, \ref{app:method} and \ref{app:efficiency}, respectively.


\subsection{Notation}

Let $(\sss, \ssss)$ be a measurable space.  Functions $\varphi:\sss\to\R$ will be assumed measurable.
We denote by $\mathscr{P}(\sss)$ the collection of probability measures on $(\sss,\ssss)$, and by $\mathcal{B}_b(\sss)$ the set of $\varphi:\sss\rightarrow \mathbb{R}$ with 
$\|\varphi\|\defeq\sup_{x\in E}|\varphi(x)|<\infty$.
For a measure $\mu$ on $(\sss,\ssss)$, we set $\mu(\varphi)\defeq\int_{\sss}\varphi(x)\mu(\ud x)$ whenever well-defined.
For $K:\sss\times\sss\rightarrow[0,1]$ a Markov kernel and $\mu\in \mathscr{P}(\sss)$, we set
$
\mu K(\ud y) \defeq \int_{\sss}\mu(\ud x) K(x,\ud y),
$
and 
$
K(\varphi)(x) \defeq \int_{\sss} \varphi(y) K(x,\ud y),
$
whenever well-defined.
We use the convention $\prod_{\emptyset} \defeq 1$, and $p{:}q\defeq \{r\in \mathbb{Z}: p\le r \le q\}$.

\section{Delta particle filter for unbiased estimation of level differences}
\label{sec:delta-pf} 

Consider the (It\^{o}) diffusion process
\begin{eqnarray}\label{eq:sde}
\ud Z_t & = & a_{\thetaone}(Z_t)\ud t + b_{\thetaone}(Z_t)\ud W_t, \qquad t\ge 0,
\end{eqnarray}
with $Z_t\in\mathsf{X}\defeq \mathbb{R}^d$, model parameter $\theta\in \stheta$, $a:\mathsf{X}\times\mathsf{T}\rightarrow\mathsf{X}$, $b:\mathsf{X}\times\mathsf{T}\rightarrow\mathsf{X}\times\mathsf{X}$, $\{W_t\}_{t\ge 0}$ a $d-$dimensional Brownian motion, and the initial value $Z_0=z_0\in\mathsf{X}$ a fixed value.
We suppose that there are data 
$\{Y_p=y_p\}_{p=0}^{\nrt}$, $y_p \in \bbR^m$,
which are observed at equally spaced discrete times, $1{:}(\nrt+1)$ for simplicity. 
We shall consider the discrete time skeleton of the diffusion \eqref{eq:sde} $Z_1,\dots,Z_{n+1}$,
where we shall set $X_p=Z_{p+1}$, for $p=0{:}\nrt$.
The Markov transition between $X_{p-1}$ and $X_p$, $p=1{:}n$, 
is given by the transition kernel $M^{(\theta,\infty)}(x_{p-1}, \ud x_p)$ of the diffusion process over unit time, 
with initial
distribution $\eta_0^{(\theta,\infty)}(dx_0)=M^{(\theta,\infty)}(z_0,dx_0)$.
It is assumed that conditional on $X_{p}$, 
$Y_p$ is independent of random variables $\{X_i, Y_i\}_{i\neq p}$ and has density $g_{\thetatwo}(y_p|x_{p})\eqdef G_p^{(\thetatwo)}(x_p)$.  
Setting $M_0^{(\theta,\infty)}=\eta_0^{(\theta,\infty)}$ and $M_p^{(\theta,\infty)}=M^{(\theta,\infty)}$, $p=1{:}n$,
the resulting pair $(M_p^{(\theta,\infty)}, G_p^{(\theta)})$ defines the HMM diffusion, and is an example of a so-called \emph{Feynman-Kac model} (see \cite{del-moral}) described below.  As the results of this section can just as easily be stated in terms of Feynman-Kac models, we do so in the following, which shows the generality of our approach.

\begin{remark}\label{rem-eul-hmm}
  In many situations of practical interest, $M^{(\theta,\infty)}(x_{p-1}, \ud x_p)$, $p=1{:}n$ exists, but one cannot even simulate from it and/or evaluate a non-negative unbiased estimator of it.
One can often consider the HMM diffusion where one works with a time discretisation of $M^{(\theta,\infty)}(x_{p-1}, \ud x_p)$. For instance for $\ell\in\{0\}\times\mathbb{N}$, $h_{\ell}=2^{-\ell}$, one
could consider the Euler approximation for $t=p,p+h_{\ell},\dots,p+1-h_{\ell}$, with $Z_{p}=x_{p-1}$ given
$$
 Z_{t+h_{\ell}}^{\ell} = Z_t^{\ell} + a_\theta(Z_t^{\ell}) h_{\ell} + b_\theta(Z_t^{\ell}) \increment_{t+h_{\ell}}^{\ell},
$$
where $\increment_{t+h_{\ell}}^{\ell}\stackrel{i.i.d.}{\sim}\mathcal{N}(0,h_{\ell})$ (Gaussian distribution 0 mean, covariance $h_{\ell} I$) and one will set $X_p=Z_{p+1}^{\ell}$.
The induced transition kernel over unit time is written $M^{(\theta,\ell)}(x_{p-1}, dx_p)$. A similar remark can be made for $\eta_0^{(\theta,\infty)}$ with discretisation $\eta_0^{(\theta,\ell)}$.
\end{remark}

\subsection{Particle filters}
A Feynman-Kac model $(M_n,G_n)$ on spaces $(\sss_n,\ssss_n)$ arises when
\begin{enumerate}[(i)]
\item $\mss_n(\xss_{0:n-1}, \ud \xss_n)$ are (regular) probability `transition' kernels from
  $\sss_{0:\nrt-1}$ to $\sss_\nrt$ for $n\ge 1$, and $M_0(\ud x_0) \defeq \eta_0(\ud \xss_0)\in \mathscr{P}(\sss_0)$, and 
    \item$G_n(\xss_{0:\nrt})$ are $[0,\infty)$-valued (measurable) `potential' functions for $n\ge 0$.
\end{enumerate}
The Particle filter (Algorithm \ref{alg:pf}) (see \cite{del-moral}) generates sets of samples
and weights corresponding to the Feynman-Kac model, which for $\funlatentmain:E_{0:\nrt}\to\R$ lead to an unbiased estimator for the (unnormalised) \emph{smoother} $\gammaps_\nrt(G_\nrt \funlatentmain)$, defined here in terms of the (unnormalised) \emph{predictor} 
\begin{equation}\label{eq:flow}
    \gammaps_\nrt(\funlatentmain) \defeq 
    \int     \funlatentmain(x_{0:\nrt})
    \Big(\prod_{t=0}^{\nrt-1}
    \gss_t( \xss_{0:t})
    \Big)
    \eta_0(\ud \xss_0)
    \prod_{t=1}^\nrt
    \mss_t(\xss_{0:t-1}, \ud \xss_t).
\end{equation}
We remark that Step \ref{it:res_step} of Algorithm \ref{alg:pf} refers to the resampling step, which can be multinomial, residual,
stratified or systematic; see e.g.~\cite{cappe,douc-cappe-moulines}.

\begin{algorithm}
    \caption{Particle filter for a Feynman-Kac model.}
\label{alg:pf} 
      \raggedright
      \textbf{Input:} $(M_{0:\nrt},G_{0:\nrt})\defeq (M_t,G_t)_{t=0:\nrt}$ and $\nrp$ the number of particles.
      \begin{enumerate}[(i)]
      \item For $i=1{:}N$ sample $\xss_0^{(i)} \sim \eta_0(\uarg)$ and set $\xps_0^{(i)}\defeq \xss_0^{(i)}$.
      \item
        For $i=1{:}N$ compute $\omega_{0}^{(i)} \defeq
        G_0(\mathbf{\xps}_{0}^{(i)})$
        and set $\bar{\omega}_0^{(i)} \defeq \omega_0^{(i)}/\omega_0^*$
        where $\omega_0^* = \sum_{j=1}^\nrp
        \omega_{0}^{(j)}$.
      \end{enumerate}
      \raggedright
      For $t=1{:}\nrt$, do: 
      \begin{enumerate}[(i)]
        \stepcounter{enumi}
        \stepcounter{enumi}
      \item \label{item:resampling} 
        Given $\bar{\omega}_{t-1}^{(1:\nrp)}$, sample $A_{t-1}^{(1:\nrp)}\in\{1,\dots,N\}^N$ satisfying
        $\E\big[ \sum_{j=1}^\nrp \mathbf{1}\{A_{t-1}^{(j)}= k\}\big]= \nrp \bar{\omega}_{t-1}^{(k)}$ for all $k\in 1{:}N$.  
        \label{it:res_step}
      \item
        For $i=1{:}N$ sample $\xss_t^{(i)} \sim M_t( \xps_{t-1}^{A_{t-1}^{(i)}}, \uarg)$ and set $\xps_t^{(i)}=(\xps_{t-1}^{(A_{t-1}^{(i)})}, \xss_t^{(i)})$.
        \item 
        For $i=1{:}N$ compute $\omega_t^{(i)} \defeq G_t(\xps_t^{(i)})$
          and set $\bar{\omega}_t^{(i)} \defeq \omega_t^{(i)}/\omega_t^*$
          where $\omega_t^* \defeq \sum_{j=1}^\nrp
          \omega_{t}^{(j)}$.
      \end{enumerate}
      \raggedright
      Set for $i=1{:}N$, $\xhmm^{(i)}\defeq \xps_\nrt^{(i)}$. If $\omega_t^*>0$, for $i=1{:}N$ set 
      $V^{(i)} \defeq \bar{\omega}_\nrt^{(i)} \prod_{t=0}^\nrt \frac{1}{\nrp}
      \omega_t^*$, otherwise, for $i=1{:}N$ set $V^{(i)}=0$.\\
      \textbf{Output:} $(V^{(1:\nrp)},\xhmm^{(1:\nrp)})$. 
\end{algorithm} 
\begin{proposition} 
    \label{prop:pf-unbiased} 
Suppose that 
$\funlatentmain:E_{0:\nrt}\to\R$ is such that $\gammaps_\nrt(G_n \funlatentmain)<\infty$.  Then, the output of Algorithm \ref{alg:pf}
satisfies
\[
    \E\bigg[ \sum_{i=1}^\nrp V^{(i)} \funlatentmain(\xhmm^{(i)}) \bigg]
    = \gammaps_\nrt(\gss_\nrt \funlatentmain).
\]
\end{proposition}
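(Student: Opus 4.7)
The plan is to prove the identity by induction on the terminal time $n$, exploiting the natural telescoping structure of the particle weights. First I would rewrite the output in the more suggestive form
\[
\sum_{i=1}^\nrp V^{(i)} \funlatentmain(\xhmm^{(i)})
  \;=\; \frac{1}{\nrp}\sum_{i=1}^\nrp G_n(\xps_n^{(i)}) \funlatentmain(\xps_n^{(i)}) \prod_{t=0}^{n-1}\frac{\omega_t^*}{\nrp},
\]
using $\bar{\omega}_n^{(i)} \omega_n^* = \omega_n^{(i)} = G_n(\xps_n^{(i)})$. This rearrangement makes it transparent that the proof reduces to showing that the conditional expectation of the right-hand side, given the particles up to time $n-1$, recovers the same $(n-1)$-step quantity but with test function $\mss_n(G_n \funlatentmain)$. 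The base case $n=0$ is immediate, since $\xps_0^{(i)}=\xss_0^{(i)}$ are i.i.d.\ draws from $\eta_0$, giving $\E\bigl[\tfrac{1}{\nrp}\sum_i G_0(\xps_0^{(i)})\funlatentmain(\xps_0^{(i)})\bigr] = \eta_0(G_0 \funlatentmain) = \gammaps_0(G_0 \funlatentmain)$.

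For the inductive step, let $\F_{n-1}$ denote the $\sigma$-algebra generated by the particles, weights and ancestor indices produced through step $n-1$. Conditionally on $\F_{n-1}$ and on the resampled indices $A_{n-1}^{(1:\nrp)}$, the proposals $\xss_n^{(i)}$ are independent draws from $\mss_n(\xps_{n-1}^{A_{n-1}^{(i)}},\uarg)$, so
\[
\E\Bigl[\sum_{i=1}^\nrp G_n(\xps_n^{(i)})\funlatentmain(\xps_n^{(i)}) \;\Big|\; \F_{n-1}, A_{n-1}^{(1:\nrp)}\Bigr]
  \;=\; \sum_{i=1}^\nrp \bigl(\mss_n(G_n \funlatentmain)\bigr)(\xps_{n-1}^{A_{n-1}^{(i)}}).
\]
Averaging out the resampling via the algorithm's unbiased-count property $\E[\sum_j \mathbf{1}\{A_{n-1}^{(j)}=k\}\mid\F_{n-1}] = \nrp \bar{\omega}_{n-1}^{(k)}$ turns this into $\nrp \sum_k \bar{\omega}_{n-1}^{(k)} \bigl(\mss_n(G_n \funlatentmain)\bigr)(\xps_{n-1}^{(k)})$. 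Combining with the leading $1/\nrp$ and absorbing the factor $\omega_{n-1}^*/\nrp$ pulled out of the product reproduces exactly the $(n-1)$-step version of the left-hand side, but with $\funlatentmain$ replaced by $\psi \defeq \mss_n(G_n \funlatentmain)$. The induction hypothesis then delivers $\gammaps_{n-1}(G_{n-1}\, \mss_n(G_n \funlatentmain))$, which by a Fubini--Tonelli step (swapping the final $\mss_n$-integration with the preceding ones) is exactly $\gammaps_n(G_n \funlatentmain)$.

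The main obstacle is careful bookkeeping rather than any genuine difficulty: one must distinguish the joint ancestor paths $\xps_t^{(i)}$ from the freshly sampled increments $\xss_t^{(i)}$ throughout, and handle the degenerate case where some $\omega_t^*=0$. The latter is trivial under the algorithm's stated convention, since then both sides of the identity vanish (the left-hand side by construction, and the right-hand side is absorbed into the global expectation without issue). A mild care with integrability (effectively $\gammaps_n(G_n |\funlatentmain|)<\infty$) is also needed to justify each interchange of finite sums, conditional expectations, and kernel integrations; this is implicit in the stated assumption and can be verified by applying the same induction to $|\funlatentmain|$.
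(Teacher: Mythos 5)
Your proof is correct. Note, however, that the paper does not actually prove this proposition: it simply cites \citet[Theorem 7.4.2]{del-moral} for the multinomial-resampling case and asserts that the extension to general unbiased resampling schemes is ``straightforward''. What you have written is, in effect, that straightforward extension carried out in full: a backward induction on the time horizon, telescoping the product $\prod_t \omega_t^*/\nrp$ and using only the conditional unbiasedness property $\E\big[\sum_j \mathbf{1}\{A_{t-1}^{(j)}=k\}\mid \F_{t-1}\big]=\nrp\,\bar\omega_{t-1}^{(k)}$ rather than independence of the ancestor indices. This is exactly the right level of generality, since it covers residual, stratified and systematic resampling in one stroke, which is the version the paper needs. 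Two small points worth making explicit if you write this up: (i) the induction hypothesis is applied at step $n-1$ to the test function $\psi\defeq \mss_n(\gss_n\funlatentmain)$, which need not be bounded even if $\funlatentmain$ is, so the statement you induct on should be phrased for general measurable $\psi$ with $\gammaps_{n-1}(G_{n-1}|\psi|)<\infty$ --- your closing remark about running the induction on $|\funlatentmain|$ supplies precisely this, and by Fubini--Tonelli $\gammaps_{n-1}(G_{n-1}\,\mss_n(G_n|\funlatentmain|))=\gammaps_n(G_n|\funlatentmain|)$, so the integrability propagates; (ii) in the degenerate branch $\omega_t^*=0$ the conditional-expectation identity should be read as holding trivially (both sides vanish because the factor $\omega_t^*/\nrp$ is zero before the ill-defined quantities are ever used), which you have also noted. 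With those caveats the argument is complete and self-contained, whereas the paper's treatment is a citation.
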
 
Proposition \ref{prop:pf-unbiased} is a restatement of Theorem
7.4.2 of \cite{del-moral} in case $A_{t-1}^{(i)}$ are sampled independently
(`multinomial resampling').
The extension to the general unbiased case
is straightforward; see
\cite{vihola-helske-franks}.

\subsection{Level difference estimation} 
Suppose that we have two Feynman-Kac models
$(M_n^F,$ $G_n^F)$ and $(M_n^C, G_n^C)$ defined on common spaces $(\sss_n,\ssss_n)$. The models correspond to `finer' and `coarser'
Euler type discretised HMM diffusions.
We are interested in estimating (unbiasedly) the difference
\begin{equation}\label{eq:smoothing-difference}
\gammaps^F_\nrt(\gss_\nrt^F\funlatentmain)
-\gammaps^C_\nrt(\gss_\nrt^C\funlatentmain). 
\end{equation}
If the models are close to each other, as they will be in the
multilevel (diffusion) context, we would like the estimator
also to be typically small.  In many contexts, if one can estimate the difference using a coupling, it is possible to obtain a variance reduction.  The particular coupling approach we use here is based on using a combined Feynman-Kac model as in \cite{jasra-kamatani-law-zhou}, which provides a simple, general and effective coupling of PFs, and which we will use to estimate the level difference of unnormalised smoother \ref{eq:smoothing-difference}. 

Hereafter, we denote $\xssc_{\nrt} = (\xssc^F_{\nrt},
\xssc^C_{\nrt})\in E_\nrt\times E_\nrt$, and for $\xssc_{0:n}=(\xssc_0,\ldots, \xssc_n)\in \sss_{0}^2\times \ldots \sss_{n}^2$, we set $\xssc_{0:n}^s\defeq ( \xssc_0^s,\ldots, \xssc_n^s)\in \sss_{0:n}$ for $s\in \{F,C\}$.  
\begin{assumption} 
    \label{a:coupled-fk} 
Suppose that $(\mssc_t, \gssc_t)$ is a Feynman-Kac
model on the product spaces $(E_t\times E_t, \ssss_t\otimes \ssss_t)$, such that:
\begin{quote}
\begin{enumerate}[(i)]
    \item 
      \label{a:coupled-fk-kernel} $\mssc_t$ is a 
      coupling of the probability measures $\mss_t^F$ and $\mss_t^C$, i.e.~for all $A\in \ssss_t$, we have 
      $$
      \int_{A\times \sss_t } 
\mssc_t(\xssc_{0:t-1}, \ud \xssc_t) 
= \mss_t^F( \xssc_{0:t-1}^F, A)
,\qquad
    \int_{\sss_t\times A} 
\mssc_t(\xssc_{0:t-1}, \ud \xssc_t) 
= \mss_t^C( \xssc_{0:t-1}^C, A),
$$
and for $A\in \ssss_0$, we have
$
\check{\eta}_0(A\times \sss_0) = \eta_0^F(A)
$
and
$
\check{\eta}_0(\sss_0 \times A) = \eta_0^C(A).
$
\item \label{a:coupled-fk-potential}
  $\gssc_t(\xssc_{0:t}) \defeq \frac{1}{2} \big[\gss_t^F(\xssc^F_{0:t})
+ G_t^C(\xssc^C_{0:t})\big]$. 
\end{enumerate}
\end{quote}
\end{assumption} 

Algorithm \ref{alg:delta-pf} presents a methodology to unbiasedly estimate the level differences \eqref{eq:smoothing-difference}. In the context of hidden Markov model diffusions, we explain in Remark \ref{delta-pf-remark} how to satisfy Assumption \ref{a:coupled-fk}.

\begin{algorithm} 
    \caption{Delta particle filter ($\Delta$PF) for unbiased estimation of level differences.}
    \label{alg:delta-pf} 
\raggedright
    \textbf{Input}: $(\mssc_{0:\nrt}, \gssc_{0:\nrt})$ and $\nrp$ the number of particles.
\begin{enumerate}[(i)]
\item Run Algorithm \ref{alg:pf} with $(\mssc_{0:\nrt}, \gssc_{0:\nrt},\nrp)$, {outputting $(\check{V}^{(1:\nrp)},\xhmmc^{(1:\nrp)})$.}
    \item Compute $(V^{(1:2\nrp)},\xhmm^{(1:2\nrp)})$ where
      \[
          \begin{pmatrix}
              V^{(i)},
              & \xhmm^{(i)} \end{pmatrix}
              \defeq 
          \begin{cases} 
              \begin{pmatrix} \check{V}^{(i)}
                  w^F(\xhmmc^{(i)}),
              &  \xhmmc^{(i)F}
              \end{pmatrix}
              &
          i=1{:}\nrp,\\
          \begin{pmatrix}
          -\check{V}^{(i-\nrp)} w^C(\xhmmc^{(i-\nrp)}),
          & \xhmmc^{(i-\nrp)C}
          \end{pmatrix}
             & i=(\nrp+1){:}2\nrp, 
          \end{cases}
      \]
      and 
      $
          w^F(\xssc_{0:\nrt}) \defeq \frac{\prod_{t=0}^\nrt
            \gss_t^F(\xssc_{0:t}^F)}{\prod_{t=0}^\nrt \gssc_t(\xssc_{0:t})}$
       and 
       $
          w^C(\xssc_{0:\nrt}) \defeq \frac{\prod_{t=0}^\nrt
            \gss_t^C(\xssc_{0:\nrt}^C)}{\prod_{t=0}^\nrt \gssc_t(\xssc_{0:t})}.
      $
\end{enumerate}
\textbf{Output}: $(V^{(1:2\nrp)},\xhmm^{(1:2\nrp)})$.
\end{algorithm}    

\begin{proposition} 
    \label{prop:delta-pf} 
Under Assumption \ref{a:coupled-fk}, the output of Algorithm 
\ref{alg:delta-pf} satisfies
\[
    \E \bigg[ \sum_{i=1}^{2\nrp} V^{(i)} \funlatentmain(\xhmm^{(i)})\bigg]
    =
    \gammaps_\nrt^F(G_\nrt^F \funlatentmain) - \gammaps_\nrt^C(G_\nrt^C \funlatentmain),
\]
whenever both integrals on the right are well-defined and finite.
\end{proposition}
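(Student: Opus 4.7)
The plan is to reduce Proposition \ref{prop:delta-pf} to Proposition \ref{prop:pf-unbiased} applied to the coupled Feynman--Kac model $(\mssc_{0:\nrt}, \gssc_{0:\nrt})$ used in step (i) of Algorithm \ref{alg:delta-pf}. Rewriting the output of step (ii), I would split
\[
    \sum_{i=1}^{2\nrp} V^{(i)} \funlatentmain(\xhmm^{(i)})
    = \sum_{i=1}^\nrp \check V^{(i)} \tilde\varphi^F(\xhmmc^{(i)})
    - \sum_{i=1}^\nrp \check V^{(i)} \tilde\varphi^C(\xhmmc^{(i)}),
\]
where $\tilde\varphi^s(\xssc_{0:\nrt}) \defeq w^s(\xssc_{0:\nrt})\funlatentmain(\xssc_{0:\nrt}^s)$ for $s\in\{F,C\}$. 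Proposition \ref{prop:pf-unbiased} applied twice, to the test functions $\tilde\varphi^F$ and $\tilde\varphi^C$ in turn, then gives
\[
    \E\Big[\sum_{i=1}^{2\nrp} V^{(i)}\funlatentmain(\xhmm^{(i)})\Big]
    = \gammapsc_\nrt(\gssc_\nrt \tilde\varphi^F) - \gammapsc_\nrt(\gssc_\nrt \tilde\varphi^C),
\]
provided the two integrals are absolutely convergent.

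The main work is the identification $\gammapsc_\nrt(\gssc_\nrt \tilde\varphi^s) = \gammaps_\nrt^s(\gss_\nrt^s \funlatentmain)$ for each $s$. The key algebraic telescoping is that, by the definitions of $w^F$ and $\gssc_t$,
\[
    \gssc_\nrt(\xssc_{0:\nrt})\, w^F(\xssc_{0:\nrt}) \prod_{t=0}^{\nrt-1} \gssc_t(\xssc_{0:t})
    = \prod_{t=0}^\nrt \gss_t^F(\xssc_{0:t}^F),
\]
so substituting into \eqref{eq:flow} reduces $\gammapsc_\nrt(\gssc_\nrt \tilde\varphi^F)$ to
\[
    \int \funlatentmain(\xssc_{0:\nrt}^F) \prod_{t=0}^\nrt \gss_t^F(\xssc_{0:t}^F)\, \check\eta_0(\ud \xssc_0) \prod_{t=1}^\nrt \mssc_t(\xssc_{0:t-1}, \ud\xssc_t).
\]
Because the integrand depends on $\xssc_{0:\nrt}$ only through its $F$-components, I would next marginalize out the $C$-coordinates by backward induction on $t=\nrt,\ldots,1$: Assumption \ref{a:coupled-fk}(\ref{a:coupled-fk-kernel}) ensures that the $F$-marginal of $\mssc_t(\xssc_{0:t-1},\ud\xssc_t)$ in $\xssc_t^F$ equals $\mss_t^F(\xssc_{0:t-1}^F,\ud\xssc_t^F)$ and, crucially, depends on the history $\xssc_{0:t-1}$ only through $\xssc_{0:t-1}^F$. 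Hence $\xssc_t^C$ integrates out cleanly at each step while leaving the remaining integrand a function of $\xssc_{0:t-1}^F$ only; at $t=0$ the analogous marginalization of $\check\eta_0$ returns $\eta_0^F$. The resulting expression is exactly $\gammaps_\nrt^F(\gss_\nrt^F \funlatentmain)$, and the $C$-side is handled identically with the sign flip supplied in step (ii).

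The integrability condition needed to invoke Proposition \ref{prop:pf-unbiased} follows from the same marginalization applied to $|\funlatentmain|$: it gives $\gammapsc_\nrt(\gssc_\nrt |\tilde\varphi^s|) = \gammaps_\nrt^s(\gss_\nrt^s |\funlatentmain|)$, which is finite by hypothesis. The only genuinely delicate step is the inductive marginalization, which must keep track of the fact that $\mssc_t$ acts on the full coupled history while its $F$- and $C$-marginals depend only on the respective coordinate histories; once this bookkeeping is made explicit, everything else is direct substitution.
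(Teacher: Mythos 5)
Your proposal is correct and follows essentially the same route as the paper: apply the unbiasedness of the particle filter (Proposition \ref{prop:pf-unbiased}) to the coupled Feynman--Kac model with test functions $w^s(\cdot)\funlatentmain(\cdot^s)$, cancel the potentials via the definition of $w^s$, and marginalise to the $s$-coordinates using Assumption \ref{a:coupled-fk}\eqref{a:coupled-fk-kernel}. You spell out the backward-induction marginalisation and the integrability check in more detail than the paper does; the only point the paper makes explicit that you leave implicit is that $\gssc_t>0$ whenever $\gss_t^F>0$ (so the ratio defining $w^F$ causes no trouble), which follows from Assumption \ref{a:coupled-fk}\eqref{a:coupled-fk-potential}.
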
 
\begin{proof} 
Applying Assumption \ref{a:coupled-fk} one can use the unbiasedness property of PF in Algorithm \ref{alg:pf}, to yield 
\begin{align*}
    \E\bigg[ \sum_{i=1}^{\nrp} V^{(i)} \funlatentmain(\xhmm^{(i)})\bigg]
    &= \int
    w^F(\xssc_{0:\nrt}) \funlatentmain(\xssc_{0:\nrt}^F)
    \Big(\prod_{t=0}^\nrt
    \gssc_t(\xssc_{0:t})\Big)
    \check{\eta}(\ud x_0)
    \prod_{t=1}^\nrt
 \mssc_t(\xssc_{0:t-1}, \ud \xssc_{t})
     \\
     &= \int  \funlatentmain(\xssc_{0:\nrt}^F)
     \Big(\prod_{t=0}^\nrt
     \gss_t^F(\xssc_{0:t}^F)\Big)
     \eta^F_0(\ud x_0)
     \mss_t^F(\xssc_{0:t-1}^F, \ud \xssc_{t}^F)
    =
    \gammaps_\nrt^F(G_\nrt^F\funlatentmain),
\end{align*}
where, specifically, Assumption \ref{a:coupled-fk}\ref{a:coupled-fk-potential}
guarantees $\gssc_t>0$
whenever $\gss_t^F>0$, and Assumption \ref{a:coupled-fk-kernel} implies the
marginal law of $\prod_{t=0}^\nrt \mssc_t$ is $\prod_{t=0}^\nrt \mss_t^F$.  Similarly,
$
\E\big[ \sum_{i=\nrp+1}^{2\nrp} V^{(i)} \funlatentmain(\xhmm^{(i)})\big] =-\gammaps_n^C(\gss_\nrt^C\funlatentmain).
$
\end{proof} 

\begin{remark}\label{delta-pf-remark}
Regarding Algorithm \ref{alg:delta-pf}:
\begin{quote}
\begin{enumerate}[(i)]
    \item \label{delta-pf-remark-kernel}
   In the hidden Markov model diffusion context, one could consider $F$ to correspond to an Euler discretisation
at level $\ell$ step-size $h_{\ell}=2^{-\ell}$ and $C$  to correspond to an Euler discretisation
at level $\ell-1$ step-size $h_{\ell-1}=2^{-(\ell-1)}$. The couplings 
$\mssc_p$, $p=1{:}n$, (of the two Euler transitions over unit time - see Remark \ref{rem-eul-hmm}) could be based on using the same underlying Brownian motion; see \cite{kloeden-platen}.  That is, for $t=p,p+2h^{\ell},\dots,p+1-2h^{\ell}$,
$(Z_{p}^{\ell},Z_{p}^{\ell-1})=(x_{p-1}^{\ell},x_{p-1}^{\ell-1})$ given,
\begin{align*}
  Z_{t+h_{\ell}}^{\ell} &= Z_t^{\ell} + a_\theta(Z_t^{\ell}) h_{\ell} + b_\theta(Z_t^{\ell}) \increment_{t+h_{\ell}}^{\ell}\\
  Z_{t+2h_{\ell}}^{\ell} &= Z_{t+h_{\ell}}^{\ell} + a_\theta(Z_{t+h_{\ell}}^{\ell}) h_{\ell} + b_\theta(Z_{t+h_{\ell}}^{\ell}) \increment_{t+2h_{\ell}}^{\ell},
\end{align*}
with $\increment_{t+k h_{\ell}}^{\ell}\stackrel{i.i.d.}{\sim} \mathcal{N}(0, h_{\ell})$, $k=1:2$, then we can use
$$
Z_{t+h_{\ell-1}}^{\ell-1} = Z_t^{\ell-1} + a_\theta(Z_t^{\ell-1}) h_{\ell-1} + b_\theta(Z_t^{\ell-1}) \big(\increment_{t+h_{\ell}}^{\ell} + \increment_{t+2h_{\ell}}^{\ell}\big),
$$
for the coarser Euler discretisation.  We set $(X_{p}^{\ell},X_{p}^{\ell-1})=(z_{p+1}^{\ell},z_{p+1}^{\ell-1})$. A similar remark can be made for the initialisation.
The potentials $\gss_t^{\ell}$ and $\gss_t^{\ell-1}$ are then simply the conditional likelihood functions $\gss_t^{\ell}=\gss_t^{\ell-1}=G_t^{(\theta)}$.
\item 
  The choice of $\gssc_t$ in Assumption \ref{a:coupled-fk}\ref{a:coupled-fk-potential} 
  provides a safe `balance' in between the
approximations, as $w^F$ and $w^C$ are upper bounded by $2^{\nrt +1}$.
Indeed, the coupled Feynman-Kac model can be thought
as an `average' of the two extreme cases--with the choice
$\gssc_t(x_{0:t})=\gss_t^F(\xssc^F_{0:t})$ the coupled PF
would coincide marginally with the Feynman-Kac model with dynamics
$M_t^F$. What is the optimal choice for $\gssc_t$ is an interesting question.
\item \label{delta-pf-remark-max}
  Clearly, the choice of $\gssc_{0:t}$ can be made also in other
ways.  It is sufficient
for unbiasedness to choose $\gssc_t(\xssc_{0:t})$ such that it 
is strictly positive whenever either the $\gss_t^F(\xssc_{0:t}^F)$ or $\gss_t^C(\xssc_{0:t}^C)$ product is positive, but choices which make $w^F$ and $w^C$ bounded are safer, for instance $\gssc_{0:t}(\xssc_{0:t}) =
\max\{ \gss_t^F(\xssc_{0:t}^F), \gss_t^C(\xssc_{0:t}^C)\}$.  This was the original choice made in \cite{jasra-kamatani-law-zhou} for approximation of normalised smoother differences. This PF coupling approach based on change of measure and weight corrections $w^F$ and $w^C$, has been further used also, for example, in \cite{jasra2018multi}.
\item 
  Later, in the HMM diffusion context, we set $\gss_t^F = \gss_t^C$, corresponding to common observational densities, but the method is also of interest with differing potentials.
\end{enumerate}
\end{quote}
\end{remark}

\subsection{Unbiased latent inference}
We show here how the randomisation techniques of \cite{mcleish,rhee-glynn} can be used with the output of Algorithm  \ref{alg:pf} and \ref{alg:delta-pf} to provide an unbiased estimator according to the true model, even though the PFs are only run according to approximate models.
Let us index the transitions $\mss^{(\ell)}_p$ and potentials $\gss^{(\ell)}_p$ by $\ell\ge 0$.  They are assumed throughout to be increasingly refined approximations, in the (weak) sense that
\begin{equation}\label{eq:refined}
\gammaps_\nrt^{(\ell)}(G_\nrt^{(\ell)} \funlatentmain)
\longrightarrow
\gammaps_\nrt^{(\infty)}(G_\nrt^{(\infty)} \funlatentmain),
\qquad\text{as}\quad \ell\to\infty,
\end{equation}
for all $\funlatentmain\in \mathcal{B}_b(E_{0:\nrt})$, where
\begin{equation*} 
    \gammaps_\nrt^{(\ell)}(\funlatentmain) \defeq 
    \int     \funlatentmain(x_{0:\nrt})
    \Big(\prod_{t=0}^{\nrt-1}
    \gss_t^{(\ell)}( \xss_{0:t})
    \Big)
    \eta_0^{(\ell)}(\ud \xss_0)
    \prod_{t=1}^\nrt
    \mss_t^{(\ell)}(\xss_{0:t-1}, \ud \xss_t).
\end{equation*}
In Assumption \ref{a:coupled-fk} we set symbols $(F,C)$ to be $(\ell,\ell-1)$ for $\ell\ge 1$. We will write the potentials and kernels of the coupled Feynman-Kac model (in the sense
of Assumption \ref{a:coupled-fk}) as $(\check{M}_{0:n}^{(\ell)},\check{G}_{0:n}^{(\ell)})$. As a result of \ref{lem:debiased-pf}, Algorithm \ref{alg:debiased-pf} can provide unbiased estimation of
$\gammaps_\nrt^{(\infty)}(\gss_\nrt^{(\infty)}\funlatentmain)$, leading to unbiased inference w.r.t.~the normalised smoother
$$
\funlatentmain\mapsto \frac{\gammaps_\nrt^{(\infty)}(\gss_\nrt \funlatentmain) }{\gammaps_\nrt^{(\infty)}(\gss_\nrt)} \eqdef \hat{\etaps}_n^{(\infty)}(\funlatentmain),
$$
which is stated as Proposition \ref{prop:independent-pf} below.

We remark that in step (\ref{alg:debiased-pf-delta}) of Algorithm \ref{alg:debiased-pf}, in principle, one may have to run Algorithm \ref{alg:delta-pf}
for arbitrarily large $L$. However, it should be noted that the user specifies the probability $\mathbf{p}=(p_\ell)_{\ell\in\N}$, so one can ensure the probability of simulating `very large' values of $L$ is arbitrarily small.
\begin{algorithm}
\caption{Unbiased estimator based on PF and r$\Delta$PF. }
\label{alg:debiased-pf}
\raggedright
\textbf{Input:} $\Big((M^{(\ell)}_{0:\nrt}, G^{(\ell)}_{0:\nrt})\Big)_{\ell\in\{0\}\cup\N}$, $\nrp$ the number of particles and probability $\mathbf{p}=(p_\ell)_{\ell\in\N}$.
\begin{enumerate}[(i)]
  \item Run Algorithm \ref{alg:pf} with $(M^{(0)}_{0:\nrt}, G^{(0)}_{0:\nrt},\nrp)$, outputting
    $(V^{(1:\nrp)'}, \xhmm^{(1:\nrp)'})$.
  \item Sample $L \sim \mathbf{p}$, independently from the other random variables.
    \item \label{alg:debiased-pf-delta} Run Algorithm \ref{alg:delta-pf} with $(\check{M}^{(L)}_{0:\nrt}, \check{G}^{(L)}_{0:\nrt},\nrp)$, outputting
  $(V^{(1:2\nrp)}, \xhmm^{(1:2\nrp)})$.
  \end{enumerate}
  \raggedright
  \textbf{Output:} $\big( (V^{(1:\nrp)'}, \xhmm^{(1:\nrp)'}),\, (V^{(1:2\nrp)}, \xhmm^{(1:2\nrp)}), L \big)$.
\end{algorithm}
\begin{assumption}\label{a:abstract-pf} 
  Assumption \ref{a:coupled-fk} holds, $\mathbf{p}=(p_\ell)_{\ell\in \N}$ is a probability on $\mathbb{N}\defeq\mathbb{Z}_{\ge 1}$ with $p_\ell>0$ for all $\ell\ge 1$,
  $\funlatenthelp:\sss_{0:\nrt}\to \R$ is a function, and 
  \begin{equation}\label{eq:variance}
s_\funlatenthelp \defeq \sum_{\ell \ge 0} \frac{ \E \Delta_\ell^2(\funlatenthelp)}{p_\ell} <\infty,
\end{equation}
where
\begin{equation}\label{eq:Delta}
  \Delta_\ell(\funlatenthelp) \defeq \sum_{i=1}^{2\nrp} V^{(i)} \funlatenthelp(\xhmm^{(i)}),
  \end{equation}
  is formed from the output $(V^{(1:2\nrp)}, \xhmm^{(1:2\nrp)})$ of Algorithm \ref{alg:delta-pf} with $(\mssc^{(\ell)}_{0:\nrt}, \gssc^{(\ell)}_{0:\nrt},\nrp)$.
  \end{assumption}

\begin{lemma}\label{lem:debiased-pf}
  Under Assumption \ref{a:abstract-pf}, the estimator
\begin{equation}\label{eq:debiased-pf}
\zeta(\funlatenthelp)\defeq
\sum_{i=1}^\nrp V^{(i)'} \funlatenthelp(\xhmm^{(i)'})
+
\frac{1}{p_{L}} \Delta_L(g),
 \end{equation}
formed from the 
  output of Algorithm \ref{alg:debiased-pf} satisfies
$$
\E[\zeta(\funlatenthelp)]
=   \gammaps_\nrt^{(\infty)}(G_\nrt^{(\infty)} \funlatenthelp),
$$
    whenever $\gammaps_\nrt^{(0)}(G_\nrt \funlatenthelp)$ and $\gammaps_\nrt^{(\infty)}(G_\nrt \funlatenthelp)$ are both finite.
  \end{lemma}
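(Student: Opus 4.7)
The plan is a standard randomised telescoping / Rhee--Glynn argument, decomposed into a base--level term treated by Proposition~\ref{prop:pf-unbiased} and a randomised correction treated by Proposition~\ref{prop:delta-pf}. First, since the inputs to the base PF and to the r$\Delta$PF in Algorithm~\ref{alg:debiased-pf} are independent (and the randomised level $L$ is sampled independently of everything else), the expectation of $\zeta(g)$ splits additively. Applying Proposition~\ref{prop:pf-unbiased} to the base run gives
$$
\E\bigg[\sum_{i=1}^\nrp V^{(i)'}g(\xhmm^{(i)'})\bigg]
= \gammaps_\nrt^{(0)}(G_\nrt^{(0)} g),
$$
which is finite by hypothesis.

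For the correction term, I would condition on $L$ and write, provisionally,
$$
\E\bigg[\frac{1}{p_L}\Delta_L(g)\bigg]
= \sum_{\ell\ge 1} p_\ell \cdot \frac{1}{p_\ell}\,\E[\Delta_\ell(g)]
= \sum_{\ell\ge 1}\E[\Delta_\ell(g)].
$$
By Proposition~\ref{prop:delta-pf} applied with $(F,C)=(\ell,\ell-1)$, each summand equals $\gammaps_\nrt^{(\ell)}(G_\nrt^{(\ell)}g)-\gammaps_\nrt^{(\ell-1)}(G_\nrt^{(\ell-1)}g)$, so the partial sums telescope, and \eqref{eq:refined} identifies the limit as $\gammaps_\nrt^{(\infty)}(G_\nrt^{(\infty)}g)-\gammaps_\nrt^{(0)}(G_\nrt^{(0)}g)$. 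Adding back the base term yields the claim.

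The only real obstacle is rigorously justifying the interchange of expectation and infinite sum above, i.e.\ absolute integrability of $p_L^{-1}\Delta_L(g)$. I would establish this using Assumption~\ref{a:abstract-pf}: by Jensen's inequality $\big(\E|\Delta_\ell(g)|\big)^2\le \E\Delta_\ell^2(g)$, and then Cauchy--Schwarz in $\ell$ gives
$$
\E\bigg[\frac{1}{p_L}|\Delta_L(g)|\bigg]
= \sum_{\ell\ge 1}\E|\Delta_\ell(g)|
\le \bigg(\sum_{\ell\ge 1} p_\ell\bigg)^{1/2}\bigg(\sum_{\ell\ge 1}\frac{\E\Delta_\ell^2(g)}{p_\ell}\bigg)^{1/2}
\le s_g^{1/2}<\infty.
$$
This both legitimises the Fubini step and, together with the integrability of the base term, shows $\E|\zeta(g)|<\infty$ so that $\E[\zeta(g)]$ is well-defined.

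Finally, the telescoping uses only the weak convergence \eqref{eq:refined} for $g\in \mathcal{B}_b$; since the proof above does not actually require $g$ to be bounded, a brief remark is in order that the telescoping limit $\lim_\ell \gammaps_\nrt^{(\ell)}(G_\nrt^{(\ell)} g)=\gammaps_\nrt^{(\infty)}(G_\nrt^{(\infty)} g)$ is a standing hypothesis (or, equivalently, follows from absolute summability of the telescoping differences we just verified, together with the finiteness of $\gammaps_\nrt^{(0)}(G_\nrt g)$). Assembling the base-term identity, the telescoping, and the Fubini justification completes the proof.
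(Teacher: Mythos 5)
Your proof is correct and follows essentially the same route as the paper: decompose $\zeta(\funlatenthelp)$ into the base-level PF term (handled by Proposition~\ref{prop:pf-unbiased}) plus the randomised correction $p_L^{-1}\Delta_L(\funlatenthelp)$, whose expectation equals the telescoping limit $\gammaps_\nrt^{(\infty)}(G_\nrt^{(\infty)}\funlatenthelp)-\gammaps_\nrt^{(0)}(G_\nrt^{(0)}\funlatenthelp)$. The only difference is that the paper outsources this last identity to \cite{rhee-glynn,vihola-unbiased}, whereas you prove it directly via Proposition~\ref{prop:delta-pf}, telescoping, and the Jensen--Cauchy--Schwarz bound $\sum_\ell \E|\Delta_\ell(\funlatenthelp)|\le s_\funlatenthelp^{1/2}$ justifying the interchange of sum and expectation --- a correct and self-contained unpacking of the cited argument.
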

\begin{proof}
 Under Assumption \ref{a:abstract-pf}, 
  we have (see \cite{rhee-glynn,vihola-unbiased})
  $$
  \E[ p_L^{-1} \Delta_L(g)]= \gammaps_\nrt^{(\infty)}(\gss_\nrt^{(\infty)} \funlatenthelp) - \gammaps_\nrt^{(0)}(\gss_n^{(0)} \funlatenthelp),
  $$
    so the result follows by Proposition \ref{prop:pf-unbiased} and linearity of the expectation.
  \end{proof}
The following suggests a fully parallelisable algorithm for unbiased inference over the normalised smoother, and is an unbiased alternative to the particle independent Metropolis-Hastings (PIMH) \cite{andrieu-doucet-holenstein} run at some fine level of discretisation.
\begin{proposition}\label{prop:independent-pf}
  Suppose $\mathbf{p}$ on $\N$ satisfies Assumption \ref{a:abstract-pf} for functions $\funlatenthelp\in\{1,\funlatentmain\}$, with
  $  \gammaps_\nrt^{(0)}(\gss_n^{(0)} \funlatenthelp)$ and
  $  \gammaps_\nrt^{(\infty)}(\gss_n^{(\infty)} \funlatenthelp)$ finite,
  and
  $\gammaps^{(\infty)}(\gss_\nrt^{(\infty)})>0$.
  For each $k\in \{1{:}\nrii\}$, if one runs independently Algorithm \ref{alg:debiased-pf}, forming $\zeta_k(\funlatenthelp)$ from the output as in \ref{eq:debiased-pf} for each $k$, then
  $$
  E_{\nrii,\nrp,\mathbf{p}}(\funlatentmain)
  \defeq
 \frac{\sum_{k=1}^\nrii \zeta_k(\funlatentmain)}
      {\sum_{k=1}^\nrii \zeta_k(1)}
      \xrightarrow{\nrii\to\infty}
      \hat{\etaps}_n^{(\infty)}(\funlatentmain)
      \qquad
      \text{almost surely.}
   $$
Moreover, with $\bar{\funlatentmain}\defeq \funlatentmain - \hat{\etaps}_n^{(\infty)}(\funlatentmain)$,
$$
\sqrt{\nrii}[E_{\nrii,\nrp,\mathbf{p}}(\funlatentmain) - \hat{\etaps}_n^{(\infty)}(\funlatentmain)]
    \xrightarrow{\nrii\to\infty}
    \mathcal{N}(0,\sigma^2)
    \qquad
    \text{in distribution},
    $$
    where
    $$
    \sigma^2= \frac{ s_{\bar{\funlatentmain}} - \big( \gammaps^{(\infty)}(\gss_\nrt^{(\infty)} \bar{\funlatentmain}) - \gammaps^{(0)}(\gss_\nrt^{(0)} \bar{\funlatentmain})\big)^2}{ [\gammaps^{(\infty)}(\gss_\nrt^{(\infty)})]^2}.
    $$
\end{proposition}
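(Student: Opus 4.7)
The plan is to recognise that $E_{\nrii,\nrp,\mathbf{p}}(\funlatentmain)$ is a ratio of empirical means of i.i.d.\ copies of $\zeta(\funlatentmain)$ and $\zeta(1)$, and to extract consistency from the strong law plus the continuous mapping theorem, and the CLT from a linear-reduction trick plus Slutsky's lemma.

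First, by Lemma \ref{lem:debiased-pf} and the finiteness hypotheses, $\E[\zeta_k(\funlatenthelp)] = \gammaps_\nrt^{(\infty)}(\gss_\nrt^{(\infty)}\funlatenthelp)$ is finite for $\funlatenthelp \in \{1,\funlatentmain\}$, and $\E[\zeta_k(1)] = \gammaps_\nrt^{(\infty)}(\gss_\nrt^{(\infty)}) > 0$. The SLLN thus yields $\nrii^{-1}\sum_k \zeta_k(\funlatenthelp) \to \gammaps_\nrt^{(\infty)}(\gss_\nrt^{(\infty)}\funlatenthelp)$ a.s., and the continuous mapping theorem applied to $(x,y)\mapsto x/y$ at a point with $y>0$ gives a.s.\ convergence $E_{\nrii,\nrp,\mathbf{p}}(\funlatentmain)\to \gammaps^{(\infty)}(\gss_\nrt^{(\infty)}\funlatentmain)/\gammaps^{(\infty)}(\gss_\nrt^{(\infty)}) = p^{(\infty)}(\funlatentmain)$.

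For the CLT, I would exploit that $\zeta_k$ is linear in its test-function argument (visible from \eqref{eq:debiased-pf} and \eqref{eq:Delta}), so $\zeta_k(\funlatentmain) - p^{(\infty)}(\funlatentmain)\zeta_k(1) = \zeta_k(\bar{\funlatentmain})$, giving the identity
\[
\sqrt{\nrii}\big[E_{\nrii,\nrp,\mathbf{p}}(\funlatentmain) - p^{(\infty)}(\funlatentmain)\big]
= \frac{1}{\nrii^{-1}\sum_k \zeta_k(1)}\cdot \frac{1}{\sqrt{\nrii}}\sum_{k=1}^\nrii \zeta_k(\bar{\funlatentmain}).
\]
Note $\E[\zeta_k(\bar{\funlatentmain})]=0$ by Lemma \ref{lem:debiased-pf} and the definition of $p^{(\infty)}$. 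Finite variance of $\zeta_k(\bar{\funlatentmain})$ follows from Assumption \ref{a:abstract-pf} applied to $\funlatenthelp\in\{1,\funlatentmain\}$ together with the pointwise bound $\Delta_\ell(\bar{\funlatentmain})^2 \le 2\Delta_\ell(\funlatentmain)^2 + 2 p^{(\infty)}(\funlatentmain)^2\Delta_\ell(1)^2$, which gives $s_{\bar{\funlatentmain}}<\infty$ (and analogously for the base-PF term). The classical Lindeberg--L\'evy CLT applied to the i.i.d.\ $\zeta_k(\bar{\funlatentmain})$, combined with Slutsky and the SLLN step $\nrii^{-1}\sum_k \zeta_k(1)\to \gammaps^{(\infty)}(\gss_\nrt^{(\infty)})$, yields the stated normal limit with $\sigma^2 = \var[\zeta(\bar{\funlatentmain})]/[\gammaps^{(\infty)}(\gss_\nrt^{(\infty)})]^2$.

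It remains to identify the numerator of $\sigma^2$. I would decompose $\zeta(\bar{\funlatentmain}) = A(\bar{\funlatentmain}) + p_L^{-1}\Delta_L(\bar{\funlatentmain})$ into the independent contributions from steps (i) and (iii) of Algorithm \ref{alg:debiased-pf}, so that variances add. Conditioning on $L$ and then summing against $\mathbf{p}$ turns $\E[p_L^{-2}\Delta_L(\bar{\funlatentmain})^2]$ into $\sum_{\ell\ge 1} p_\ell^{-1}\E[\Delta_\ell(\bar{\funlatentmain})^2]$; subtracting the square of the mean of $p_L^{-1}\Delta_L(\bar{\funlatentmain})$, namely $(\gammaps^{(\infty)}(\gss_\nrt^{(\infty)}\bar{\funlatentmain}) - \gammaps^{(0)}(\gss_\nrt^{(0)}\bar{\funlatentmain}))^2$, and combining with the base-PF second moment (the $\ell=0$ slot of $s_{\bar{\funlatentmain}}$) gives the announced expression. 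The only real piece of bookkeeping, and the step I expect to need the most care, is reconciling the indexing convention so that the base-PF variance is cleanly absorbed into the $\ell=0$ term of $s_{\bar{\funlatentmain}}$; everything else is a direct application of standard i.i.d.\ limit theorems.
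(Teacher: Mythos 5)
Your proposal is correct, and it is more self-contained than what the paper actually does. The paper proves this proposition by deferral: it is read off as the special case of Theorem \ref{thm:consistency} and Proposition \ref{prop:clt} in which the parameter space is a singleton and the base ``chain'' degenerates to i.i.d.\ sampling (equivalently, a multilevel version of Proposition 23 of \cite{vihola-helske-franks}). You instead run the elementary i.i.d.\ argument directly: strong law plus continuous mapping for consistency, then the linearisation $\zeta_k(\funlatentmain)-p^{(\infty)}(\funlatentmain)\zeta_k(1)=\zeta_k(\bar{\funlatentmain})$, the Lindeberg--L\'evy CLT for the centred numerator (using $\E[\zeta_k(\bar{\funlatentmain})]=\gammaps_\nrt^{(\infty)}(\gss_\nrt^{(\infty)}\bar{\funlatentmain})=0$), and Slutsky. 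This is exactly the skeleton hidden inside the cited general results, so nothing is lost; what your version buys is that it makes visible where each hypothesis enters ($s_1,s_{\funlatentmain}<\infty$ give square-integrability of $\zeta_k(\bar{\funlatentmain})$ via the $C_2$ inequality, and positivity of $\gammaps^{(\infty)}(\gss_\nrt^{(\infty)})$ licenses the ratio limit), whereas the paper's route buys brevity and consistency with the general MCMC case at the cost of having to check that the degenerate specialisation really applies.

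The one step you flag as needing care is indeed the only genuine wrinkle, and you should resolve it rather than leave it as ``bookkeeping.'' Writing $\zeta(\bar{\funlatentmain})=A+B$ with $A$ the level-$0$ PF term and $B=p_L^{-1}\Delta_L(\bar{\funlatentmain})$, independence gives $\var(\zeta(\bar{\funlatentmain}))=\var(A)+\var(B)$, whereas combining the \emph{second moment} of $A$ with $\var(B)$, as your final step does, yields $\E[A^2]+\var(B)$; these differ by $(\E[A])^2=\big(\gammaps^{(0)}(\gss_\nrt^{(0)}\bar{\funlatentmain})\big)^2$, which equals $(\E[B])^2$ since $\E[A]+\E[B]=0$. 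So you must fix the convention for the $\ell=0$ slot of \eqref{eq:variance} (second moment of the level-$0$ PF estimator with $p_0=1$ is the reading under which Assumption \ref{a:abstract-pf} also delivers $\E[A^2]<\infty$) and then track carefully how many copies of $\big(\gammaps^{(\infty)}(\gss_\nrt^{(\infty)}\bar{\funlatentmain})-\gammaps^{(0)}(\gss_\nrt^{(0)}\bar{\funlatentmain})\big)^2$ are subtracted in the numerator of $\sigma^2$. The paper sidesteps this by citation, so your direct derivation is precisely where that convention has to be pinned down.
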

The above result follows directly from the results of Section \ref{sec:method}.  It can also be seen as a multilevel version of Proposition 23 of \cite{vihola-helske-franks}, with straightforward estimators for $\sigma^2$. See Section \ref{sec:efficiency} for suggested choices for $\mathbf{p}$ and number of particles run at each level.

\section{A variance bound for the delta particle filter}
\label{sec:bound}
In this section we give theoretical results for the $\Delta$PF (Algorithm \ref{alg:delta-pf}) in the setting of HMM diffusions, which can be used to verify finite variance and therefore consistency of related estimators. In particular, Corollary \ref{theo:maincor} below can be used to verify Assumption \ref{a:abstract-pf}.
\subsection{Hidden Markov model diffusions}
We consider an HMM diffusion and corresponding Feynman-Kac model as in Section \ref{sec:delta-pf}.
We omit $\theta$ from the notation in the following, which is allowed as the remaining conditions and results in this Section \ref{sec:bound} will hold uniformly in $\thetaone$ (i.e.~any constants are independent of $\thetaone$).  The following will be assumed throughout.
\setcounter{assumptionx}{3}
\begin{assumptionx}\label{a:diffusion}
The coefficients $a^j, b^{j,k}$ are twice differentiable for $j,k= 1,\ldots, d$, and
\begin{quote}
\begin{itemize}
\item[(i)] {\bf uniform ellipticity}: $b(x)b(x)^T$ is uniformly positive definite;
\item[(ii)] {\bf globally Lipschitz}:
there is a $C>0$ such that 
$|a(x)-a(y)|+|b(x)-b(y)| \leq C |x-y|$ 
for all $x,y \in \mathbb{R}^d$; 
\end{itemize}
\end{quote}
\end{assumptionx}
Let $\msstrue(x, \ud y) \eqdef \msstrue_p(x,\ud y)$ for $p=0{:}\nrt$ denote the Markov transition of the unobserved diffusion \ref{eq:sde}, i.e.~the distribution of the solution $X_1$ of \ref{eq:sde} started at $X_0=x$.  With similar setup from Section \ref{sec:delta-pf}, with $\sss_{0:n}\defeq\mathsf{X}^{n+1}$, we have that \ref{eq:flow} takes the form
$$
\gammasstrue_n(\funlatentmain)
=
\int
\funlatentmain(x_{0:\nrt})
\Big(\prod_{p=0}^{\nrt-1} G_p(x_{p})\Big)
\eta_0(\ud x_0)
\prod_{p=1}^\nrt
\msstrue(x_{p-1}, \ud x_p). 
$$
In practice one usually must approximate the true dynamics $\msstrue(x,\ud y)$ of the underlying diffusion with a simpler transition $M^{(\ell)}(x,\ud y)$, based on some Euler type scheme using a discretisation parameter $h_{\ell}= 2^{-\ell}$ for $\ell\ge 0$; see \cite{kloeden-platen}.
The scheme allows for a coupling of the diffusions $(X_t^{(\ell)}, X_t^{(\ell-1)})_{t\ge 0}$ running at discretisation levels $\ell$ and $\ell-1$ (based on using the same Brownian path $W_t$),
such that for some $\beta\in\{1,2\}$, we have
\begin{equation}\label{eq:coup_h_cont}
\E_{(x,y)}[|X_1^{(\ell)}-X_1^{(\ell-1)}|^2] \leq M(|x-y|^2 + h_{\ell}^{\beta}),
\end{equation}
where $M<\infty$ does not depend on $\ell\ge 1$.
In particular, if the diffusion coefficient $b(X_t)$ in \ref{eq:sde} is constant or if a Milstein scheme can be applied otherwise, then $\beta=2$; otherwise $\beta=1$; see Proposition D.1 of \cite{mlpf}.

\subsection{Variance bound}
Assume we are in the above HMM diffusion setting, and that the coupling of Assumption \ref{a:coupled-fk} holds,
with symbols $(F,C)$ equal to $(\ell,\ell-1)$ for $\ell\ge 1$, and $G_p^{(\ell)}=G_p^{(\ell-1)}\defeq G_p$ for $p=0{:}\nrt$.
Running Algorithm \ref{alg:delta-pf}, we recall that $\Delta_\ell(\funlatentmain)$, defined in \ref{eq:Delta}, satisfies, by Proposition \ref{prop:delta-pf},
$$
\E[\Delta_\ell(\funlatentmain)]
=
\gammass_n^{(\ell)}(G_n \funlatentmain) - \gammass_n^{(\ell-1)}(G_n \funlatentmain),
$$
regardless of the number $\nrp\ge 1$ of particles.

A (measurable) function $\funlatentmain:\mathsf{X}\to \R$ is Lipschitz, denoted $\funlatentmain \in\textrm{Lip}(\mathsf{X})$, if for some $C'<\infty$, $|\funlatentmain(x) - \funlatentmain(y)| \le C' |x-y|$ for all $x,y\in\mathsf{X}$.    
\setcounter{assumptionx}{0} 
\begin{assumptionx}\label{a:fk}
  The following conditions hold for the model $(\mss_n,\gss_n)$:
    \begin{quote}
  \begin{hypA}\label{hyp:1}
\begin{enumerate}
\item[(i)]{$\|G_n\|<\infty$ for each $n\geq 0$.}
\item[(ii)]{$G_n\in\textrm{Lip}(\mathsf{X})$ for each $n\geq 0$.}
\item[(iii)]{$\inf_{x\in\mathsf{X}}G_n(x)>0$ for each $n\geq 0$.}
\end{enumerate}
\end{hypA}
  \begin{hypA}\label{hyp:2}
    For every $n\geq 1$, $\varphi\in \textrm{Lip}(\mathsf{X})\cap\mathcal{B}_b(\mathsf{X})$
    there exist a $C'<\infty$ such that for $s\in\{F,C\}$, we have for every $(x,y)\in\mathsf{X}\times\mathsf{X}$ that
    $
    |M_n^s(\varphi)(x)-M_n^s(\varphi)(y)|\leq C'|x-y|.
    $
\end{hypA}
\end{quote}
\end{assumptionx}
In the following results for $\Delta_\ell(\funlatentmain)$, the constant $M<\infty$ may change
from line-to-line. It will not depend upon $\nrp$ or $\ell$ (or $\theta$), but may depend on the time-horizon $\nrt$ or the function $\funlatentmain$. 
$\mathbb{E}$ denotes expectation w.r.t.~the law associated to the $\Delta$PF started at $(\xss,\xss)$, with $\xss\in \mathsf{X}$.
Below we only consider multinomial resampling in the $\Delta$PF for simplicity, though Theorem \ref{theo:mainthm} and Corollary \ref{theo:maincor} can be proved also assuming other resampling schemes. 

\begin{theorem}\label{theo:mainthm}
Assume (A\ref{hyp:1}-\ref{hyp:2}).  Then for any $\varphi\in\mathcal{B}_b(\mathsf{X}^{\nrt+1})\cap\textrm{\emph{Lip}}(\mathsf{X}^{\nrt+1})$, 
there exists a $M<\infty$ such that
$$
\mathbb{E}\Big[\Big(\Delta_\ell(\varphi) - \mathbb{E}[\Delta_\ell(\varphi)]\Big)^2\Big]
\leq \frac{Mh_{\ell}^{2\wedge\beta}}{\nrp},
\qquad
\text{with $\beta$ as in \ref{eq:coup_h_cont}}.
$$
\end{theorem}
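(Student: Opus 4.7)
The first move is to recognise $\Delta_\ell(\varphi)$ as a single PF estimator. By construction in Algorithm \ref{alg:delta-pf}, the two signed halves of the output collapse to
$$
\Delta_\ell(\varphi)=\sum_{i=1}^{\nrp}\check V^{(i)}\Phi_\ell(\xhmmc^{(i)}),
\qquad
\Phi_\ell(\xssc_{0:\nrt}):=w^F(\xssc_{0:\nrt})\varphi(\xssc_{0:\nrt}^F)-w^C(\xssc_{0:\nrt})\varphi(\xssc_{0:\nrt}^C),
$$
where $(\check V^{(i)},\xhmmc^{(i)})$ is the output of Algorithm \ref{alg:pf} run on the coupled model $(\mssc,\gssc)$. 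Thus $\Delta_\ell(\varphi)$ is a standard PF estimator of $\check\gamma_\nrt(\gssc_\nrt \Phi_\ell)$, and the task reduces to a classical PF variance bound for a test function whose magnitude is small whenever the two paths are well coupled.

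Next I would control $\Phi_\ell$ pointwise. Using the algebraic identity
$$
\Phi_\ell(\xssc_{0:\nrt})=w^F(\xssc_{0:\nrt})\bigl[\varphi(\xssc_{0:\nrt}^F)-\varphi(\xssc_{0:\nrt}^C)\bigr]+\bigl[w^F(\xssc_{0:\nrt})-w^C(\xssc_{0:\nrt})\bigr]\varphi(\xssc_{0:\nrt}^C),
$$
Remark~\ref{delta-pf-remark}(ii) bounds $w^F,w^C\le 2^{\nrt+1}$; a telescoping expansion of the products in $w^F-w^C$ together with the Lipschitz continuity and strict-positivity of $G_p$ in (A\ref{hyp:1})(ii)--(iii) gives $|w^F-w^C|(\xssc_{0:\nrt})\le M\sum_{t=0}^{\nrt}|x_t^F-x_t^C|$; and Lipschitz-ness of $\varphi$ supplies the remaining factor. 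Combined, this yields $\Phi_\ell(\xssc_{0:\nrt})^2\le M\sum_{t=0}^{\nrt}|x_t^F-x_t^C|^2$ with $M$ independent of $\ell$.

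Then I would apply a Del Moral--type variance decomposition for the PF estimator of $\check\gamma_\nrt(\gssc_\nrt\Phi_\ell)$, giving a bound of the form
$$
\var[\Delta_\ell(\varphi)]\le\frac{1}{\nrp}\sum_{t=0}^{\nrt}\check\gamma_t(\gssc_t^{2}\psi_t^{2}),
$$
where each $\psi_t$ is a Lipschitz function on the joint space obtained from $\Phi_\ell$ by applying the coupled Feynman--Kac semigroup from time $t$ to $\nrt$. Assumption (A\ref{hyp:2}) with (A\ref{hyp:1}) propagates Lipschitz continuity through this semigroup with constants uniform in $\nrt$ and in $\ell$, so Paragraph~2 implies $\psi_t(\xssc)^2\le M\sum_{s=t}^{\nrt}\E_{\xssc}|X_s^F-X_s^C|^2$. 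Iterating the one-step coupling \eqref{eq:coup_h_cont} across the $\nrt$ inter-observation windows by a Grönwall argument based on condition (C)(ii) gives $\E|X_s^F-X_s^C|^2\le M h_\ell^{\beta}$ uniformly in $s\le\nrt$ and $\ell\ge 1$. Putting everything together yields the claimed $M h_\ell^{2\wedge\beta}/\nrp$ bound.

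The hard part is the Lipschitz propagation of $\psi_t$ on the joint space with constants uniform in $\ell$: this is non-trivial because the coupled kernel $\mssc$ is a two-scale object and $\gssc_t$ depends on both paths, so the standard Lipschitz-propagation argument must be executed on $\mathsf{X}\times\mathsf{X}$ rather than $\mathsf{X}$. This is where (A\ref{hyp:2}) applied to both $\mss^F$ and $\mss^C$, and the strict positivity in (A\ref{hyp:1})(iii), needed to bound ratios $1/\gssc_t$, enter decisively. A secondary subtlety is that resampling in Algorithm \ref{alg:delta-pf} acts on whole pairs $(X^{F,(i)},X^{C,(i)})$, so the pathwise Brownian coupling is preserved -- without this, iterating \eqref{eq:coup_h_cont} across observation windows would fail.
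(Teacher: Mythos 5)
Your proposal is correct and follows essentially the same route as the paper's Appendix~\ref{app:bound}: the identification of $\Delta_\ell(\varphi)$ with the coupled-PF estimator of $\pmb{\check\gamma}_\nrt(\psi)$, the martingale/Del Moral variance decomposition (Proposition~\ref{prop:gen_case}), the telescoping bound $|w^F-w^C|\le M\sum_p|x_p^F-x_p^C|$ (Lemma~\ref{lem:res4}), the Lipschitz propagation through the semigroup (Lemmas~\ref{lem:res1} and~\ref{lem:res2}), and the expected coupling-distance bound (Lemma~\ref{lem:res3}). The only places where your sketch is thinner than the paper are immaterial or easily completed: the paper bounds the semigroup cross-term (its $T_4$) by $h_\ell$ via the weak-rate estimate \eqref{eq:markov_cont} whereas you absorb it into the strong coupling at rate $h_\ell^{\beta/2}$, which gives the same $h_\ell^{2\wedge\beta}$ after squaring since $\beta\le 2$; and your bound $\E|X_s^F-X_s^C|^2\le Mh_\ell^\beta$ must be established under the \emph{resampled} particle law rather than the plain coupled diffusion --- beyond your correct observation that pairs are resampled together, one also needs the selection-weight bound $N\check G_{n-1}/\sum_j\check G_{n-1}\le \|G_{n-1}\|/\inf G_{n-1}$ from (A\ref{hyp:1})(i),(iii), which is exactly the induction carried out in Lemma~\ref{lem:res3}.
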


\begin{corollary}\label{theo:maincor}
  Assume (A\ref{hyp:1}-\ref{hyp:2}).  Then for any $\varphi\in\mathcal{B}_b(\mathsf{X}^{\nrt+1})\cap\textrm{\emph{Lip}}(\mathsf{X}^{\nrt+1})$,
there exists a $M<\infty$ such that
$$
\mathbb{E}\Big[\Big(\Delta_\ell(\varphi)\Big)^2\Big]
\leq M\Big(\frac{h_{\ell}^{2\wedge\beta}}{\nrp} +  h_{\ell}^2 \Big),
\qquad\qquad
\text{with $\beta$ as in \ref{eq:coup_h_cont}}.
$$
\end{corollary}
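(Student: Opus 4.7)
The natural plan is to split the second moment into variance plus squared mean,
$$
\mathbb{E}\big[\Delta_\ell(\varphi)^2\big] = \mathrm{var}\big(\Delta_\ell(\varphi)\big) + \big(\mathbb{E}[\Delta_\ell(\varphi)]\big)^2.
$$
Theorem \ref{theo:mainthm} already controls the variance by $Mh_\ell^{2\wedge\beta}/\nrp$, so the entire task reduces to showing that the squared bias is of order $h_\ell^2$. By Proposition \ref{prop:delta-pf}, the mean equals $\gammass_\nrt^{(\ell)}(G_\nrt \varphi) - \gammass_\nrt^{(\ell-1)}(G_\nrt \varphi)$, and hence it suffices to establish an $O(h_\ell)$ weak-error bound between two consecutive Euler-type HMM smoothers applied to the bounded Lipschitz test function $\varphi$.

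The strategy I would follow is a standard hybrid/telescoping decomposition over the $\nrt$ observation times. Introduce the unnormalised Feynman--Kac operators $Q_{p,\nrt}^{(\ell)}$ defined by $Q_{\nrt,\nrt}^{(\ell)}\varphi \defeq G_\nrt \varphi$ and $Q_{p,\nrt}^{(\ell)}\varphi(x) \defeq G_p(x) \mss_{p+1}^{(\ell)}(Q_{p+1,\nrt}^{(\ell)}\varphi)(x)$, so that $\gammass_\nrt^{(\ell)}(G_\nrt\varphi) = \eta_0(Q_{0,\nrt}^{(\ell)}\varphi)$. A hybrid decomposition expresses $Q_{0,\nrt}^{(\ell)}\varphi - Q_{0,\nrt}^{(\ell-1)}\varphi$ as a sum over $p$ of terms in which the kernel difference $\mss_{p+1}^{(\ell)} - \mss_{p+1}^{(\ell-1)}$ appears sandwiched between operators at the two levels. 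Each such sandwich is controlled by the one-step weak Euler error $|\mss_{p+1}^{(\ell)}\phi(x) - \mss_{p+1}^{(\ell-1)}\phi(x)| \leq C h_\ell (1+|x|^q)$, which under (D) holds for any bounded Lipschitz $\phi$ via the triangle inequality through the true transition $\msstrue$.

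For this to yield the desired $O(h_\ell)$ bound, the inner test function $\phi = Q_{p+1,\nrt}^{(\ell)}\varphi$ must itself be bounded and Lipschitz with constants independent of $\ell$. This is the main technical point and is supplied by (A\ref{hyp:1})--(A\ref{hyp:2}): boundedness propagates via $\|G_p\|<\infty$ and the Markov property of $\mss^{(\ell)}_{p+1}$, while the Lipschitz constant is preserved by (A\ref{hyp:2}) together with the Lipschitz/boundedness of $G_p$ from (A\ref{hyp:1}). Summing the $\nrt$ sandwich terms, each bounded by $Mh_\ell$ after integrating the polynomial factor $1+|x|^q$ against the initial measure and the intermediate Euler semigroups (moments stay finite under the globally Lipschitz (D)(ii) and the initial moment bound (D)(iii)), gives $|\mathbb{E}[\Delta_\ell(\varphi)]| \leq Mh_\ell$. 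Squaring and combining with Theorem \ref{theo:mainthm} then yields the claim.
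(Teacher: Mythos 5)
Your proposal is correct and follows essentially the same route as the paper: the paper bounds $\mathbb{E}[\Delta_\ell(\varphi)^2]$ by adding and subtracting the mean (via the $C_2$ inequality rather than your exact variance identity, an immaterial difference), invokes Theorem \ref{theo:mainthm} for the centred term, and controls the squared mean by Proposition \ref{prop:h_smooth}, whose proof is precisely the collapsing-sum/telescoping weak-error argument with Lipschitz propagation (Lemma \ref{lem:res1}) that you sketch inline. No gaps.
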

The proofs are given in Appendix \ref{app:bound}.

Based on Corollary \ref{theo:maincor}, Recommendation \ref{rec:allocations} of Section \ref{sec:efficiency} suggests allocations for $\mathbf{p}$ and $\nrp_\ell$ in the $\Delta$PF (Algorithm \ref{alg:delta-pf}) to optimally use resources and minimise variance \ref{eq:variance}.

\section{Unbiased joint inference for hidden Markov model diffusions}
\label{sec:method} 

We are interested in unbiased inference for the Bayesian model posterior
$$
    \pi^{(\infty)}(\ud \theta, \ud x_{0:\nrt})
    \propto \mathrm{pr}(\ud \theta) G_{\nrt}^{(\theta)}(x_{\nrt}) \gamma_{\nrt}^{(\theta,\infty)} (\ud x_{0:\nrt}),
$$
    where $\mathrm{pr}(\ud \theta) = \mathrm{pr}(\theta) \ud \theta$ is the prior on the model parameters, and 
    $$
    \gamma_{\nrt}^{(\theta,\infty)}(\ud x_{0:\nrt}) =
    \Big(\prod_{t=0}^{\nrt-1} G_t^{(\theta)}(x_{t})
    \Big)
    \eta_0^{(\theta)}(\ud x_0)
    \prod_{t=1}^\nrt
    M_t^{(\theta,\infty)}(x_{t-1},\ud x_t).
    $$    
Here, $M^{(\theta,\infty)}_t$
corresponds to the transition density of the
diffusion model of interest.
The dependence of the HMM on $\theta$ is made explicit in this section.  As in Section \ref{sec:bound}, we assume the transition densities 
$M_t^{(\theta,\infty)}$ cannot be simulated, 
but that there are increasingly refined discretisations $M_t^{(\theta,\ell)}$ approximating $M_t^{(\theta,\infty)}$ in the sense of \ref{eq:refined} (with $E_{0:\nrt}\defeq \mathsf{X}^{\nrt+1}$).  

\subsection{Randomised MLMC IS type estimator based on coarse-model PMMH}
We now consider Algorithm \ref{alg:is-mlmc} for joint inference w.r.t.~the above Bayesian posterior.  Algorithm \ref{alg:is-mlmc} uses the following ingredients:
\begin{enumerate}[(i)]
    \item 
$\check{\mss}^{(\theta,\ell)}_{0:\nrt}$ satisfying Assumption
\ref{a:coupled-fk}\ref{a:coupled-fk-kernel} with $M_{0:\nrt}^F = M_{0:\nrt}^{(\theta,\ell)}$, 
and $M_{0:\nrt}^C = M_{0:\nrt}^{(\theta,\ell-1)}$.
\item 
$\check{G}^{(\theta)}_{0:\nrt}$ defined as in Assumption 
\ref{a:coupled-fk}\ref{a:coupled-fk-potential}, with $G_{0:\nrt}^F = G_{0:\nrt}^C =
G_{0:\nrt}^{(\theta)}$.
\item Metropolis-Hastings proposal distribution $q(\uarg\mid \theta)$ 
  for parameters.
\item Algorithm constant $\epsilon\ge 0$ (e.g.~$\epsilon=10^{-10}$; see Remark \ref{alg:is-mlmc-remark}\ref{alg:is-mlmc-remark-pmmh} below).
\item Number of MCMC iterations $\nri\in\N$ and number of particles $\nrp\in\N$.
\item Probability mass $\mathbf{p} = (p_\ell)_{\ell\in\N}$ on $\N$ with $p_\ell>0$ for all
  $\ell\in\N$.
\end{enumerate}


\begin{algorithm}[t]
\caption{Randomised multilevel importance sampling type estimator.}
\label{alg:is-mlmc} 
\raggedright
\textbf{Input:} $\Big((M_{0:\nrt}^{(\theta,\ell)}, G_{0:\nrt}^{(\theta)})\Big)_{(l,\theta)\in(\{0\}\times\mathbb{N})\times \mathsf{T}}$, prior $\mathrm{pr}(\ud \theta)$, $\nrp$ the number of particles, $\nri$ the number of iterations, $q(\uarg\mid \theta)$ a proposal density for Metropolis-Hastings, 
 $\epsilon\geq 0$, probability $\mathbf{p}=(p_\ell)_{\ell\in\N}$ and $(\Theta_0,V_0^{(1:\nrp)},\mathbf{X}_0^{(1:\nrp)})$ such
  that $\sum_{i=1}^{\nrp} V_0^{(i)}>0$.
\begin{enumerate}[({P}1)]
\item 
  \label{item:approx-phase}
  For $k=1{:}\nri$, iterate:
  \begin{enumerate}[(i)]
      \item Propose $\hat{\Theta}_k\sim q(\uarg \mid \Theta_{k-1})$.
      \item Run Algorithm \ref{alg:pf} with
        $(M_{0:\nrt}^{(\hat{\Theta}_k,0)},G_{0:\nrt}^{(\hat{\Theta}_k)},\nrp)$ and call the output
        $(\hat{V}_k^{(1:\nrp)},
        \hat{\mathbf{X}}_k^{(1:\nrp)})$.
      \item With probability
        \[
            \min\bigg\{1, \frac{\mathrm{pr}(\hat{\Theta}_k)
                q(\Theta_{k-1}\mid \hat{\Theta}_k) 
                \big(\sum_{i=1}^{\nrp}\hat{V}_k^{(i)} + \epsilon\big)}{
                \mathrm{pr}(\Theta_{k-1})q(\hat{\Theta}_k\mid
                \Theta_{k-1})
                \big(\sum_{j=1}^{\nrp} V_{k-1}^{(j)} + \epsilon\big)}
              \bigg\},
        \]
        accept and set 
        $(\Theta_k,V_k^{(1:\nrp)},\mathbf{X}_k^{(1:\nrp)}) \gets
        (\hat{\Theta}_k,\hat{V}_k^{(1:\nrp)}, \hat{\mathbf{X}}_k^{(1:\nrp)})$; otherwise
        set $(\Theta_k,V_k^{(1:\nrp)},\mathbf{X}_k^{(1:\nrp)}) \gets
        (\Theta_{k-1},V_{k-1}^{(1:\nrp)},\mathbf{X}_{k-1}^{(1:\nrp)})$.
  \end{enumerate}
\item 
  \label{item:corr-phase}
  For every $k\in \{1{:}\nri\}$, independently, conditional on 
  $(\Theta_k,V_k^{(1:\nrp)},\mathbf{X}_k^{(1:\nrp)})$:
  \begin{enumerate}[(i)]
  \item Set
    $\mathbf{X}^{(1:\nrp)}_{k,0} \defeq
    \mathbf{X}^{(1:\nrp)}_{k}$,
    and set 
        $W_{k,0}^{(i)} \defeq
   V_{k}^{(i)} / \big(\sum_{j=1}^{\nrp} V_k^{(j)}+\epsilon\big)$.
      \item Sample $L_k \sim \mathbf{p}$ independently from the other random variables.
      \item Run the $\Delta$PF (Algorithm \ref{alg:delta-pf}) with 
        $(\check{\mss}_{0:\nrt}^{(\Theta_k,L_k)}$,
        $\check{G}_{0:\nrt}^{(\Theta_k)}, \nrp)$, and
        call the output
        $(V_{k,L_k}^{(1:2\nrp)}, \mathbf{X}_{k,L_k}^{(1:2\nrp)})$.
        Set $W_{k,L_k}^{(i)} \defeq
        V_{k,L_k}^{(i)}/\big[p_{L_k}\big(\sum_{j=1}^{\nrp}
          V_k^{(j)}+\epsilon\big)\big]$.
  \end{enumerate}
   \end{enumerate}
  \raggedright
  \textbf{Output:} 
$$
    E_{\nri,\nrp,\mathbf{p}}(\funfullmain) \defeq \frac{\sum_{k=1}^\nri \big[
      \sum_{i=1}^{\nrp}
      W_{k,0}^{(i)}\funfullmain(\Theta_k, \mathbf{X}_{k,0}^{(i)}) 
      + \sum_{i=1}^{2\nrp}
      W_{k,L_k}^{(i)} \funfullmain(\Theta_k, \mathbf{X}_{k,L_k}^{(i)})
      \big]}{\sum_{k=1}^\nri \big[\sum_{i=1}^{\nrp}
      W_{k,0}^{(i)} +
      \sum_{i=1}^{2\nrp} W_{k,L_k}^{(i)}\big]}.
    $$
\end{algorithm} 

\begin{remark}\label{alg:is-mlmc-remark}
Before stating consistency and central limit theorems, we briefly
discuss various aspects of this approach, which are appealing from
a practical perspective, and we also mention certain algorithmic
modifications which could be further considered.
  \begin{quote}
\begin{enumerate}[(i)]
\item \label{alg:is-mlmc-remark-pmmh}
  The first phase (P\ref{item:approx-phase}) of Algorithm 
\ref{alg:is-mlmc} implements a PMMH type algorithm
\cite{andrieu-doucet-holenstein}. If $\epsilon=0$, this is exactly PMMH
targeting the model
$\pi^{(0)}(\ud \theta, \ud x_{0:\nrt})\propto \mathrm{pr}(\ud \theta) G_\nrt^{(\theta)}(x_\nrt) \gamma_\nrt^{(\theta,0)}(\ud x_{0:\nrt})$. It is generally safer to choose
$\epsilon>0$ \cite{vihola-helske-franks}, which ensures that the IS type
correction 
in phase (P\ref{item:corr-phase}) will yield consistent inference for the ideal model 
$$\pi^{(\infty)}(\ud \theta, \ud x_{0:\nrt})\propto \mathrm{pr}(\ud \theta) G_\nrt^{(\theta)}(x_\nrt)
\gamma_\nrt^{(\theta,\infty)}(x_{0:\nrt})$$ (Theorem \ref{thm:consistency}). Setting $\epsilon>0$ may be helpful otherwise in terms of improved mixing, as the PMMH will target marginally an averaged probability between a `flat' prior and a `multimodal' $\ell=0$ marginal posterior.
\item It is only necessary to implement PMMH for the coarsest level.
  This is typically relatively cheap, and therefore allows for 
  a relatively long MCMC run. Consequently, relative cost of burn-in is
  small, and if the proposal $q$ is adapted (see \cite{andrieu-thoms}), it has time to converge.
\item The (potentially costly) r$\Delta$PFs 
  are applied independently for each $\Theta_k$, which allows for 
  efficient parallelisation.
\item We suggest that the number of particles $N$, here referred to as `$\nrp_0$', used in the PMMH be chosen based on \cite{doucet-pitt-deligiannidis-kohn,sherlock-thiery-roberts-rosenthal}, while the number of particles `$N_{\ell}$' (and $p_\ell$) can be optimised for each level $\ell$ based on Recommendation \ref{rec:allocations} of Section \ref{sec:efficiency}, or kept constant.  One can also afford to increase the number of particles when a `jump chain' representation is used (see the following remark).
\item The r$\Delta$PF corrections may be calculated only once for each
  accepted state \cite{vihola-helske-franks}. That is, 
  suppose $(\tilde{\Theta}_k, \tilde{V}_k^{(1:\nrp)}, 
  \tilde{\mathbf{X}}_k^{(1:\nrp)})_{k=1}^{\nrijump}$ are the accepted
  states, $(D_k)_{k=1}^{\nrijump}$ are the corresponding holding
  times, and $(\tilde{V}_{k,L_k}^{(1:2\nrp_{L_k)}}, 
  \tilde{\mathbf{X}}_{k,L_k}^{(1:2\nrp_{L_k})})_{k=1}^{\nrijump}$ are corresponding $\Delta$PF outputs, then the estimator is formed as in Algorithm \ref{alg:is-mlmc} using $(\tilde{\Theta}_k, \tilde{V}_k^{(1:\nrp)}, 
  \tilde{\mathbf{X}}_k^{(1:\nrp)})$, and accounting for the holding times
  in the weights defined as
  $W_{k}^{(i)} \defeq
   D_k \tilde{V}_{k}^{(i)} / \big(\sum_{j=1}^{\nrp} \tilde{V}_k^{(j)}+\epsilon\big)$
   and $W_{k,L_k}^{(i)} \defeq
        \tilde{V}_{k,L_k}^{(i)}/\big[p_{L_k}\big(\sum_{j=1}^{\nrp}
          \tilde{V}_k^{(j)}+\epsilon\big)\big]$.
\item In case the Markov chain in (P\ref{item:approx-phase}) phase is slow
  mixing, (further) thinning may be applied (to the jump chain) before the (P\ref{item:corr-phase}) phase.
\item In practice, Algorithm \ref{alg:is-mlmc} may be implemented in
  an on-line fashion w.r.t.~the number of
  iterations $\nri$, and by progressively refining the estimator
  $E_{\nri,\nrp,\mathbf{p}}(\funfullmain)$. The r$\Delta$PF corrections
  may be calculated in parallel with the Markov chain.
\item In Algorithm \ref{alg:is-mlmc}, the r$\Delta$PFs
  depend only on $\Theta_k$. They could depend also on 
  $V_{k}^{(i)}$ and $\mathbf{X}_k^{(i)}$, but it is not clear how such
  dependence could be used in practice to achieve better performance.
  Likewise, the `zeroth level' estimate in Algorithm \ref{alg:is-mlmc}
  is based solely on particles in (P\ref{item:approx-phase}), but it
  could also be based on (additional) new particle filter output.
\item In order to save memory, it is possible also to `subsample' only one 
  trajectory $\mathbf{X}_{k}^*$ from $\mathbf{X}_{k}^{(1:\nrp)}$, such that
  $\P[\mathbf{X}_{k}^* = \mathbf{X}_{k}^{(i)}]\propto V_k^{(i)}$, and set $W_{k,0}^* \defeq
  \sum_{i=1}^\nrp W_{k,0}^{(i)}$, and similarly in Algorithm
  \ref{alg:delta-pf} find $\check{\mathbf{X}}^*$ such that
  $\P[\check{\mathbf{X}}^* = \check{\mathbf{X}}^{(i)}]\propto \check{V}^{(i)}$, setting $\mathbf{X}_{k,L_k}^{*(1:2)}\defeq \check{\mathbf{X}}^*$, and defining from the usual output of Algorithm \ref{alg:delta-pf}, $W^{*(1)}_{k,L_k} \defeq \sum_{i=1}^\nrp W_{k,L_k}^{(i)}$ and $W^{*(2)}_{k,L_k} \defeq \sum_{i=\nrp+1}^{2\nrp} W_{k,L_k}^{(i)}$.  The subsampling output estimator then takes the form,
  \[
      E_{\nri,\nrp,\mathbf{p}}^{\text{subsample}}(\funfullmain)
      \defeq \frac{\sum_{k=1}^\nri \big[W_{k,0}^* \funfullmain(\Theta_k, \mathbf{X}_{k}^*) +
        \sum_{i=1}^2 W_{k,L_k}^{*(i)} \funfullmain(\Theta_k,
        \mathbf{X}_{k,L_k}^{*(i)}) \big]
        }{\sum_{k=1}^\nri \big[W_{k,0}^* + 
         \sum_{i=1}^2 W_{k,L_k}^{*(i)}\big] 
        }.
  \]
  Note, however, that the asymptotic variance of this estimator is higher,  because \\
  $E_{\nri,\nrp,\mathbf{p}}(\funfullmain)$ may be viewed as a Rao-Blackwellised version
  of $E_{\nri,\nrp,\mathbf{p}}^{\text{subsample}}(\funfullmain)$.
\end{enumerate}
\end{quote}
\end{remark}

\subsection{Consistency and central limit theorem}

\begin{theorem} 
    \label{thm:consistency} 
Assume that the algorithm constant $\epsilon\ge 0$ is chosen positive, and that the Markov chain 
$(\Theta_k,
X_k^{(1:\nrp)}, V_k^{(1:\nrp))})_{k\ge 1}$ is $\psi$-irreducible, and that
$\pi^{(0)}(\funfullmain)$ and $\pi^{(\infty)}(\funfullmain)$ are finite.
For each $\theta\in \stheta$, suppose Assumption
\ref{a:abstract-pf} holds for  
$\funfullhelp \equiv 1$ and $\funfullhelp= \funfullmain^{(\theta)} \defeq \funfullmain(\theta,\uarg)$, with
$\mss_{0:\nrt}^{(\ell)}\defeq \mss_{0:\nrt}^{(\theta,\ell)}$ and $\gss_{0:\nrt}^{(\ell)}\defeq \gss_{0:\nrt}^{(\theta)}$.
Assume
\[
    \int \mathrm{pr}(\theta) \big(\sqrt{s_{1}(\theta)} +
    \sqrt{\smash{s_{\funfullmain^{(\theta)}}(\theta)}\vphantom{()}}
    \big)\ud \theta  <\infty.
\]
Then, the estimator of Algorithm
\ref{alg:is-mlmc} is strongly consistent:
\[
    E_{\nri,\nrp,\mathbf{p}}(\funfullmain) \xrightarrow{\nri\to\infty}
    \int \pi^{(\infty)}(\ud \theta, \ud x_{0:\nrt}) \funfullmain(\theta, x_{0:\nrt})
    \qquad\text{(a.s.)}
\]
\end{theorem}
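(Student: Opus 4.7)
The plan is to treat Algorithm \ref{alg:is-mlmc} as a self-normalised importance sampling estimator driven by the PMMH base chain of phase (P\ref{item:approx-phase}). I would first apply the MCMC law of large numbers to the per-step contributions of the numerator and denominator in $E_{\nri,\nrp,\mathbf{p}}(\funfullmain)$, then identify the two limits using PF unbiasedness (Proposition \ref{prop:pf-unbiased}) together with the rMLMC identity underlying Lemma \ref{lem:debiased-pf}, and conclude by the continuous mapping theorem.

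More specifically, let $N_k$ and $D_k$ denote the $k$-th summands in the numerator and denominator of $E_{\nri,\nrp,\mathbf{p}}(\funfullmain)$. A standard PMMH calculation identifies the invariant law $\tilde\pi$ of $(\Theta_k, V_k^{(1:\nrp)}, \mathbf{X}_k^{(1:\nrp)})$ as having density proportional to $\mathrm{pr}(\theta)\bigl(\sum_j v^{(j)}+\epsilon\bigr) q^{(\theta,0)}(v,\mathbf{x})$, where $q^{(\theta,0)}$ is the joint law of the level-$0$ PF output; combined with $\psi$-irreducibility this yields positive Harris recurrence. Using Proposition \ref{prop:delta-pf} and the randomisation identity from the proof of Lemma \ref{lem:debiased-pf}, the conditional expectation of the r$\Delta$PF contribution to $N_k$ given $(\Theta_k, V_k^{(1:\nrp)}, \mathbf{X}_k^{(1:\nrp)})$ equals
\[
  \frac{\gamma_\nrt^{(\Theta_k,\infty)}(G_\nrt \funfullmain^{(\Theta_k)}) - \gamma_\nrt^{(\Theta_k,0)}(G_\nrt \funfullmain^{(\Theta_k)})}{\sum_j V_k^{(j)} + \epsilon}.
\]
Integrating against $\tilde\pi$, the factor $\sum_j V^{(j)}+\epsilon$ cancels exactly, and PF unbiasedness removes the level-$0$ term, giving $\E_{\tilde\pi}[N_k] = Z^{-1}\int \mathrm{pr}(\theta)\gamma_\nrt^{(\theta,\infty)}(G_\nrt \funfullmain^{(\theta)})\,\ud\theta$ with $Z = \int\mathrm{pr}(\theta)\bigl(\gamma_\nrt^{(\theta,0)}(G_\nrt)+\epsilon\bigr)\,\ud\theta$; the analogous computation for $D_k$ (taking $\funfullmain\equiv 1$) yields $\E_{\tilde\pi}[D_k]=Z^{-1}\int\mathrm{pr}(\theta)\gamma_\nrt^{(\theta,\infty)}(G_\nrt)\,\ud\theta$, so their ratio is $\pi^{(\infty)}(\funfullmain)$.

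The main obstacle is verifying $\tilde\pi$-integrability of $|N_k|$ and $|D_k|$ so that the MCMC ergodic theorem applies. The zeroth-level contributions are controlled by the assumed finiteness of $\pi^{(0)}(\funfullmain)$ (and $\pi^{(0)}(1)=1$) via the same cancellation. For the r$\Delta$PF pieces, a Cauchy--Schwarz bound over $\ell$ gives $\E\bigl|p_L^{-1}\Delta_L(\funfullhelp)\bigr| \le \sqrt{s_\funfullhelp}$ whenever $s_\funfullhelp<\infty$, and the cancellation of $\sum_j V^{(j)}+\epsilon$ together with Fubini reduces $\tilde\pi$-integrability precisely to the hypothesis $\int\mathrm{pr}(\theta)\bigl(\sqrt{s_1(\theta)}+\sqrt{s_{\funfullmain^{(\theta)}}(\theta)}\bigr)\,\ud\theta<\infty$. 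The ergodic theorem then yields $\nri^{-1}\sum_{k=1}^{\nri} N_k \to \E_{\tilde\pi}[N_0]$ and likewise for $D_k$ almost surely, and the continuous mapping theorem finishes. The entire argument hinges on the cancellation between the correction weights $W_{k,L_k}^{(i)}$ (which carry a factor $1/(\sum_j V^{(j)}+\epsilon)$) and the PMMH stationary density; this is precisely why the algorithm requires $\epsilon>0$ and why the natural rMLMC second-moment quantity $s_\funfullhelp$ enters the hypothesis only through its square root.
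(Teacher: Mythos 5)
Your proposal is correct and follows essentially the same route as the paper's proof: identify the PMMH invariant law $\propto \mathrm{pr}(\theta)\,\mcmclatent_\theta^{(0)}(\ud x,\ud v)\,(\sum_i v^{(i)}+\epsilon)$, obtain Harris recurrence from $\psi$-irreducibility, compute the conditional means of the per-step numerator and denominator contributions via the rMLMC/$\Delta$PF unbiasedness identities so that the $(\sum_i v^{(i)}+\epsilon)$ factor cancels, control integrability through the $\sqrt{s_{\funfullhelp}}$ bound, and conclude with the ergodic theorem and a ratio argument. The only cosmetic difference is that the paper packages the final law-of-large-numbers step as an appeal to \citep[Theorem 3]{vihola-helske-franks} rather than spelling it out.
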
 

\begin{remark} 
Regarding Theorem \ref{thm:consistency}, whose proof is given in Appendix \ref{app:method}:
  \begin{quote}
    \begin{enumerate}[(i)]

\item If all potentials $G_t$ are strictly positive, the
  algorithm constant $\epsilon$ may be taken to be zero. However, if
  $\epsilon=0$ and Algorithm
  \ref{alg:pf} with 
  $(M_{0:\nrt}^{(\hat{\Theta}_k,0)}, G_{0:\nrt}^{(\hat{\Theta}_k)},\nrp)$  
  can produce an estimate with $\sum_{i=1}^{\nrp} V^{(i)}=0$ with
  positive probability, the consistency may be lost
  \cite{vihola-helske-franks}.
\end{enumerate}
\end{quote}
\end{remark} 


\begin{proposition}\label{prop:clt} 
  Suppose that the conditions of Theorem \ref{thm:consistency} hold.  Suppose additionally that $\pi^{(\infty)}(\funfullmain^2) <\infty$ and that the base chain $(\Theta_k, V^{(1:\nrp)}_k,\mathbf{X}^{(1:\nrp)}_k)_{k\ge 1}$ is aperiodic, with transition probability denoted by $P$.  Then,
\[
    \sqrt{\nri}\big[ E_{\nri,\nrp,\mathbf{p}}(\funfullmain) - \pi^{(\infty)}(\funfullmain) \big]
    \xrightarrow{\nri\to\infty} \mathcal{N}(0,\sigma^2),
    \qquad \text{in distribution},
\]
whenever the asymptotic variance 
\begin{equation}\label{eq:asvar}
    \sigma^2 = \frac{\var(P, \mu_{\bar{\funfullmain}}) +
      \pimcmc(\sigma^2_\xi)}{c^2},
\end{equation}
is finite.  Here, $\bar{\funfullmain} \defeq \funfullmain - \pi^{(\infty)}(\funfullmain)$, $c>0$ is a constant (equal to $\pimcmc(\mu_1))$, and 
\begin{align*}
    \sigma^2_\xi(\theta, v^{(1:\nrp)}, \mathbf{x}^{(1:\nrp)}) &\defeq 
    \var\big( \xi_k(\bar{\funfullmain})\bigmid \Theta_k=\theta,V_k^{(1:\nrp)}=v^{(1:\nrp)},
    \mathbf{X}_k^{(1:\nrp)}=\mathbf{x}^{(1:\nrp)}\big) \\
   & = \frac{s_{\bar{\funfullmain}^{(\theta)}}(\theta) -
      \big(\gammass_\nrt^{(\theta,\infty)}(G_n \bar{\funfullmain}^{(\theta)} ) - 
\gammass_\nrt^{(\theta,0)}(G_n \bar{\funfullmain}^{(\theta)})  \big)^2
      }{\big(\sum_{i=1}^{\nrp} v^{(i)} + \epsilon\big)^2}.
\end{align*}
\end{proposition}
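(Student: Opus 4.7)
I start by exploiting linearity of the per-step contribution $\xi_k(\funfullmain) \defeq \sum_{i=1}^{\nrp} W_{k,0}^{(i)}\funfullmain(\Theta_k,\mathbf{X}_{k,0}^{(i)}) + \sum_{i=1}^{2\nrp} W_{k,L_k}^{(i)}\funfullmain(\Theta_k,\mathbf{X}_{k,L_k}^{(i)})$ in $\funfullmain$. Writing $\bar{\funfullmain} \defeq \funfullmain - \pi^{(\infty)}(\funfullmain)$, the usual ratio-difference manipulation gives
\[
\sqrt{\nri}\big[E_{\nri,\nrp,\mathbf{p}}(\funfullmain) - \pi^{(\infty)}(\funfullmain)\big]
= \frac{\nri^{-1/2}\sum_{k=1}^{\nri}\xi_k(\bar{\funfullmain})}{\nri^{-1}\sum_{k=1}^{\nri}\xi_k(1)}.
\]
Theorem \ref{thm:consistency} applied to the constant function shows the denominator converges a.s.\ to $c = \pimcmc(\mu_1) > 0$, so by Slutsky it suffices to establish a CLT for $\nri^{-1/2}\sum_{k=1}^{\nri}\xi_k(\bar{\funfullmain})$.

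For this, I condition on the base-chain state $Z_k \defeq (\Theta_k, V_k^{(1:\nrp)}, \mathbf{X}_k^{(1:\nrp)})$ and decompose $\xi_k(\bar{\funfullmain}) = \mu_{\bar{\funfullmain}}(Z_k) + \eta_k$, where $\mu_{\bar{\funfullmain}}(Z_k) \defeq \E[\xi_k(\bar{\funfullmain})\mid Z_k]$. Because the r$\Delta$PF and $L_k$ of phase (P\ref{item:corr-phase}) are drawn independently of $(V_k^{(1:\nrp)},\mathbf{X}_k^{(1:\nrp)})$ given $\Theta_k$, Lemma \ref{lem:debiased-pf} yields
\[
\mu_{\bar{\funfullmain}}(Z_k) = \frac{\sum_{i=1}^{\nrp} V_k^{(i)}\bar{\funfullmain}(\Theta_k,\mathbf{X}_k^{(i)}) + \gammass_\nrt^{(\Theta_k,\infty)}(G_\nrt \bar{\funfullmain}^{(\Theta_k)}) - \gammass_\nrt^{(\Theta_k,0)}(G_\nrt \bar{\funfullmain}^{(\Theta_k)})}{\sum_{j=1}^{\nrp} V_k^{(j)}+\epsilon}.
\]
Averaging against the PMMH-IS stationary law $\pimcmc$ and using Proposition \ref{prop:pf-unbiased} (i.e.\ that $\E_{0,\theta}[\sum_i V^{(i)}\funfullmain(\theta,\mathbf{X}^{(i)})]=\gammass_\nrt^{(\theta,0)}(G_\nrt\funfullmain^{(\theta)})$), the two terms in the numerator telescope and give $\pimcmc(\mu_{\bar{\funfullmain}})=0$ and $\pimcmc(\mu_1)=c$. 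For the conditional variance, the randomisation identity $\var(p_L^{-1}\Delta_L(\funfullhelp)) = s_{\funfullhelp} - (\gammass_\nrt^{(\infty)}(G_\nrt\funfullhelp) - \gammass_\nrt^{(0)}(G_\nrt\funfullhelp))^2$ applied to $\funfullhelp = \bar{\funfullmain}^{(\Theta_k)}$, combined with the fact that the zeroth-level particles are $Z_k$-measurable, produces exactly the stated expression for $\sigma^2_\xi(Z_k)=\var(\eta_k\mid Z_k)$.

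The ergodic piece $\nri^{-1/2}\sum_{k=1}^{\nri}\mu_{\bar{\funfullmain}}(Z_k)$ satisfies the standard Markov chain CLT with asymptotic variance $\var(P,\mu_{\bar{\funfullmain}})$, under aperiodicity and $\psi$-irreducibility of $P$ with stationary law $\pimcmc$ and $\pimcmc(\mu_{\bar{\funfullmain}}^2)<\infty$. The noise piece $\nri^{-1/2}\sum_{k=1}^{\nri}\eta_k$, conditional on the whole trajectory $(Z_k)_{k\ge 1}$, is a sum of independent zero-mean variables with conditional variances $\sigma^2_\xi(Z_k)$; a conditional Lindeberg CLT yields $\mathcal{N}(0,\pimcmc(\sigma^2_\xi))$, and the two limits combine additively because the r$\Delta$PF noise is conditionally independent across $k$ given $(Z_k)$. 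Adding and dividing by $c$ gives \eqref{eq:asvar}. The main obstacle is making the joint CLT rigorous; the cleanest route is to work with the natural filtration $\mathcal{F}_k = \sigma(Z_{1:k},L_{1:k},\text{r}\Delta\text{PF}_{1:k})$ and apply a single martingale CLT to $\eta_k + \hat{\mu}_{\bar{\funfullmain}}(Z_k) - P\hat{\mu}_{\bar{\funfullmain}}(Z_{k-1})$, where $\hat{\mu}_{\bar{\funfullmain}}$ solves the Poisson equation $(I-P)\hat{\mu}_{\bar{\funfullmain}}=\mu_{\bar{\funfullmain}}$; the required second-moment bounds on $\mu_{\bar{\funfullmain}}$ and $\eta_k$ are controlled respectively by $\pi^{(\infty)}(\funfullmain^2)<\infty$ and by the integrability hypothesis on $\sqrt{\smash{s_{\funfullmain^{(\theta)}}(\theta)}\vphantom{()}}$ inherited from Theorem \ref{thm:consistency}.
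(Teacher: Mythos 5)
Your proposal is correct in substance, but it takes a different route from the paper: the paper gives no self-contained argument at all, simply observing in the remark following the statement that the result ``follows from \citep[Theorem 7]{vihola-helske-franks}'', after the proof of Theorem \ref{thm:consistency} has already set up the required structure (the representation $E_{\nri,\nrp,\mathbf{p}}(\funfullmain) = \sum_k \xi_k(\funfullmain)/\sum_k \xi_k(1)$, the conditional means $\mu_\funfullhelp$, and the invariant law \eqref{eq:pmmh-invar}). What you have written is essentially a reconstruction of the proof of that cited theorem: the ratio-difference reduction plus Slutsky, the decomposition $\xi_k(\bar{\funfullmain}) = \mu_{\bar{\funfullmain}}(Z_k) + \eta_k$ into an ergodic part and a conditionally independent, conditionally centred noise part, the telescoping identity $\pimcmc(\mu_{\bar{\funfullmain}})=0$ via Proposition \ref{prop:pf-unbiased} and Lemma \ref{lem:debiased-pf}, and the randomisation variance identity yielding $\sigma_\xi^2$ --- all of this matches the mechanism behind the cited result, and your computation of $\sigma_\xi^2$ agrees exactly with the stated formula. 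The benefit of your route is that the reader sees where each of the two terms in \eqref{eq:asvar} comes from; the cost is that you must actually close the joint CLT, which the citation handles for free.

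One technical caution on your final rigorisation step: solving the Poisson equation $(I-P)\hat{\mu}_{\bar{\funfullmain}} = \mu_{\bar{\funfullmain}}$ is \emph{not} guaranteed by the stated hypotheses ($\psi$-irreducibility, aperiodicity, $\pimcmc(\mu_{\bar{\funfullmain}}^2)<\infty$ and $\var(P,\mu_{\bar{\funfullmain}})<\infty$); existence of an $L^2(\pimcmc)$ solution is a strictly stronger requirement in general. The fix is already available in the paper's setup: the proof of Theorem \ref{thm:consistency} establishes that the base chain is \emph{reversible} with respect to \eqref{eq:pmmh-invar}, so the ergodic part should be handled by the Kipnis--Varadhan CLT for reversible chains, which requires precisely $\var(P,\mu_{\bar{\funfullmain}})<\infty$ and nothing more; the conditionally independent noise part is then appended by your conditional Lindeberg argument (or by augmenting the chain with the r$\Delta$PF output and noting that the martingale increments decouple). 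With that substitution your argument is complete and consistent with what \citep[Theorem 7]{vihola-helske-franks} delivers.
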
 

\begin{remark}
Proposition \ref{prop:clt} follows from Theorem 7 \cite{vihola-helske-franks}.
  We suggest that $\nrp=\nrp_0$ for (P\ref{item:approx-phase}) be chosen based on
\cite{doucet-pitt-deligiannidis-kohn,sherlock-thiery-roberts-rosenthal} to minimise $\var(P, \mu_{\bar{\funfullmain}})$, and that $(p_\ell)$ and $\nrp=\nrp_\ell$ in (P\ref{item:corr-phase}) for the r$\Delta$PF be chosen as in Recommendation \ref{rec:allocations} of Section \ref{sec:efficiency}, to minimise $\sigma_\xi^2$, subject to cost constraints, in order to jointly minimise $\sigma^2$.  However, the question of the optimal choice for $N_0$ in the IS context is not yet settled.
\end{remark}
  
\begin{remark} Regarding $\psi$-irreducibility and aperiodicity in Theorem \ref{thm:consistency} and Proposition \ref{prop:clt}, these are inherited by the coarse PMMH chain \cite{andrieu-doucet-holenstein} if the corresponding idealised marginal Metropolis-Hastings for the coarse model has these properties.  
\end{remark}

\section{Asymptotic efficiency and randomised multilevel considerations} 
\label{sec:efficiency}
We summarise the results of this section by suggesting the following safe allocations for probability $\mathbf{p}=(p_\ell)_{\ell\in \N}$ and number $\nrp=\nrp_\ell$ of particles at level $\ell\ge 1$ in the $\Delta$PF (Algorithm \ref{alg:delta-pf}) used in Algorithm \ref{alg:debiased-pf} and Algorithm \ref{alg:is-mlmc}, and Proposition \ref{prop:independent-pf}, with $\beta$ given in \ref{eq:coup_h_cont} in the HMM diffusion context of Section \ref{sec:bound}, or, indeed, with $\beta$ given in the abstract framework under Assumption \ref{a:rates} given later. See also Figure \ref{fig:optimal} for the recommended allocations.
\begin{recommendation}\label{rec:allocations}
With strong error convergence rate  $\beta$ given in \ref{eq:coup_h_cont}, we suggest the following for $\mathbf{p}=(p_\ell)_{\ell \in \N}$ and $\nrp_\ell\in \N$ in $\Delta$PF (Algorithm \ref{alg:delta-pf}):
  \begin{quote}
  \begin{enumerate}[($\beta=\,$1)]
  \item (e.g. Euler scheme).
Choose $p_\ell = (\frac{1}{2})^{\ell}$ and $N_\ell \propto 1$ constant.
  \item (e.g. Milstein scheme).
    Choose $p_\ell \propto 2^{-1.5 \ell}\approx (\frac{1}{3})^\ell$ and $\nrp_\ell\propto 1$ constant. 
  \end{enumerate}
  \end{quote}
\end{recommendation}
The suggestions are based on Corollary \ref{theo:maincor} of Section \ref{sec:bound}, and 
Proposition \ref{prop:efficiency-both-finite} ($\beta=2$) and \ref{prop:subcanonical-optimised} ($\beta=1$) given below (with weak convergence rate $\alpha=1$; see Figure \ref{fig:optimal} for general $\alpha$).  In the Euler case, although the theory below gives the same computational complexity order by choosing any $\rho \in [0,1]$ and setting $p_\ell\propto 2^{-(1+\rho)\ell}$ and $N_\ell\propto 2^{\rho \ell}$, the experiment in Section \ref{sec:experiments} gave a better result using simply $\rho=0$, corresponding to no scaling.  However, this may depend on the implementation.

\subsection{Efficiency framework}
The asymptotic efficiency of simulation-based estimators has been considered theoretically in \cite{glynn-whitt}; see \cite{giles-or} in the dMLMC context.  The developments of this section follow \cite{rhee-glynn} for rMLMC (originally in the i.i.d.~setting without observations), while also giving some extensions (also applicable to that setting).  We will see that the basic rMLMC results carry over to our setting involving MCMC and randomised estimators based on PF outputs, but also discover a novelty in the common Euler case ($\beta=1$ in Figure \ref{fig:optimal}).  Proofs are given in Appendix \ref{app:efficiency}.

We are interested in modeling the computational costs involved in running Algorithm \ref{alg:is-mlmc}; the algorithm of Proposition \ref{prop:independent-pf} is recovered with $\stheta=\{\theta\}$.
Let $\tau_{\Theta_k,L_k}$ represent the combined \emph{cost at iteration $k$} of the base Markov chain and weight calculation in Algorithm \ref{alg:is-mlmc}, so that the
\emph{total cost} $\cost(\nrii)$ of Algorithm \ref{alg:is-mlmc} with $\nrii$ iterations is
  $$
  \cost(\nrii)\defeq
  \sum_{k=1}^\nrii  \tau_{\Theta_k,L_k}.
  $$
The following assumption seems natural in our setting.
\begin{assumption}\label{a:costs}
  For $\Theta_k\in \mathsf{T}$, a family $\{\tau_{\Theta_k,\ell} \}_{k, \ell \ge 1}$ consists of positive-valued random variables that are independent of $\{L_k\}_{k\ge 1}$, where $L_k\sim \mathbf{p}$ i.i.d., and that are conditionally independent given $\{\Theta_k\}_{k\ge 1}$, such that $\tau_{\Theta_k,\ell}$ depends only on $\Theta_k\in\mathsf{T}$ and $\ell\in \mathbb{N}$.  
\end{assumption}
  Under a budget constraint $\kappa>0$, the \emph{realised length of the chain} is $\lmcmc(\kappa)$ iterations, where
  \begin{equation*}
  \lmcmc(\kappa)
  \defeq
  \max\{
  \nrii\ge 1: \cost(\nrii) \le \kappa
  \}.
\end{equation*}
Under a budget constraint, the CLT of Proposition \ref{prop:clt} takes the following altered form, where here $\pimcmc_m(\ud \theta)$ denotes the $\theta$-marginal of the invariant probability measure (given as \ref{eq:pmmh-invar} in Appendix \ref{app:method})  of the base Markov chain (equal to the $\theta$-marginal posterior of the $\ell=0$ model).
\begin{proposition}\label{prop:is-efficiency}
  If the assumptions of Proposition \ref{prop:clt} hold with $\sigma^2<\infty$,
 and if $\mIRE\defeq \E_{\pimcmc_m\otimes \mathbf{p}}[\taufun]<\infty$ with $\taufun(\theta,\ell)\defeq \E[\tau_{\Theta_k,L_k}| \Theta_k =\theta, L_k =\ell]$,
  then
  \begin{equation}\label{eq:is-efficiency}
  \sqrt{\kappa}
  \big[ E_{\lmcmc(\kappa),\nrp,\mathbf{p}}(\funfullmain) - \pi^{(\infty)}(\funfullmain)\big]
    \xrightarrow{\kappa\to\infty}\mathcal{N}(0,\E[\taufun]\sigma^2),
    \qquad \text{in distribution}.
  \end{equation}
  \end{proposition}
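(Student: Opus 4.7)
The plan is the standard budget-constrained recipe: convert the iteration-indexed CLT of Proposition \ref{prop:clt} into a cost-indexed CLT via a renewal-type argument, then apply Slutsky.

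First, I would establish a law of large numbers for the cost process. Under Assumption \ref{a:costs}, conditional on $(\Theta_k)_{k\ge 1}$ the variables $\tau_{\Theta_k,L_k}$ are independent with conditional mean $\taufun(\Theta_k,L_k)$. Since the base chain $(\Theta_k, V_k^{(1:\nrp)}, \mathbf{X}_k^{(1:\nrp)})_{k\ge 1}$ is $\psi$-irreducible (hence Harris ergodic in the aperiodic setting), the marginal chain $(\Theta_k)$ is ergodic with invariant law $\pimcmc_m$, and $(L_k)$ is i.i.d.\ $\sim\mathbf{p}$ and independent of $(\Theta_k)$. By the ergodic theorem applied to the joint process $(\Theta_k,L_k)_{k\ge 1}$ together with an $L^1$-martingale argument to handle the extra randomness in $\tau_{\Theta_k,L_k}$ beyond $(\Theta_k,L_k)$, one obtains
\[
\frac{\cost(\nrii)}{\nrii}
= \frac{1}{\nrii}\sum_{k=1}^{\nrii}\tau_{\Theta_k,L_k}
\xrightarrow{\nrii\to\infty} \E_{\pimcmc_m\otimes\mathbf{p}}[\taufun] = \mIRE
\qquad\text{(a.s.)},
\]
using the hypothesis $\mIRE<\infty$.

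Second, I would invert this to control $\lmcmc(\kappa)$. By the definition of $\lmcmc(\kappa)$ as the largest $\nrii$ with $\cost(\nrii)\le \kappa$, we have $\cost(\lmcmc(\kappa))\le \kappa < \cost(\lmcmc(\kappa)+1)$, and $\lmcmc(\kappa)\to\infty$ a.s.\ (because $\tau_{\Theta_k,L_k}$ is a.s.\ finite in each step). Dividing by $\lmcmc(\kappa)$ and letting $\kappa\to\infty$ gives
\[
\frac{\lmcmc(\kappa)}{\kappa}\xrightarrow{\kappa\to\infty} \frac{1}{\mIRE}
\qquad\text{(a.s.)}.
\]

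Third, I would apply a random-index CLT. Proposition \ref{prop:clt} gives $\sqrt{\nri}[E_{\nri,\nrp,\mathbf{p}}(\funfullmain)-\pi^{(\infty)}(\funfullmain)]\Rightarrow \mathcal{N}(0,\sigma^2)$ with $\sigma^2<\infty$. Since $\lmcmc(\kappa)\to\infty$ a.s.\ and $\lmcmc(\kappa)/\kappa$ converges a.s.\ to a deterministic constant, a standard Anscombe-type theorem (applicable here because the underlying estimator errors admit a martingale-plus-remainder decomposition in the proof of Proposition \ref{prop:clt}, whose increments satisfy the required uniform continuity in probability) yields
\[
\sqrt{\lmcmc(\kappa)}\bigl[E_{\lmcmc(\kappa),\nrp,\mathbf{p}}(\funfullmain)-\pi^{(\infty)}(\funfullmain)\bigr]
\xrightarrow{\kappa\to\infty}\mathcal{N}(0,\sigma^2).
\]
Multiplying by $\sqrt{\kappa/\lmcmc(\kappa)}\to\sqrt{\mIRE}$ a.s.\ and invoking Slutsky yields \eqref{eq:is-efficiency}.

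The main obstacle is the random-index (Anscombe) step, since the underlying summands are neither i.i.d.\ nor a simple martingale difference sequence, but come from the MCMC-plus-IS decomposition used in Proposition \ref{prop:clt}. In practice one verifies that the normalised partial sums $\sqrt{\nri}[E_{\nri,\nrp,\mathbf{p}}(\funfullmain)-\pi^{(\infty)}(\funfullmain)]$ are uniformly continuous in probability along subsequences of length $o(\nri)$, which, together with the a.s.\ convergence of $\lmcmc(\kappa)/\kappa$ to a constant, licenses the substitution $\nri\leftarrow\lmcmc(\kappa)$. The remaining pieces (SLLN for the cost and Slutsky) are routine.
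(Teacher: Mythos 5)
Your proposal follows essentially the same route as the paper's proof: a strong law for $\cost(\nrii)/\nrii$ via Harris ergodicity, inversion of $\cost(\lmcmc(\kappa))\le\kappa<\cost(\lmcmc(\kappa)+1)$ to get $\kappa/\lmcmc(\kappa)\to\mIRE$ a.s., and then Slutsky. The only difference is that you explicitly flag and justify the random-index (Anscombe) substitution $\nri\leftarrow\lmcmc(\kappa)$ in the CLT, a step the paper's proof passes over silently by directly invoking Proposition \ref{prop:clt} at the random time; your treatment is therefore slightly more careful but not a different argument.
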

\begin{remark} The quantity $\E[\taufun]\sigma^2$ is called the `inverse relative efficiency' by \cite{glynn-whitt}, and is considered a more accurate quantity than the asymptotic variance ($\sigma^2$ here) for comparison of Monte Carlo algorithms run on the same computer, as it takes into account also the average computational time.  
  \end{remark}
In the following we consider (possibly) variance reduced (if $\rate>0$) versions of $\Delta_\ell(\funfullhelp)$ of Assumption \ref{a:abstract-pf}, denoted $\Delta_\ell$, where $\funfullhelp=\funfullmain^{(\theta)}$, based on running the $\Delta$PF (Algorithm \ref{alg:delta-pf}) with parameters $\theta$, $\ell$ fixed.  The constant $C<\infty$ may change line-to-line, but does not depend on $\nrp$, $\ell$, or $\theta$, but may depend on the time-horizon $\nrt$ and function $\funfullmain$.          
  \begin{assumption}\label{a:rates}
Assumption \ref{a:costs} holds, and constants $2 \alpha\ge \beta>0$, $\costrate>0$, and $\rate \ge 0$ are such that the following hold:
  \begin{quote}
    \begin{enumerate}[(i)]
    \item\label{a:rates-cost}(Mean cost)
$\E[ \tau_{\theta,\ell}] \le C 2^{\costrate \ell(1+\rate)}$ 
    \item \label{a:rates-strong}(Strong order)
      $\E[\Delta_\ell^2] \le C 2^{-\ell(\beta + \rate)}  + C 2^{-2 \alpha\ell}$
    \item \label{a:rates-weak}(Weak order)
$|\E\Delta_\ell| \le C 2^{-\alpha \ell}$
    \end{enumerate}
    \end{quote}
  \end{assumption}
  \begin{remark} Regarding Assumption \ref{a:rates}: 
    \begin{quote}
    \begin{enumerate}[(i)]
    \item We only assume bounded mean cost in Assumption \ref{a:rates-cost}, rather than the almost sure cost bound commonly used.  This generalisation allows for the setting where occasional algorithmic runs may take a long time.
    \item In the original MLMC setting, the cost scaling $\costrate$ in Assumption \ref{a:rates-cost} is taken to be $\costrate=1$ \cite{giles-or,rhee-glynn}.  However, in settings involving uncertainty quantification, and where the forward solver may involve non-sparse matrix inversions, often $\costrate\ge 1$ \cite{cliffe2011multilevel,jasra2018multi,mlpf}.
      \item We assume in Assumption \ref{a:rates-cost} that the mean cost to form $\Delta_\ell$ is bounded by the $\costrate$-scaled product of the number of samples or particles $\nrp_\ell$ times the number of Euler time steps $2^\ell + 2^{\ell-1}$ together with the $O(\nrp_\ell)$-resampling cost, where there are $\nrp_\ell \propto 2^{\rate \ell}$ particles at level $\ell$.  Here, we recall that the stratified, systematic, and residual resampling algorithms have $O(\nrp_\ell)$ cost, as does an improved implementation of multinomial resampling; see \cite{cappe,douc-cappe-moulines}.
      \item With $\rate=0$, by Jensen's inequality one sees why $\alpha \ge \beta/2$ can be assumed, and that Assumption \ref{a:rates-strong} becomes $\E\Delta_\ell^2 \le C 2^{-\ell \beta}$.  
    \item $\rate\ge 0$ in Assumption \ref{a:rates-cost} and \ref{a:rates-strong} corresponds to using an average of
      $\nrp_\ell\defeq \lceil 2^{\rate \ell}\rceil$
      i.i.d samples of $\Delta_\ell^{(1)}$, i.e.~$\Delta_\ell = \frac{1}{\nrp_\ell}\sum_{i=1}^{\nrp_\ell} \Delta_\ell^{(i)}$, or, of more present interest to us, to increasing the number of particles used in a PF by a factor of $\nrp_\ell$ instead of the default lower number.  The former leads to 
      $
\E\Delta_\ell^2 = \frac{1}{\nrp_\ell} \var(\Delta_\ell^{(1)}) + \E[\Delta_\ell^{(1)}]^2,
$
justifying Assumption \ref{a:rates-strong}, as does Corollary \ref{theo:maincor}, with $\beta\in\{1,2\}$ and $\alpha=1$, for the $\Delta$PF (Algorithm \ref{alg:delta-pf}) in the HMM diffusion context (Section \ref{sec:bound}).    
      \end{enumerate}
      \end{quote}
    \end{remark}
  \begin{figure}
  \centering
    \begin{tikzpicture}[scale=2]
      \draw[style = help lines] (0,0) -- (0,1);
      \draw[style = help lines] (0,0) -- (4,0);
      \draw[style = help lines] (2,0) -- (2,.05);
      \draw[style = help lines] (0,.5) -- (.05,.5) ;
      
            \draw[thick,blue] (0,1) -- (2,.5) node[midway,sloped, below] {$2\alpha-\beta$};
      \draw[thick,blue] (2,.5) -- (2,0) node[midway, right] {$\rho$};
      \draw[thick,blue] (2,0) -- (4,0);
      \draw[red] (0,1) -- (2,1);
      \draw[red] (2,1) -- (2,.5) node[midway, anchor=west] {$1+\rho$};
      \draw[red] (2,.5) -- (4,.75) node[midway, sloped, below] {$(1+\beta)/2$};
      \draw (0,1) node[anchor=east] {$2\alpha$};
      \draw (0,.5) node[anchor=east] {$2\alpha-1$};

            \draw (2,-.5) node {$\beta$};
            \draw (2,0) node[anchor=north] {$1$};
            \draw (4,0) node[anchor=north] {$2 \alpha$};
            \draw (0,0) node[anchor= north east] {$0$};
      \draw[thick,blue]   (4.5,.75) -- (5,.75) node[anchor=west] {$\rho=\rho(\beta)$};
            \draw[red] (4.5,.5) -- (5,.5) node[anchor=west] {$r=r(\beta,\rho)$};
    \end{tikzpicture}
    \caption{Recommendations for number of particles $N_\ell\propto 2^{\rho \ell}$ and probability $p_\ell\propto 2^{-r \ell}$.  Here, $\gamma=1$ always, and $\rho\in [0,2\alpha-1]$ when $\beta=1$ provides a line of choices with the same \emph{order} of computational complexity.  In our particular experiment in Section \ref{sec:experiments}, however, the simple choice $\rho=0$, corresponding to no scaling in the particles, will turn out to be optimal.}
  \label{fig:optimal}
  \end{figure}
\begin{proposition}\label{prop:efficiency-both-finite}
  Suppose Assumption \ref{a:rates} and the assumptions of Proposition \ref{prop:clt} hold, with $\var(P, \mu_{\bar{\funfullmain}})<\infty$.  If $p_\ell \propto 2^{-r \ell}$ for some $r\in \big(\costrate(1+\rate), \min(\beta + \rate, 2\alpha)\big)$, then \ref{eq:is-efficiency} holds, i.e.
    \begin{equation*}
  \sqrt{\kappa}
  \big[ E_{\lmcmc(\kappa),\nrp,\mathbf{p}}(\funfullmain) - \pi^{(\infty)}(\funfullmain)\big]
    \xrightarrow{\kappa\to\infty}\mathcal{N}(0,\mIRE\sigma^2),
    \qquad \text{in distribution}.
  \end{equation*}
\end{proposition}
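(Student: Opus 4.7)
The plan is to reduce Proposition \ref{prop:efficiency-both-finite} to a direct application of Proposition \ref{prop:is-efficiency}: the task is just to verify the two hypotheses of that proposition, namely finiteness of the asymptotic variance $\sigma^2$ from \eqref{eq:asvar} and finiteness of the mean per-iteration cost $\mIRE=\E_{\pimcmc_m\otimes \mathbf{p}}[\taufun]$. The interval condition $r\in(\costrate(1+\rate),\min(\beta+\rate,2\alpha))$ should split cleanly into one upper bound that controls the cost and one lower bound that controls the variance.

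For the cost side, I use Assumption \ref{a:rates}\eqref{a:rates-cost} to bound $\taufun(\theta,\ell)\le C 2^{\costrate \ell(1+\rate)}$ uniformly in $\theta$. Combined with $p_\ell\propto 2^{-r\ell}$,
\[
\mIRE
= \int \pimcmc_m(\ud \theta) \sum_{\ell\ge 1} p_\ell \taufun(\theta,\ell)
\le C \sum_{\ell\ge 1} 2^{-(r-\costrate(1+\rate))\ell} < \infty,
\]
since $r>\costrate(1+\rate)$.

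For the variance side, $\var(P,\mu_{\bar{\funfullmain}})$ is finite by hypothesis, so by \eqref{eq:asvar} it suffices to show $\pimcmc(\sigma^2_\xi)<\infty$. Because $\epsilon>0$ gives $(\sum_i v^{(i)}+\epsilon)^2\ge \epsilon^2$, and the subtractive term in the numerator of $\sigma^2_\xi$ is non-negative, I may drop it to obtain
\[
\sigma^2_\xi(\theta,v^{(1:\nrp)},\xps^{(1:\nrp)})
\le \frac{s_{\bar{\funfullmain}^{(\theta)}}(\theta)}{\epsilon^2}.
\]
Inserting Assumption \ref{a:rates}\eqref{a:rates-strong} into the definition of $s_{\bar{\funfullmain}^{(\theta)}}$ from Assumption \ref{a:abstract-pf},
\[
s_{\bar{\funfullmain}^{(\theta)}}(\theta)
= \sum_{\ell\ge 1} \frac{\E\Delta_\ell^2}{p_\ell}
\le C \sum_{\ell\ge 1} 2^{-(\beta+\rate-r)\ell}
+ C \sum_{\ell\ge 1} 2^{-(2\alpha-r)\ell} < \infty,
\]
since $r<\min(\beta+\rate,2\alpha)$. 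The constants $C$ in Assumption \ref{a:rates} are $\theta$-independent, so the bound is uniform in $\theta$, whence $\pimcmc(\sigma^2_\xi)<\infty$; the CLT then follows from Proposition \ref{prop:is-efficiency}.

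The main obstacle is the variance step: it hinges crucially on the $\theta$-uniformity of the constant in Assumption \ref{a:rates}\eqref{a:rates-strong}. Without such uniformity one would have to impose a separate integrability hypothesis on $\theta\mapsto s_{\bar{\funfullmain}^{(\theta)}}(\theta)$ against the base posterior marginal $\pimcmc_m$, echoing the one already appearing in Theorem \ref{thm:consistency}; the $\theta$-uniform strong error bound built into Assumption \ref{a:rates} is what makes this automatic. The remaining geometric-series bookkeeping is routine once $r$ is picked strictly inside the open interval $(\costrate(1+\rate),\min(\beta+\rate,2\alpha))$, which is exactly where the requirement $\costrate(1+\rate)<\min(\beta+\rate,2\alpha)$ (implicit in non-emptiness of the interval) comes from.
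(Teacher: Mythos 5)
Your proposal is correct and follows essentially the same route as the paper's proof: verify $\mIRE<\infty$ from the lower bound $r>\costrate(1+\rate)$ via Assumption \ref{a:rates}\eqref{a:rates-cost}, verify $s_{\funfullhelp}(\theta)<\infty$ (uniformly in $\theta$, hence $\sigma^2<\infty$) from the upper bound $r<\min(\beta+\rate,2\alpha)$ via Assumption \ref{a:rates}\eqref{a:rates-strong}, and conclude by Proposition \ref{prop:is-efficiency}. Your added remarks on the $\epsilon^2$ lower bound for the denominator of $\sigma^2_\xi$ and on the role of $\theta$-uniformity of the constants are details the paper leaves implicit, but they do not change the argument.
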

\begin{remark} Regarding Proposition \ref{prop:efficiency-both-finite}, in the common case $\costrate=1$ for simplicity:
  \begin{quote}
  \begin{enumerate}[(i)]
    \item
      If $\beta>1$ (`canonical convergence regime') and $\rate=0$, then a choice for $r\in (1,\beta)$ exists.  See also Theorem 4 of \cite{rhee-glynn} for a discussion of the theoretically optimal $\mathbf{p}$. 
\item
      If $\beta\le 1$ (`subcanonical convergence regime'), then $\beta + \rate \le 1 + \rate$ and so no choice for $r$ exists.
  \end{enumerate}
  \end{quote}
  \end{remark}

\subsection{Subcanonical convergence}
When $\beta>1$, within the framework above we have seen that a canonical convergence rate holds (Proposition \ref{prop:efficiency-both-finite}) because $\mIRE<\infty$ and $\sigma^2<\infty$.  When $\beta\le 1$, this is no longer the case, and one must choose between a finite asymptotic variance and infinite expected cost, or vice versa.  Assuming the former, and that a CLT holds (Proposition \ref{prop:clt}), for $\epsilon>0$ and $0<\delta <1$ the Chebyshev inequality implies that the number of iterations of Algorithm \ref{alg:is-mlmc} so that
  \begin{equation}\label{eq:chebyshev}
\P[ \abs{ E_{\nrii,\nrp,\mathbf{p}}(\funfullmain) -\pi^{(\infty)}(\funfullmain)} \le \epsilon]\ge 1-\delta,
  \end{equation}
  holds implies that $\nrii$ must be of the order $O(\epsilon^{-2})$.  The question is then how to minimise the total cost $\mathscr{C}(\nrii)$, or \emph{computational complexity}, involved in obtaining the $\nrii$ samples.  This will involve optimising for $(p_\ell)$ and $\nrp_\ell$ to minimise $\mathscr{C}(\nrii)$, while keeping the asymptotic variance finite.

  \begin{proposition}\label{prop:subcanonical-abstract}
    Suppose that the assumptions of Proposition \ref{prop:clt} hold with $\sigma^2<\infty$, and Assumption \ref{a:rates} holds with
    $
    \E[\tau_{\Theta_{k_0},L_{k_0} }]
    =\infty
    $
    for some $k_0\ge 1$.  If
    $$
    \sum_{k\ge 1} \sup_{j\ge 1} \P[\tau_{\Theta_j,L_j} > a_k]
    <
    \infty,
    $$
    with
    $
    a_k = O\big( k^{c_1} (\log_2 k)^{c_0} \big)
    $
    for some constants $c_0 >0$ and $c_1\ge 1$, then \ref{eq:chebyshev} can be obtained with computational complexity
    $$
O\big( \epsilon^{-2 c_1}\abs{\log_2 \epsilon}^{c_0} \big)
      \qquad
      \text{as}
      \quad
      \epsilon \rightarrow 0.
$$
  \end{proposition}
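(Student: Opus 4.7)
The plan is to separately control the number of iterations needed for target accuracy $\epsilon$ and the accumulated cost of those iterations. For the former, since $\sigma^2<\infty$, Proposition~\ref{prop:clt} together with standard $L^2$-Chebyshev control of the self-normalised weighted estimator yields $\nrii_\epsilon=O(\epsilon^{-2})$ such that
\[
\P\bigl[\bigl|E_{\nrii_\epsilon,\nrp,\mathbf{p}}(\funfullmain)-\pi^{(\infty)}(\funfullmain)\bigr|>\epsilon\bigr]\le \delta/2
\]
for all sufficiently small $\epsilon>0$. Since $\cost(\uarg)$ is nondecreasing in the iteration count, $\lmcmc(\kappa)\ge \nrii_\epsilon$ iff $\cost(\nrii_\epsilon)\le \kappa$, and so the task reduces to finding $\kappa=O(\epsilon^{-2c_1}\abs{\log_2\epsilon}^{c_0})$ such that $\P[\cost(\nrii_\epsilon)>\kappa]\le \delta/2$.

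For the cost bound I would apply a truncation argument. Define $\bar{F}(t)\defeq\sup_{j\ge 1}\P[\tau_{\Theta_j,L_j}>t]$, which is nonincreasing as a supremum of nonincreasing functions, and set $q_k\defeq \bar{F}(a_k)$, so that $\sum_k q_k<\infty$ by hypothesis. Truncating at level $A\defeq a_{\nrii_\epsilon}$ and defining $\widetilde{\cost}(\nrii_\epsilon)\defeq \sum_{k=1}^{\nrii_\epsilon}(\tau_{\Theta_k,L_k}\wedge A)$, a union bound yields
\[
\P\bigl[\cost(\nrii_\epsilon)\neq \widetilde{\cost}(\nrii_\epsilon)\bigr]\le \nrii_\epsilon\bar{F}(A)=\nrii_\epsilon q_{\nrii_\epsilon},
\]
and Cauchy condensation applied to the nonincreasing summable sequence $(q_k)$ gives $kq_k\to 0$, so this probability is at most $\delta/4$ for $\epsilon$ small enough.

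For the truncated expectation I would then bound
\[
\E\bigl[\widetilde{\cost}(\nrii_\epsilon)\bigr]\le \nrii_\epsilon\int_0^A\bar{F}(t)\,\ud t\le \nrii_\epsilon\sum_{k=1}^{\nrii_\epsilon}(a_k-a_{k-1})\,q_{k-1},
\]
and apply the max--sum inequality $\sum_kx_ky_k\le (\max_kx_k)(\sum_ky_k)$ with $x_k=a_k-a_{k-1}$ and $y_k=q_{k-1}$. Because $c_1\ge 1$ forces the increments $(a_k-a_{k-1})$ to be nondecreasing in $k$, the maximum on $\{1,\ldots,\nrii_\epsilon\}$ is attained at the right endpoint, and $a_{\nrii_\epsilon}-a_{\nrii_\epsilon-1}=O(a_{\nrii_\epsilon}/\nrii_\epsilon)$ from the asymptotics $a_k=O(k^{c_1}(\log k)^{c_0})$. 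Combined with $\sum_k q_k<\infty$, this gives $\E[\widetilde{\cost}(\nrii_\epsilon)]=O(a_{\nrii_\epsilon})$; Markov then yields $\P[\widetilde{\cost}(\nrii_\epsilon)>C a_{\nrii_\epsilon}]\le\delta/4$ for $C$ sufficiently large, and $\kappa\defeq C a_{\nrii_\epsilon}=O(\epsilon^{-2c_1}\abs{\log_2\epsilon}^{c_0})$ is the claimed budget.

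The hardest step is the expected-cost bound just above: a direct Borel--Cantelli argument using only that $\tau_{\Theta_k,L_k}\le a_k$ eventually almost surely yields only the weaker estimate $\cost(\nrii_\epsilon)=O(\nrii_\epsilon\, a_{\nrii_\epsilon})$, a factor $\epsilon^{-2}$ larger than claimed. What rescues the bound is the monotonicity of $(a_k-a_{k-1})$ supplied precisely by $c_1\ge 1$, which together with summability of $(q_k)$ collapses the integral of $\bar{F}$ to the heavy-tail scale $O(a_{\nrii_\epsilon}/\nrii_\epsilon)$ per sample---reflecting that the sum of $\nrii_\epsilon$ heavy-tailed costs is dominated by the tail level $a_{\nrii_\epsilon}$ rather than by $\nrii_\epsilon\cdot a_{\nrii_\epsilon}$.
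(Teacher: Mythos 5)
Your proof is correct, but it takes a genuinely different route from the paper's. The paper proves this via its Lemma~\ref{lem:feller}, which is Feller's 1946 limit theorem for sums of independent random variables with infinite mean: after reducing the non-identically-distributed case to an i.i.d.\ one by stochastic domination (the $F^*=\inf_k F_k$ construction), it concludes the \emph{almost-sure} asymptotic bound $\P[\cost(\nrii)>a_\nrii\ \text{i.o.}]=0$, and then sets $\nrii=O(\epsilon^{-2})$. Your observation that a naive term-by-term bound would only give $O(\nrii\, a_\nrii)$ is apt, but note the paper does not argue term-by-term either --- Feller's theorem is precisely the statement that the whole sum is $O(a_\nrii)$; you have in effect re-derived the relevant instance of it by elementary means. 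Your truncation-at-$a_{\nrii_\epsilon}$ + union bound + first-moment/Markov argument trades the paper's a.s.\ conclusion for an in-probability one (which is all that \eqref{eq:chebyshev} requires), and in exchange it dispenses with two hypotheses the paper's route leans on: the conditional independence of $\{\tau_{\Theta_k,L_k}\}$ given $\{\Theta_k\}$ and the infinite-mean condition $\E[\tau_{\Theta_{k_0},L_{k_0}}]=\infty$, neither of which your union bound or expectation bound needs. The price is that you use more structure on $(a_k)$ --- nondecreasing increments from $c_1\ge 1$ and the explicit asymptotic form to get $a_{\nrii}-a_{\nrii-1}=O(a_\nrii/\nrii)$ --- whereas Lemma~\ref{lem:feller} only needs $a_k/k\uparrow\infty$; since the proposition's conclusion fixes $a_k$ to be of the stated form anyway, this costs nothing here. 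The steps all check out: $kq_k\to0$ for a nonincreasing summable sequence, the Riemann upper-sum bound $\int_0^{a_\nrii}\bar F\le\sum_k(a_k-a_{k-1})q_{k-1}$, and the max--sum inequality are each valid, and both arguments land on the same $O(\epsilon^{-2c_1}\abs{\log_2\epsilon}^{c_0})$ complexity.
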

  \begin{remark}The above result shows that even for costs with unbounded tails, reasonable confidence intervals and complexity order may be possible.  This may be the case for example when a rejection sampler or adaptive resampling mechanism is used within Algorithm \ref{alg:pf} or \ref{alg:is-mlmc}, which may lead to large costs for some $\Theta_k$, for example a cost with a geometric tail.
  \end{remark}
  
  The next results are as in Proposition 4 and 5 of \cite{rhee-glynn} in the standard rMLMC setting, and shows how one can choose $\mathbf{p}$, assuming an additional almost sure cost bound,  so that $\sigma^2<\infty$, with reasonable complexity.  
  
    \begin{proposition}\label{prop:subcanonical-nonoptimised}
      Suppose that the assumptions of Proposition \ref{prop:clt} hold with $\var(P, \mu_{\bar{f}})<\infty$, and that Assumption \ref{a:rates} holds with $\beta\le 1$, where moreover $\tau_{\theta,\ell} \le C 2^{\gamma\ell(1+\rate)}$ almost surely, uniformly in $\Theta_k=\theta\in \mathsf{T}$.  For all $q>2$ and $\eta>1$, the choice of probability
      $$
      p_\ell \propto
      2^{-2 b \ell} \ell [ \log_2 (\ell +1)]^\eta,
      $$
      where $b \defeq \min((\beta + \rate)/2,\alpha)$, leads to $\sigma^2 <\infty$, and \ref{eq:chebyshev} can be obtained with computational complexity
      $$
      O\Big(
      \epsilon^{-\costrate\frac{(1+\rate)}{b}}
      \abs{ \log_2 \epsilon}^{q \costrate \frac{(1+\rate)}{2b}}
      \Big)
      \qquad
      \text{as}
      \quad
      \epsilon \rightarrow 0.
      $$
      \end{proposition}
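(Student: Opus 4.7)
My plan is to adapt \citep[Propositions 4--5]{rhee-glynn} to the present MCMC-based setting, splitting the argument into (a) verifying $\sigma^2<\infty$ under the prescribed $\mathbf{p}$ so that Chebyshev's inequality forces $\nri = O(\epsilon^{-2})$ iterations for \eqref{eq:chebyshev}, and (b) controlling the total cost $\cost(\nri)$ in high probability via the a.s.\ cost bound.

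For (a), Assumption \ref{a:rates}\eqref{a:rates-strong} yields $\E\Delta_\ell^2 \le 2C\,2^{-2b\ell}$ with $b=\min((\beta+\rho)/2,\alpha)$. Substituting into the series in Assumption \ref{a:abstract-pf},
\[
s_{\bar\funfullmain^{(\theta)}}(\theta) \le C'\sum_{\ell\ge 1} \frac{1}{\ell[\log_2(\ell+1)]^\eta} < \infty,
\]
which converges for $\eta>1$, with bound uniform in $\theta\in\stheta$ (the constant in Assumption \ref{a:rates} is independent of $\theta$). Hence $\pimcmc(\sigma^2_\xi)<\infty$ and Proposition \ref{prop:clt} gives $\sigma^2<\infty$ under the hypothesis $\var(P,\mu_{\bar\funfullmain})<\infty$; Chebyshev then supplies the $O(\epsilon^{-2})$ iteration count.

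For (b), I would use the uniform-in-$\theta$ bound $\tau_{\theta,\ell}\le C\,2^{\gamma\ell(1+\rho)}$ to extract the tail
\[
\P[\tau_{\Theta_k,L_k}>t] \le \P\bigl[L_k \ge (\gamma(1+\rho))^{-1}\log_2(t/C)\bigr] \le C'' t^{-a}(\log_2 t)(\log_2\log_2 t)^\eta,
\]
with $a\defeq 2b/(\gamma(1+\rho))$. Since $\beta\le 1$ typically forces $a<1$, we have $\E\tau_{\Theta_k,L_k}=\infty$, so Proposition \ref{prop:is-efficiency} cannot be invoked directly. Instead I would truncate at a threshold $T_\nri$ defined by $\nri\P[\tau_{\Theta_k,L_k}>T_\nri]=\delta/2$, giving $T_\nri=\Theta(\nri^{1/a})$ up to polylog factors. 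A union bound then ensures $\max_{k\le\nri}\tau_{\Theta_k,L_k}\le T_\nri$ with probability $\ge 1-\delta/2$, and on this event $\cost(\nri)\le\sum_{k=1}^\nri(\tau_{\Theta_k,L_k}\wedge T_\nri)$. Applying Markov to the truncated sum together with the tail integral estimate $\E[\tau_{\Theta_k,L_k}\wedge T_\nri]\le C'''T_\nri^{1-a}(\log_2 T_\nri)(\log_2\log_2 T_\nri)^\eta$ yields $\cost(\nri)=O(\nri^{1/a}(\log_2 \nri)^{1/a}(\log_2\log_2 \nri)^{\eta/a})$ with probability $\ge 1-\delta$. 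Plugging in $\nri=O(\epsilon^{-2})$ and $2/a=\gamma(1+\rho)/b$, and absorbing the double-log term into the log term (valid for any $q>2$ since $(\log_2\log_2\epsilon^{-1})^{\eta/a}=o((\log_2\epsilon^{-1})^{q/a-1/a})$), delivers the stated complexity.

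The hard part will be managing the dependence among $\{\tau_{\Theta_k,L_k}\}$: by Assumption \ref{a:costs} these are only conditionally independent given the Markov chain $\{\Theta_k\}$, not i.i.d.\ as in \cite{rhee-glynn}. Fortunately, the uniform-in-$\theta$ a.s.\ cost bound lets me dominate each $\tau_{\Theta_k,L_k}$ by $\bar\tau_k\defeq C\,2^{\gamma L_k(1+\rho)}$, where $\{\bar\tau_k\}_{k\ge 1}$ is i.i.d.\ (since $L_k\sim\mathbf{p}$ is i.i.d.\ and independent of the rest), reducing the heavy-tail sum analysis to the classical i.i.d.\ setting of \cite{rhee-glynn}. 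Care is still needed to combine the Chebyshev accuracy event and the cost truncation event via a single union bound so that the confidence level $1-\delta$ in \eqref{eq:chebyshev} is preserved.
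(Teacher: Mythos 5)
Your proposal is correct, and the first half (finiteness of $s_{\funfullhelp}(\theta)$ uniformly in $\theta$, hence $\sigma^2<\infty$ and the $O(\epsilon^{-2})$ iteration count via Chebyshev) coincides with the paper's argument, as does the crucial reduction step: dominating the dependent costs $\tau_{\Theta_k,L_k}$ by the i.i.d.\ sequence $C\,2^{\costrate L_k(1+\rate)}$ using the uniform-in-$\theta$ almost sure bound. Where you genuinely diverge is in how the heavy-tailed i.i.d.\ sum is controlled. The paper computes the same tail $\P[\tau^*_{L_k}>a_k]$, chooses $a_k=[k(\log_2 k)^q]^{\costrate(1+\rate)/(2b)}$ so that $\sum_k\P[\tau^*_{L_k}>a_k]<\infty$, and invokes a Feller-type strong law for i.i.d.\ variables with infinite mean (its Lemma \ref{lem:feller}, from \cite{feller-infinite}) to conclude that $\cost(\nrii)\le a_\nrii$ eventually, almost surely; this is why the exponent $q>2$ appears explicitly in the statement. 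You instead truncate at a level $T_{\nri}$ calibrated by the tail, control the maximum by a union bound and the truncated sum by Markov's inequality, obtaining a non-asymptotic bound $\cost(\nri)=O(\nri^{1/a}\,\mathrm{polylog})$ holding with probability $1-\delta$ rather than an eventual almost-sure bound. Each approach buys something: the paper's Feller lemma gives an almost-sure asymptotic statement with no degradation in the confidence level, and extends cleanly to the merely conditionally independent case (Lemma \ref{lem:feller}\eqref{lem:feller-indep}, used in Proposition \ref{prop:subcanonical-abstract}); your truncation argument is more elementary, fully explicit in $\delta$, and in fact yields a slightly sharper logarithmic exponent ($1/a=\costrate(1+\rate)/(2b)$ plus an iterated-log factor, versus $q/a$ with $q>2$), at the price of a $1/\delta$ factor in the constant and of having to budget $\delta$ between the accuracy and cost events. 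Your borderline case $a=1$ also checks out, since the extra $(\log_2 T)^2$ from the truncated mean is still absorbed by $|\log_2\epsilon|^{q/a}$ for $q>2$. No gap.
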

    \begin{remark} Regarding Proposition \ref{prop:subcanonical-nonoptimised}, with $\costrate=1$:
      \begin{quote}
      \begin{enumerate}[(i)]
\item Under Assumption \ref{a:rates} with $\rate=0$, the usual setup in MLMC before variance reduced estimators are used, the above proposition shows that finite variance and \ref{eq:chebyshev} can be obtained without increasing the number of particles at the higher levels, even in the subcanonical regime.  We have in this case $b=\beta/2\le \alpha$ and complexity
$
      O\Big(
      \epsilon^{-\frac{2}{\beta}}
      \abs{ \log_2 \epsilon }^{ \frac{q}{\beta}}
      \Big).
      $
      When $\beta=1$ (borderline case), dMLMC gives complexity $O(\epsilon^{-2} \abs{\log_2 \epsilon}^2)$ \cite{giles-or,mlpf}, which is negligibly better (recall $q>2$), but is biased inference.
     \item When $\alpha>\beta/2$, which is the usual case in the subcanonical regime ($\beta \le 1$); see \cite{kloeden-platen}, a more efficient use of resources can be obtained by increasing the number of particles (see Proposition \ref{prop:subcanonical-optimised} below).
\end{enumerate}
\end{quote}
      \end{remark}
    \begin{proposition}\label{prop:subcanonical-optimised}
Suppose the assumptions of Proposition \ref{prop:subcanonical-nonoptimised} hold, where moreover $\rate\ge 0$ may vary as a free parameter without changing the constant $C>0$.  Then,
      for all $q>2$, $\eta>1$ constants, the choice $\rate = 2\alpha-\beta$ and probability
      $$
p_\ell \propto 2^{-2\alpha \ell} \ell [\log_2(\ell +1)]^\eta,
      $$
leads to $\sigma^2<\infty$, and \ref{eq:chebyshev} can be obtained with computational complexity
$$
O\Big( \epsilon^{\costrate[-2 - \frac{(1-\beta)}{\alpha} ]} \abs{\log_2 \epsilon}^{\costrate[q + \frac{(1-\beta)}{2\alpha} ]}\Big)
      \qquad
      \text{as}
      \quad
      \epsilon \rightarrow 0.
$$
    \end{proposition}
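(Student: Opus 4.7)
The plan is to obtain this as a direct specialisation of Proposition \ref{prop:subcanonical-nonoptimised}, with $\rate$ chosen so as to minimise the cost exponent there. First I would verify admissibility: since $2\alpha \ge \beta$ by the standing Assumption \ref{a:rates}, the value $\rate = 2\alpha - \beta$ is non-negative, so the hypothesis that the constant $C$ in that assumption may be taken independently of $\rate$ lets us invoke Proposition \ref{prop:subcanonical-nonoptimised} at this specific value of $\rate$ without altering the cost or strong/weak-order bounds in Assumption \ref{a:rates}.

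Next I would compute the quantity $b \defeq \min((\beta + \rate)/2, \alpha)$ at the chosen $\rate$: since $(\beta + \rate)/2 = (\beta + 2\alpha - \beta)/2 = \alpha$, we obtain $b = \alpha$. Consequently the probability $p_\ell \propto 2^{-2b\ell}\ell [\log_2(\ell+1)]^\eta$ prescribed by Proposition \ref{prop:subcanonical-nonoptimised} reduces exactly to the stated $p_\ell \propto 2^{-2\alpha \ell}\ell [\log_2(\ell+1)]^\eta$. That proposition then delivers both $\sigma^2 < \infty$ and the complexity bound $O\bigl(\epsilon^{-\costrate(1+\rate)/b}\abs{\log_2 \epsilon}^{q\costrate(1+\rate)/(2b)}\bigr)$. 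Substituting gives $(1+\rate)/b = (1 + 2\alpha - \beta)/\alpha = 2 + (1-\beta)/\alpha$, and analogously $(1+\rate)/(2b) = 1 + (1-\beta)/(2\alpha)$ for the logarithmic factor, yielding the claimed exponents.

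The only genuinely new content beyond substitution is the motivation for $\rate = 2\alpha - \beta$, which I would support by a short calculus argument on the map $\rate \mapsto (1+\rate)/b$. On $[0,\, 2\alpha - \beta]$, where $b = (\beta + \rate)/2$, we have $(1+\rate)/b = 2(1+\rate)/(\beta+\rate)$ with derivative $2(\beta-1)/(\beta+\rate)^2 \le 0$ since $\beta \le 1$, so the exponent is non-increasing in this range. On $[2\alpha - \beta,\, \infty)$, where $b = \alpha$, the exponent $(1+\rate)/\alpha$ is strictly increasing. The minimum therefore sits at the boundary $\rate = 2\alpha - \beta$, explaining the optimal rescaling $\nrp_\ell \propto 2^{(2\alpha-\beta)\ell}$ of particles per level.

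The main ``obstacle'' is really only book-keeping: making sure the uniformity-in-$\rate$ hypothesis is used legitimately (so that $\rate$ may be optimised after invoking Proposition \ref{prop:subcanonical-nonoptimised}'s bounds), verifying that $\E\Delta_\ell^2 \le 2C\cdot 2^{-2\alpha\ell}$ (which follows since $\beta + \rate = 2\alpha$) balances the tails of $p_\ell$ to keep $\sigma^2$ finite for $\eta>1$, and matching the exponent expressions. No probabilistic argument beyond those already used in Propositions \ref{prop:subcanonical-abstract} and \ref{prop:subcanonical-nonoptimised} is required.
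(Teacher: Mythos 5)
Your proposal is correct and follows essentially the same route as the paper's own (very terse) proof: specialise Proposition \ref{prop:subcanonical-nonoptimised} at $\rate=2\alpha-\beta$, note that $b=\min((\beta+\rate)/2,\alpha)=\alpha$, and substitute into the complexity bound; the added calculus argument showing this $\rate$ minimises $\rate\mapsto(1+\rate)/b$ is correct extra motivation not present in the paper. (Your computation $q\costrate(1+\rate)/(2b)=\costrate q[1+(1-\beta)/(2\alpha)]$ is the honest outcome of the substitution and does not literally equal the stated log-exponent $\costrate[q+(1-\beta)/(2\alpha)]$ unless $\beta=1$, but this discrepancy is inherited from the paper's own statement, not a flaw in your argument.)
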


    \section{Numerical simulations}\label{sec:experiments}

Now the theoretical results relating to the method herein introduced 
will be demonstrated on three examples.  
We will consider one example in the canonical regime, and two in the sub-canonical.  
In the first two experiments, the likelihoods can be computed exactly, 
so that the ground truth $\pi^{(\infty)}(\funfullmain)$ can be easily calculated to arbitrary precision.  We run each example with $100$ independent replications, and calculate the mean squared error (MSE) when the chain is at length $\nriii$ as
$$
\text{MSE}(\nriii)
=
\frac{1}{100} \sum_{i=1}^{100} \big| E_{\nriii,\nrp,\mathbf{p}}^{(i)}(\funfullmain) - \pi^{(\infty)}(\funfullmain)\big|^2,
$$
which is depicted as the thick red line, average of the thin lines, in Figure \ref{fig:ou-gbm} below.
The error decays with the optimal rate of cost$^{-1}$ and log(cost)cost$^{-1}$ in the canonical and sub-canonical cases, respectively, 
where cost is the realised cost of the run, $\cost(\nriii)$ from Section \ref{sec:efficiency}, measured in seconds, with $\nriii$ iterations of the Markov chain.

Three examples will be considered. 
First we consider two models where the exact marginal likelihood can be computed. 
This way a reliable ground truth can be computed with a long MCMC chain, 
providing a strong verification of the theoretical results.
In Section \ref{sec:experiments-ou} the Ornstein--Uhlenbeck (OU) process is considered,
where Euler-Maruyama provides the canonical convergence regime.
In Section \ref{sec:gbm} the Geometric Brownian motion is considered,
where Euler-Maruyama provides sub-canonical convergence regime.
In Section \ref{sec:nonrev} we consider a more complicated 2$d$ model
which does not allow exact computation of the marginal likelihood.
  
It is of interest to compare our methodology to existing unbiased methods. 
 The method we consider for comparison is PMMH using 
 the exact method introduced by Fearnhead et al.~\cite{fearnheadPR}. 
 In their work they provide a way to construct unbiased estimates without approximating the transition density.
 The key idea is to assign to each particle a random positive weight which is an unbiased estimator of the true weight. 
 This method is referred to as the random weight particle filter.
 It has been later extended to continuous-time observations
 in \cite{fearnheadPRS}.
The method of \cite{fearnheadPR} is implemented when it is applicable 
(in particular, for the models of Sections \ref{sec:experiments-ou} and \ref{sec:gbm})
and the corresponding MSE is plotted in comparison to 
our method. 
That method is not amenable to the example of Section \ref{sec:nonrev}.
In all the examples, for the sake of comparison with the standard approach, we also implement a finely-discretised PMMH.  This shows the benefit of our alternative approach based on a coarsely-discretised PMMH with  multilevel IS correction.

\subsection{Ornstein--Uhlenbeck process}\label{sec:experiments-ou}
Consider the OU  process
\begin{equation}\label{eq:ou}
\ud Z_t = -a Z_t \ud t + b \ud W_t \, ,
\qquad t\ge 0,
\end{equation}
with initial condition $Z_0=0$, model parameter $\theta = (\theta_1,\theta_2) \sim N(0,\sigma^2 I)$, and $a\defeq a_\theta=\exp(\theta_1)$ and $b\defeq b_\theta=\exp(\theta_2)$.
The process is discretely observed for $k=1,\dots,\nrt$,
\begin{equation}\label{eq:obs}
Y_p = X_p + \xi_p \, ,
\end{equation}
where $\xi_p \sim N(0,\gamma^2)$ i.i.d.~and recall that $X_p=Z_{p+1}$.  
Therefore, 
$$
G_{p}(x) = \exp(-\frac1{2\gamma^2} |x - y_p|^2 ) \, .
$$

The marginal likelihood is given by 
$$
\bbP[y_{1:\nrt}|\theta] = \prod_{p=1}^\nrt \bbP[y_p|y_{1:p-1},\theta] \, ,
$$
and each factor can be computed as the marginal of the joint on the prediction and current observation, i.e.
\begin{equation}\label{eq:marglike}
\bbP[y_p|y_{1:p-1},\theta] = \int_\bbR \bbP[y_p|x_p,\theta] \bbP[x_p | y_{1:p-1},\theta] \ud x_p \, .
\end{equation}

In this example the ground truth can be computed exactly via the Kalman filter.  
In particular, the solution of \ref{eq:ou} is given by 
$$
Z_1 = e^{-a}X_0 + W_1 \, , \quad W_1 \sim \mathcal{N}\left (0, \frac{b^2}{2a}(1 - e^{-2a}) \right ) \, .
$$
The filter at time $p$ is given by the following simple recursion
$$
m_p = c_p \left (\frac{y_p}{\gamma^2} + \frac{\hat{m}_p}{\hat{c}_p}\right ),~c_p = (\gamma^{-2} + \hat{c}_p^{-1})^{-1} \, , \quad
\hat{m}_p = e^{-a} m_{p-1} \, , \quad \hat{c}_p = e^{-2a} c_{p-1} + \frac{b^2}{2a}(1 - e^{-2a}) \, .
$$ 
Additionally, the incremental marginal likelihoods \ref{eq:marglike} can be computed exactly
$$
\bbP[y_p|y_{1:p-1},\theta] = \sqrt{\frac{c_p}{2\pi\hat{c}_p\gamma^2}}
\exp\left \{-\frac12 \left [ \frac{y_p^2}{\gamma^2} + \frac{\hat{m}_p^2}{\hat{c}_p} - c_p \left ( \frac{y_p}{\gamma^2} + \frac{\hat{m}_p}{\hat{c}_p} \right)^2 \right] \right \} \, .
$$

The parameters are chosen as $\gamma=1$, $\sigma^2=0.1$, $\nrt=5$, and 
the data is generated with $\theta= (0,0)^T$.
Our aim is to compute $\bbE (\theta| y_{1:\nrt})$ (or $\bbE[ (a,b)^T | y_{1:\nrt} ]$, etc., but we will content ourselves with the former).
This is done via a brute force random walk MCMC for $\nriii=10^8$ steps using the exact likelihood $\bbP[y_{1:\nrt}|\theta]$ as above.  
The IACT is around 10, so this gives a healthy limit for MSE computations.

For the numerical experiment, we use Euler-Maruyama method at resolution $h_\ell=2^{-\ell}$
to solve \ref{eq:ou} as follows
\begin{equation}\label{eq:euler}
Z_{p+1} = (1-a h_\ell) Z_p + b B_{p+1} \, , ~~ B_{p+1} \sim \mathcal{N}(0,h_\ell) ~~ i.i.d.
\end{equation}
for $p=1,\dots, K_\ell=h^{-1}_\ell$. Levels $\ell$ and $\ell-1$ are coupled in the simulation of $\Delta_\ell$
by defining $B^C_{1:K_\ell/2} = B^F_{1:2:K_\ell-1} + B^F_{2:2:K_\ell}$
Algorithm \ref{alg:delta-pf} is then run using the standard bootstrap particle filter (Algorithm \ref{alg:pf}) with $\nrp=20$ particles and $O(\nrp)$-complexity multinomial resampling; see \cite{cappe}.  Theorem \ref{theo:mainthm} provides a rate of $\beta=2$ for Algorithm \ref{alg:delta-pf}, because the diffusion coefficient is constant, which implies we are essentially running a Milstein scheme (see \ref{eq:coup_h_cont} and \cite{kloeden-platen}).  Recommendation \ref{rec:allocations} (or Proposition \ref{prop:efficiency-both-finite}) of Section \ref{sec:efficiency} suggests arbitrary precision can be obtained by Algorithm \ref{alg:is-mlmc} with $p_\ell \propto 2^{-3\ell/2}$ and no scaling of particle numbers based on $\ell$ in this canonical $\beta=2$ regime (with weak rate $\alpha=1$).  We choose a positive PMMH algorithm constant $\epsilon=10^{-6}$ (see Remark \ref{alg:is-mlmc-remark}\ref{alg:is-mlmc-remark-pmmh}).  We run Algorithm \ref{alg:is-mlmc} for $10^{4}$ steps, 
with 100 replications. For the finely-discretised PMMH experiment we run $10^4$ steps, with 100 replications with a discretisation of $h_{\ell} = 2^{-5}$.  The results are presented in Figure \ref{fig:ou-gbm}, where it is clear that the theory holds and the MSE decays according to $1/$cost.
The variance of the run-times is very small over replications.
The method of \cite{fearnheadPR}, within PMMH,
also converges with the theoretically-predicted canonical rate, 
but with a slightly smaller constant. 
This is not unexpected. 
The important point is that both methods achieve the same canonical rate, 
while our method is quite generally applicable, 
in particular, to a wide range of models inaccessible to methods of the type of \cite{fearnheadPR}.
Also with the finely discretised PMMH from Figure \ref{fig:ou-gbm}, i.e. the curve titled as MSE3, we see the bias kicking in at the end. We also see that for the same level of cost the MSE is higher than that of the other methodologies.

\subsection{Geometric Brownian motion}\label{sec:gbm}
We next consider the following stochastic differential equation 
\begin{equation}\label{eq:gbm}
\ud Z_{t} = 
 a Z_{t} \ud W_{t}, 
\end{equation}
with initial condition $Z_0 = 1$, and $a\defeq a_\theta=\exp(\theta)$ with $\theta\sim\mathcal{N}(0,\sigma^2)$.  
\begin{figure}
\includegraphics[width=.327\columnwidth]{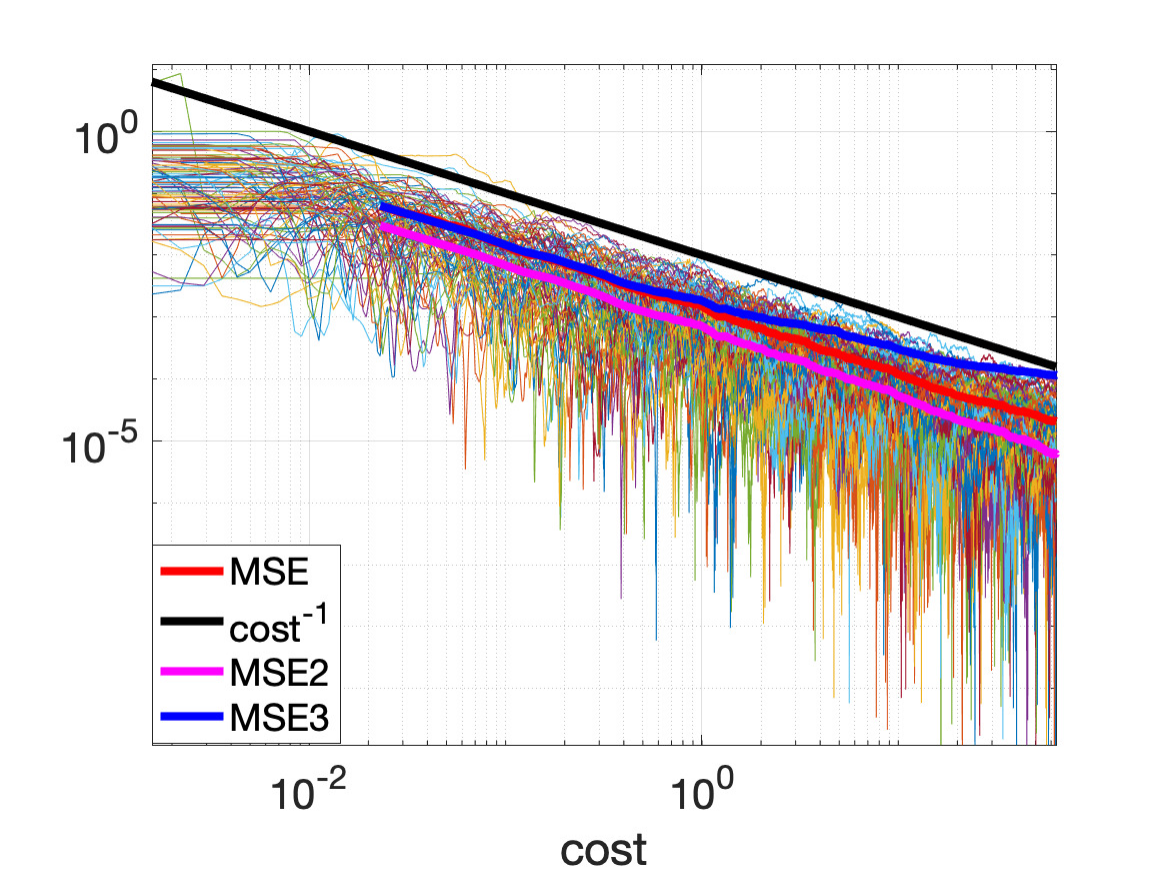}
\includegraphics[width=0.327\columnwidth]{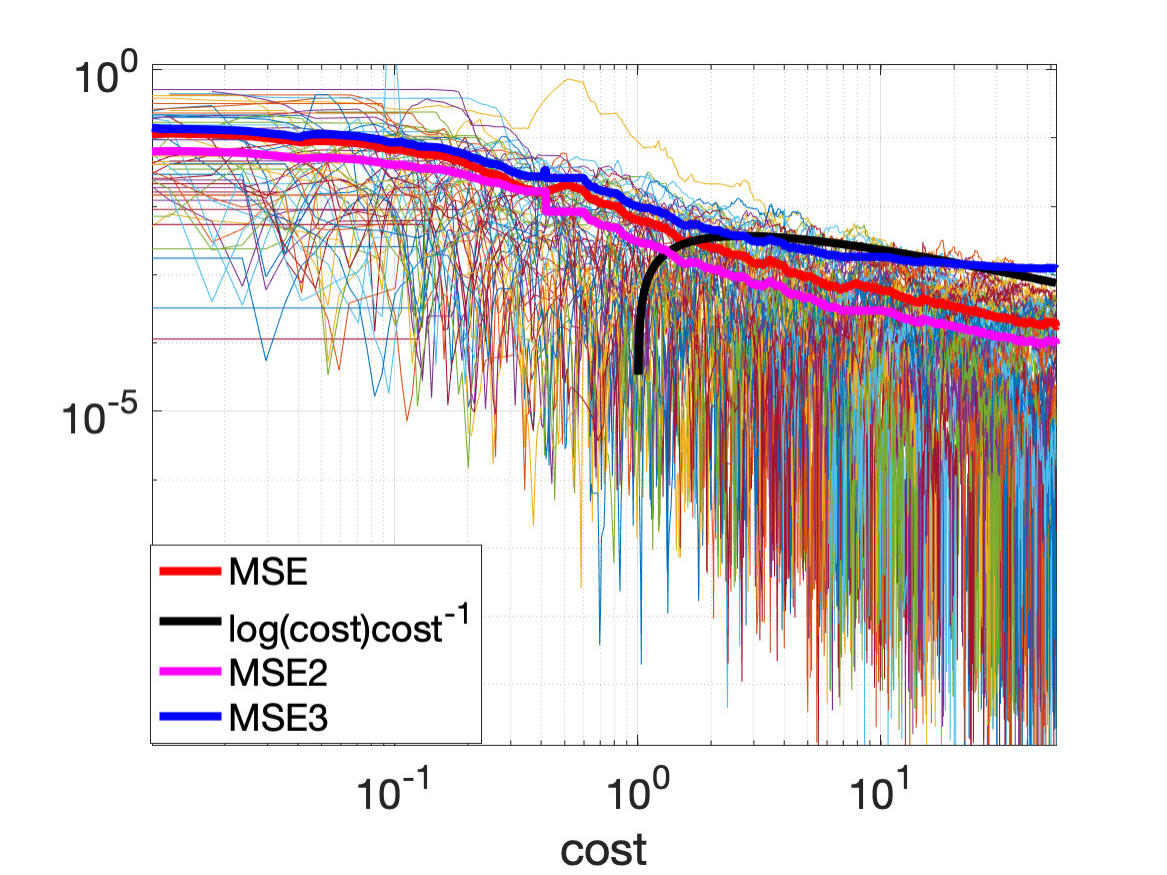}
\includegraphics[width=0.327\columnwidth]{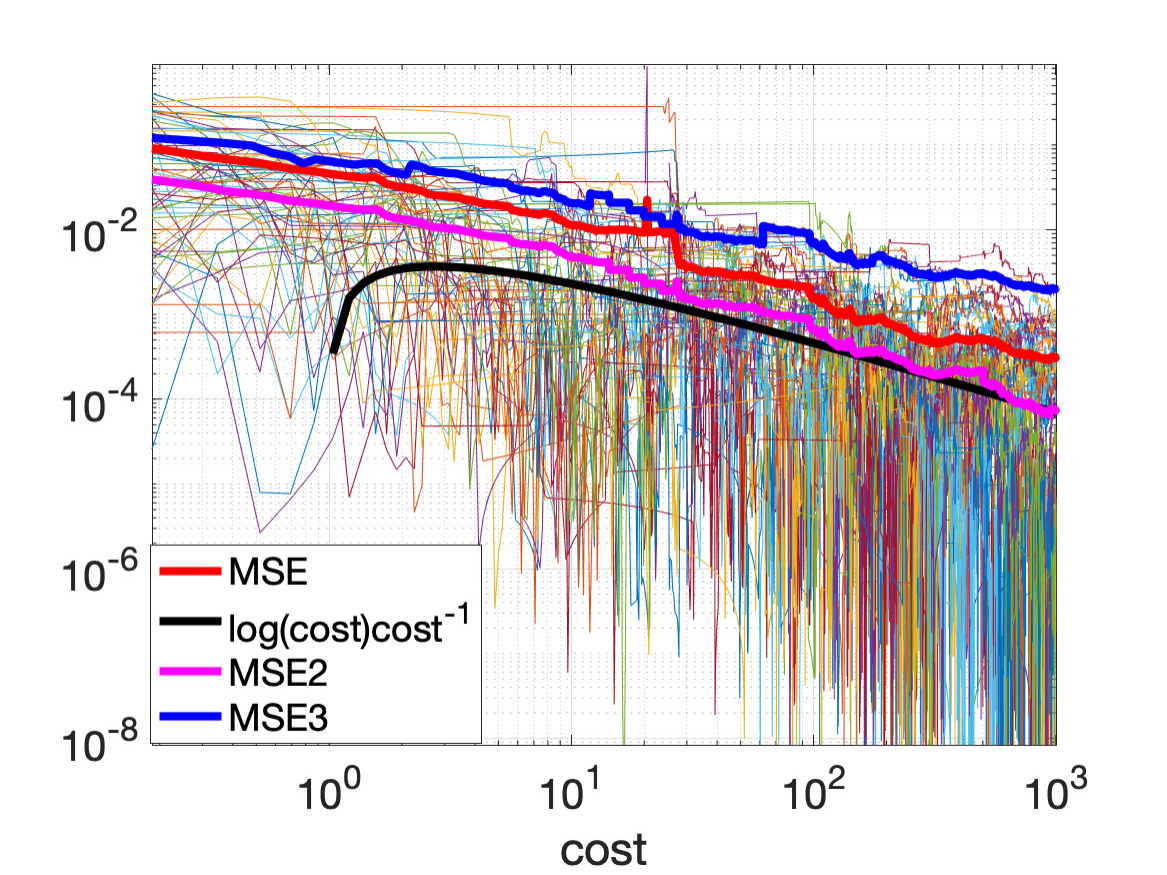}
\caption{The MSE of PMMH rMLMC IS (Algorithm \ref{alg:is-mlmc}) applied to the problem of parameter inference for the discretely observed OU process (left plot) and GBM process (middle plot with $\rho=0$, right plot with $\rho=1$). Squared error replications are given by the thin curves, 
while the thick red curves give the MSE over replications. 
The thick magenta curves show the MSE of
PMMH using 
\cite{fearnheadPR} (denoted by MSE2 in the legend). The blue curve, denoted by MSE3, depicts the MSE for the finely-discretised PMMH.
The black curves representing cost$^{-1}$ (left plot) and log(cost)cost$^{-1}$ (middle and right plots) 
are there to guide the eye.}
\label{fig:ou-gbm}
\end{figure}
This equation is analytically tractable as well, and the solution
of the transformed equation $U = \log Z$ is given via It{\^o}'s formula by
\[
\ud U_{t} = -\frac{a^2}{2} \ud t + a\, \ud W_{t}. 
\]
Defining $W_p \sim \mathcal{N}(0,1)$ i.i.d., one has that 
\[
U_{p+1} = U_p +  -\frac{a^2}{2} + a\, W_p \, , \quad {\rm with} \quad  U_0 = \log z_0 = 0, 
\]
and the solution of~\ref{eq:gbm} can be obtained via exponentiation: $Z_{p} = e^{U_{p}}$.
Moreover, noisy observations 
are introduced on the form , with $X_p=Z_{p+1}$,
\[
Y_p = \log(X_p) + \xi_p,
\]
where $\xi_p \sim \mathcal{N}(0,\gamma^2)$ i.i.d.~as above.
Therefore we have
\begin{equation}\label{eq:g_gbm}
G_{p}(x) = \exp(-\frac1{2\gamma^2} |\log(x) - y_p|^2 ).
\end{equation}
Again $\bbP[y_{1:\nrt}|\theta]$ can be computed analytically. 
The parameters $\gamma=1$, $\sigma^2=0.1$, $\nrt=5$ are chosen 
the same as in the previous example and the true observations are generated again with $\theta=0$.

In order to investigate the theoretical sub-canonical rate, we return to \ref{eq:gbm}
and approximate this directly using Euler-Maruyama method \ref{eq:euler}, which introduces
artificial approximation error.  This problem suffers from stability problems when $X<0$, 
so we take $h_\ell=2^{-6-\ell}$.
 Algorithm \ref{alg:pf} is then used along with the selection functions \eqref{eq:g_gbm}.
Here the diffusion coefficient is not constant, and Euler-Maruyama method yields a rate of $\beta=1=\alpha$,
the borderline case, which is expected to give a logarithmic penalty.
Based on Recommendation \ref{rec:allocations} (or Proposition \ref{prop:subcanonical-optimised}) of Section \ref{sec:efficiency}, 
we consider scaling the particles as $2^{\rate \ell}$ with $\rate=2\alpha-\beta=1$ and $\rate=0$, 
with $p_\ell \propto 2^{-2\ell} \ell \log(\ell)^2$ in both cases.  Again we let $\epsilon=10^{-6}$, and 
 the standard bootstrap particle filter is used, with $\nrp=20\times 2^{\rate \ell}$ particles.  
Algorithm \ref{alg:is-mlmc} is run for $10^{4}$ steps, 
with 100 replications. Again, for the finely-discretised PMMH, we run $10^4$ steps, with 100 replications with a discretisation of $h_{\ell} = 2^{-4}$ 
For this sub-canonical case we impose an artificial upper bound $\ell \leq 10$, 
corresponding to an induced bias of $\approx 10^{-5}$.
The results are presented in Figure \ref{fig:ou-gbm}, and they 
show good agreement with the theory, in terms of rate. 
On the other hand, the cost for $\rate=0$ is apparently smaller than that of 
$\rate=1$ by a factor of approximately 100. 
The method of \cite{fearnheadPR} is not expected to suffer from 
a logarithmic penalty on the MSE convergence, i.e. it achieves canonical rate also in this example.
This can be seen in Figure \ref{fig:ou-gbm}, in addition to a slightly better constant, as before.
For the finely-discretised PMMH from Figure \ref{fig:ou-gbm}, with geometric Brownian motion, we again notice the effect of the bias arising from the discretisation, and the overall higher MSE.

    \subsection{2\textbf{d} Non-reversible  Langevin equation}\label{sec:nonrev}
   
    We now consider a $2d$ example which is not amenable to approaches of the type \cite{fearnheadPR}.
    Consider a target distribution of the type $\rho(z) \propto \exp(-\Phi(z)/a_2)$, $a_2>0$, 
    and the following non-reversible Langevin equation
    \begin{equation}
    \ud Z_t =
    (A - I_2) \nabla \Phi(Z_t) \ud t + \sqrt{2 a_2} 
    \ud W_t,
    \qquad t\ge 0 \, ,
    \end{equation}
and noisy observations $Y_p \sim \mathcal{N}(X_p, \gamma^2 I_2)$, with $\gamma=1, n=10$ observations where $A \in \bbR^{2\times 2}$
is anti-symmetric and parameterised by $a_1 \ge 0$, $I_2$ is the $2$-dimensional identity matrix,
and $\Phi(z) = \frac{a_3}{2}(z_1^2+z_2^2-1)^2$ is the ring potential, parameterised by $a_3 >0$.
The initial condition is specified as $Z_0=[1,1]^T$. 
It is easy to see that the right-hand side of the Fokker-Planck equation vanishes for the invariant distribution
$\rho$ given above, so the dynamics are well-behaved \cite{pavliotis}. We let $a_i=\exp(\theta_i)$, for $i=1,\dots, 3$, and the prior is given by 
$\theta \sim N(0,\varepsilon^2 I_3)$, $\varepsilon^2=0.1$. The resolution of the Euler--Maruyama scheme for this experiment is set as $2^{-\ell+1}$.
This problem is no longer analytically soluble, so a high-resolution simulation is used with a large sample size 
as ground truth.

\begin{figure}
\centering
\includegraphics[width=0.50\textwidth]{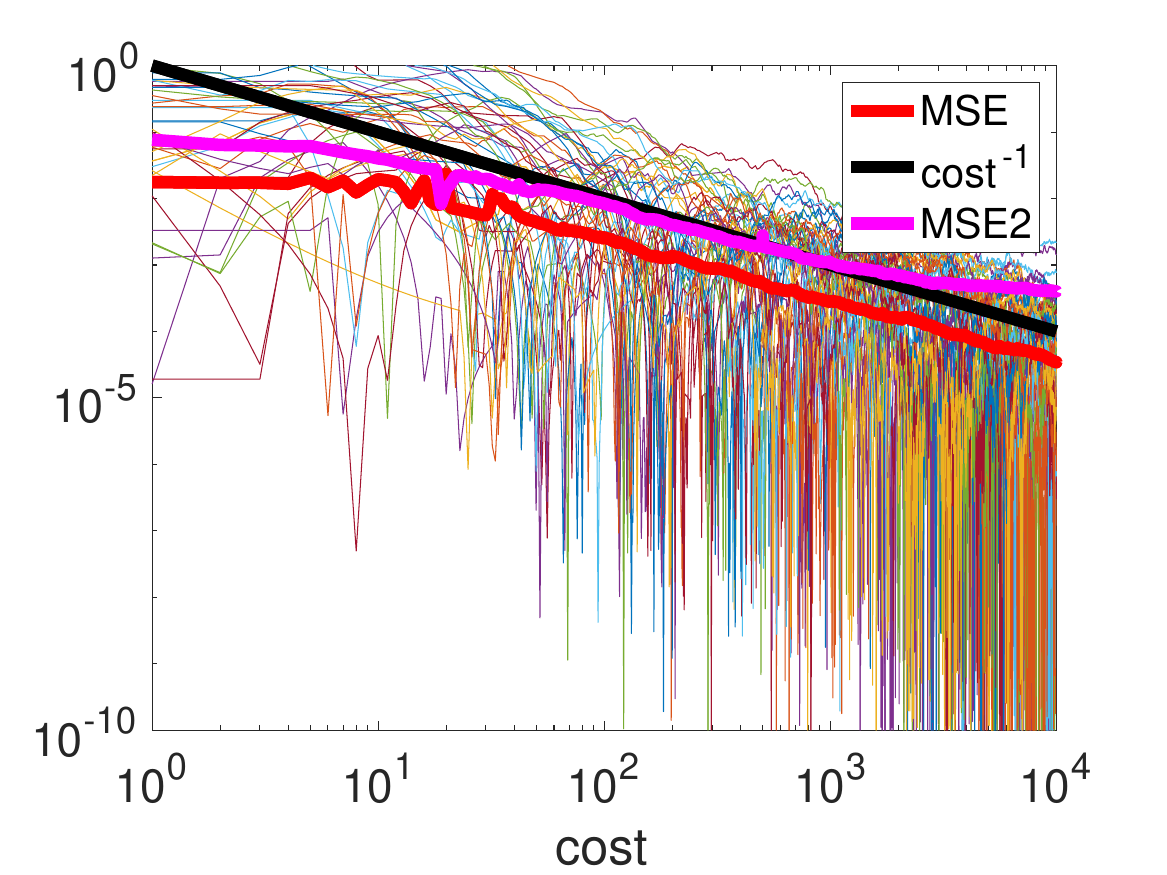}
\caption{The MSE of PMMH rMLMC IS Algorithm \ref{alg:is-mlmc} applied to the problem of parameter inference for the discretely observed non-reversible Langevin equation. Squared error replications are given by the thin curves,  while the thick red curves give the MSE over replications. The thick blue curve shows the MSE over replications of the finely-discretised PMMH, which we denote as MSE2. The black curve represents cost$^{-1}$.}
\label{fig:nonrev}
\end{figure}

For our setup it follows similarly to that of the previous experiment for the OU process, where we are working in a canonical regime. Again we run Algorithm \ref{alg:is-mlmc} with $10^4$ steps, and 100 replications, 
and Algorithm \ref{alg:delta-pf} is run using the standard bootstrap particle filter (Algorithm \ref{alg:pf}) with $\nrp=20$ particles. We compare our results to the single level PMMH, with a similar setup where we specify its resolution as $2^{-4}$. As before we choose a positive PMMH algorithm constant $\epsilon=10^{-6}$. As we can see from Figure \ref{fig:nonrev} the MSE of the proposed methodology in the paper decays at the rate of $1/\textrm{cost}$, which is as expected, which outperforms that of the single-level PMHH.
\begin{remark}
For multidimensional diffusions, it is well known that the exact methodology works only on specific diffusions, which require strong assumptions \cite{beskos-roberts}.  It is not so clear how the exact methodology can be applied to our non-reversible diffusion (due to the difficult drift term) \cite{beskos-papaspiliopoulos-roberts-fearnhead}.  An alternative to this is the work of Blanchet el al. \cite{blanchet-zhang}, which does not rely on the assumption of drift term equal to the gradient of a suitable potential function, or on Lamperti's transformation for that matter. However, despite this, the major drawback is that the running time of their methodology, although finite with probability one, has infinite mean. As a result, the comparison would not be practical due to the cost of the experiment.
\end{remark}

\section*{Acknowledgments} 
JF, AJ, KL and MV have received support from the Academy of Finland (grants 274740, 312605 and 315619) and from the Institute for Mathematical Sciences, Singapore (2018 programme `Bayesian Computation for High-Dimensional Statistical Models'). NC and AJ have received support from KAUST baseline funding, JF and KL from The Alan Turing Institute, AJ from the Singapore Ministry of Education (R-155-000-161-112), and KL from the University of Manchester (School of Mathematics). This research made use of the Rocket High Performance Computing service at Newcastle University. We thank Paul Fearnhead, Santeri Karppinen, Anthony Lee and the anonymous referees for their many insightful remarks.

\appendix

\section{Analysis of the delta particle filter}\label{app:bound}

We now give our analysis that is required for the proofs of Theorem \ref{theo:mainthm} and Corollary \ref{theo:maincor} of Section \ref{sec:bound} regarding the $\Delta$PF (Algorithm \ref{alg:delta-pf}) for HMM diffusions.  
The structure of the appendix is as follows. 
In Section \ref{app_sec:models} we introduce some more Feynman--Kac notations, following \cite{del-moral,mlpf}, emphasising that here we consider standard HMMs that can be coupled.
In Section \ref{sec:delta_pf} we recall the $\Delta$PF stated earlier.  A general variance bound for quantities such as $\Delta_\ell(\varphi)$ is given in Section
\ref{app_sec:delta_gen}.  This is particularised to the HMM diffusion case in Section \ref{sec:diff_case}, where we supply the proofs for the results of Section \ref{sec:bound}.

\subsection{Models}\label{app_sec:models}

Let $(\mathsf{X},\mathcal{X})$ be a measurable space and $\{G_n\}_{n\geq 0}$ a sequence of non-negative, bounded and measurable functions such that $G_n:\mathsf{X}\rightarrow\mathbb{R}_+$.
Let $\eta_0^{F},\eta_0^C\in\mathscr{P}(\mathsf{X})$ and $\{M_n^F\}_{n\geq 1}$,   $\{M_n^C\}_{n\geq 1}$ be two sequences of Markov kernels, i.e.~$M_n^F:\mathsf{X}\rightarrow\mathscr{P}(\mathsf{X})$,
$M_n^C:\mathsf{X}\rightarrow\mathscr{P}(\mathsf{X})$. 
Set $\mathsf{E}_n \defeq\mathsf{X}^{n+1}$ for $n\geq 0$, and for $x_{0:n}\in\mathsf{E}_n$,
$$
\pmb{G}_n(x_{0:n}) = G_n(x_n)
$$
and for $n\geq 1$, $s\in\{F,C\}$, $x_{0:n-1}\in\mathsf{E}_{n-1}$
$$
\pmb{M}_n^{s}(x_{0:n-1},\ud x_{0:n}') = \delta_{\{x_{0:n-1}\}}(\ud x_{0:n-1}') M_n^s(x_{n-1}',\ud x_n').
$$
Define for $s\in\{F,C\}$, $\varphi\in\mathcal{B}_b(\mathsf{E}_n)$, $u_n\in\mathsf{E}_n$
$$
\pmb{\gamma}_{n}^s(\varphi) = \int_{\mathsf{E}_0\times\cdots\times\mathsf{E}_n}\varphi(u_n) \Big(\prod_{p=0}^{n-1} \pmb{G}_p^s(u_p)\Big) \eta_0^s(\ud u_0)\prod_{p=1}^n \pmb{M}_p^s(u_{p-1},\ud u_p)
$$
and
$$
\pmb{\eta}_n^s(\varphi) = \frac{\pmb{\gamma}_{n}^s(\varphi)}{\pmb{\gamma}_{n}^s(1)}.
$$

Throughout this appendix, we assume Assumption \ref{a:diffusion}, and that
Assumption \ref{a:coupled-fk}\eqref{a:coupled-fk-kernel} holds, i.e.~there exists $\check{\eta}_0\in\mathscr{P}(\mathsf{X}\times\mathsf{X})$ such that
for any $A\in\mathcal{X}$
$$
\check{\eta}_0(A\times \mathsf{X}) = \eta_0^F(A) \qquad \check{\eta}_0(\mathsf{X}\times A) = \eta_0^C(A)
$$
and moreover for any $n\geq 1$ there exists Markov kernels $\{\check{M}_n\}$, $\check{M}_n:\mathsf{X}\times\mathsf{X}\rightarrow\mathscr{P}(\mathsf{X}\times\mathsf{X})$
such that for any $A\in\mathcal{X}$, $(x,x')\in\mathsf{X}\times\mathsf{X}$:
\begin{equation}\label{eq:coupling-mss}
\check{M}_n(A\times \mathsf{X})(x,x') = M_n^F(A)(x) \qquad \check{M}_n(\mathsf{X}\times A)(x,x') = M_n^C(A)(x').
\end{equation}

\subsection{Delta particle filter}\label{sec:delta_pf}
Define $x_p=(x_p^F,x_p^C)\in\mathsf{X}\times\mathsf{X}$ and
$$
\check{G}_p(x_p) = \frac{1}{2}(G_p(x_p^F)+G_p(x_p^C)),
$$
as in Assumption \ref{a:coupled-fk}\eqref{a:coupled-fk-potential}.
Set, for $n\geq 0$, $x_{0:n}\in\mathsf{X}^{2(n+1)}$
$$
\pmb{\check{G}}_n(x_{0:n}) = \check{G}_n(x_n)
$$
and for $n\geq 1$, $x_{0:n-1}\in\mathsf{X}^{2n}$
$$
\pmb{\check{M}}_n(x_{0:n-1},\ud x_{0:n}') = \delta_{\{x_{0:n-1}\}}(\ud x_{0:n-1}') \check{M}_n(x_{n-1}',\ud x_n'),.
$$
Note that coupling assumption \eqref{eq:coupling-mss} for $\check{M}_n$ can be equivalently formulated for $\pmb{\check{M} }_n$.

For $n\geq 0$, $\varphi\in\mathcal{B}_b(\mathsf{E}_n\times\mathsf{E}_n)$, $u_n\in\mathsf{E}_n\times\mathsf{E}_n$, we have
$$
\pmb{\check{\gamma}}_{n}(\varphi) = \int_{\mathsf{E}_0^2\times\cdots\times\mathsf{E}_n^2}\varphi(u_n) \Big(\prod_{p=0}^{n-1} \pmb{\check{G}}_p(u_p)\Big) \check{\eta}_0(\ud u_0)\prod_{p=1}^n \pmb{\check{M}}_p(u_{p-1},\ud u_p)
$$
and
$$
\pmb{\check{\eta}}_n(\varphi) = \frac{\pmb{\check{\gamma}}_{n}(\varphi)}{\pmb{\check{\gamma}}_{n}(1)}.
$$

As noted in \cite{jasra-kamatani-law-zhou} it is simple to establish that for $\varphi\in\mathcal{B}_b(\mathsf{E}_{n})$, if
\begin{equation}\label{eq:psi_def}
\psi(x_{0:n}) =
\pmb{\check{G}}_n(x_{0:n})\Big(\varphi(x_{0:n}^F)\prod_{p=0}^n \frac{\pmb{G}_p(x_{0:p}^F)}{\pmb{\check{G}}_p(x_{0:p})}
-
\varphi(x_{0:n}^C)\prod_{p=0}^n \frac{\pmb{G}_p(x_{0:p}^C)}{\pmb{\check{G}}_p(x_{0:p})}
\Big)
\end{equation}
then
\begin{equation}\label{eq:main_id}
\pmb{\check{\gamma}}_{n}(\psi) = \pmb{\check{\gamma}}_{n}(1)\pmb{\check{\eta}}_n(\psi)
=
\gamma_n^F(G_n\varphi) - \gamma_n^C(G_n\varphi).
\end{equation}
Note 
$$
 \pmb{\check{\gamma}}_{n}(1) = \prod_{p=0}^{n-1}\pmb{\check{\eta}}_p(\pmb{\check{G}}_p).
$$

 In order to approximate $\pmb{\check{\gamma}}_{n}(\psi)$ one can run the following abstract version of Algorithm \ref{alg:delta-pf} (recall from Section \ref{sec:bound} that we will only consider multinomial resampling). Define for $n\geq 1$, $\mu\in\mathscr{P}(\mathsf{E}_{n-1}\times\mathsf{E}_{n-1})$, $\varphi\in\mathcal{B}_b(\mathsf{E}_n\times\mathsf{E}_n)$
$$
\pmb{\check{\phi}}_n(\mu)(\varphi) = \frac{\mu(\pmb{\check{G}}_{n-1}\pmb{\check{M}}_n(\varphi))}{\mu(\pmb{\check{G}}_{n-1})}.
$$
The algorithm begins by generating $u_0^i\in\mathsf{E}_0\times\mathsf{E}_0$, $i\in\{1,\dots,N\}$ with joint law
$$
\prod_{i=1}^N \check{\eta}_0(\ud u_0^i) = \prod_{i=1}^N \pmb{\check{\eta}}_0(\ud u_0^i).
$$
Defining 
$$
\pmb{\check{\eta}}_0^N(\ud u_0) = \frac{1}{N}\sum_{i=1}^N \delta_{u_0^i}(\ud u_0)
$$
we then generate  $u_1^i\in\mathsf{E}_1\times\mathsf{E}_1$, $i\in\{1,\dots,N\}$ with joint law
$$
\prod_{i=1}^N \pmb{\check{\phi}}_1(\pmb{\check{\eta}}_0^N)(\ud u_1^i).
$$
This proceeds recursively, so the joint law of the particles up to time $n$ is
$$
\Big(\prod_{i=1}^N \pmb{\check{\eta}}_0(\ud u_0^i)\Big)\Big(\prod_{p=1}^n\prod_{i=1}^N \pmb{\check{\phi}}_p(\pmb{\check{\eta}}_{p-1}^N)(\ud u_p^i)\Big).
$$
Hence we have the estimate
$$
\pmb{\check{\gamma}}^N_{n}(\psi) = \Big(\prod_{p=0}^{n-1}\pmb{\check{\eta}}_p^N(\pmb{\check{G}}_p)\Big)\pmb{\check{\eta}}^N_n(\psi).
$$
\begin{remark}\label{rem:key_point}
Note that $\pmb{\check{\gamma}}^N_{n}(\psi)$ corresponds to the quantity $\Delta_\ell(\varphi)$ in \eqref{eq:Delta} from the $\Delta$PF output (Algorithm \ref{alg:delta-pf}).
\end{remark}

\subsection{General hidden Markov model case}\label{app_sec:delta_gen}

Define for $p\geq 1$ the semigroup
$$
\pmb{\check{Q}}_p(x_{0:p-1},\ud x_{0:p}') = \pmb{\check{G}}_{p-1}(x_{0:p-1}) \pmb{\check{M}}_p(x_{0:p-1},\ud x_{0:p}')
$$
with the definition for $0\leq p \leq n$, $\varphi\in\mathcal{B}_b(\mathsf{E}_n\times\mathsf{E}_n)$
$$
\pmb{\check{Q}}_{p,n}(\varphi)(u_p) = \int \varphi(u_n) \prod_{j=p+1}^n  \pmb{\check{Q}}_j(u_{j-1},\ud u_{j})
$$
if $p=n$ clearly $\pmb{\check{Q}}_{n,n}$ is the identity operator. For any $0\leq n$, $\varphi\in\mathcal{B}_b(\mathsf{E}_n\times\mathsf{E}_n)$
we set $\pmb{\check{Q}}_{-1,n}(\varphi)(u_{-1})=0$.

Now following \citep[Chapter 7]{del-moral} we have the following martingale (w.r.t.~the natural filtration of the particle system), $\varphi\in\mathcal{B}_b(\mathsf{E}_n\times\mathsf{E}_n)$:
\begin{equation}\label{eq:martingale}
\pmb{\check{\gamma}}^N_{n}(\varphi) - \pmb{\check{\gamma}}_{n}(\varphi) = 
\sum_{p=0}^n  \pmb{\check{\gamma}}_{p}^N(1)[\pmb{\check{\eta}}_p^N-\pmb{\check{\phi}}_p(\pmb{\check{\eta}}_{p-1}^N)](\pmb{\check{Q}}_{p,n}(\varphi))
\end{equation}
with the convention that $\pmb{\check{\phi}}_p(\pmb{\check{\eta}}_{p-1}^N)=\pmb{\check{\eta}}_0$ if $p=0$. The representation immediately establishes
that 
$$
\mathbb{E}[\pmb{\check{\gamma}}^N_{n}(\varphi)]=  \pmb{\check{\gamma}}_{n}(\varphi)
$$
where the expectation is w.r.t.~the law associated to the particle system. 
We will use the following convention that $C'$ is a finite positive constant that does not depend upon $n,N$ or any of the $G_n$, $M_n^{s}$ ($s\in\{F,C\}$, $\check{M}_n$.
The value of $C'$ may change from line-to-line.
Define for $0\leq p \leq n<\infty$
$$
\overline{G}_{p,n} = \prod_{q=p}^{n}\|G_q\|
$$
with the convention that if $p=0$ we write $\overline{G}_{n}$.
We have the following result.

\begin{proposition}\label{prop:gen_case}
  Suppose that $\|G_n\|<\infty$ for each $n\geq 0$. Then there exist a $C'<\infty$ such that for any $n\geq 0$, $\varphi\in\mathcal{B}_b(\mathsf{E}_n\times\mathsf{E}_n)$  
$$
\mathbb{E}\Big[\Big(\pmb{\check{\gamma}}^N_{n}(\varphi) - \pmb{\check{\gamma}}_{n}(\varphi)\Big)^2\Big]
\leq \frac{C'}{N}\sum_{p=0}^n \overline{G}_{p-1}^2 \mathbb{E}[\pmb{\check{Q}}_{p,n}(\varphi)(u_p^1)^2].
$$
\end{proposition}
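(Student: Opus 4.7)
The plan is to exploit the martingale decomposition \eqref{eq:martingale} together with orthogonality, after which each term is controlled by an elementary i.i.d.\ variance argument. Let $\mathcal{F}_p^N \defeq \sigma(u_0^{1:N},\ldots,u_p^{1:N})$ denote the particle filtration. Under the multinomial resampling considered here, conditionally on $\mathcal{F}_{p-1}^N$ the particles $u_p^{1:N}$ are i.i.d.\ draws from $\pmb{\check{\phi}}_p(\pmb{\check{\eta}}_{p-1}^N)$. Consequently each summand
\[
M_p^N(\varphi) \defeq \pmb{\check{\gamma}}_{p}^N(1)\big[\pmb{\check{\eta}}_p^N-\pmb{\check{\phi}}_p(\pmb{\check{\eta}}_{p-1}^N)\big](\pmb{\check{Q}}_{p,n}(\varphi))
\]
is a martingale difference: the prefactor $\pmb{\check{\gamma}}_p^N(1) = \prod_{q=0}^{p-1}\pmb{\check{\eta}}_q^N(\pmb{\check{G}}_q)$ is $\mathcal{F}_{p-1}^N$-measurable, while $\pmb{\check{\eta}}_p^N(\pmb{\check{Q}}_{p,n}(\varphi))$ has conditional mean $\pmb{\check{\phi}}_p(\pmb{\check{\eta}}_{p-1}^N)(\pmb{\check{Q}}_{p,n}(\varphi))$ by construction. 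Orthogonality of martingale differences then yields
\[
    \mathbb{E}\big[(\pmb{\check{\gamma}}_n^N(\varphi) - \pmb{\check{\gamma}}_n(\varphi))^2\big] = \sum_{p=0}^n \mathbb{E}\big[M_p^N(\varphi)^2\big].
\]

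Next I would bound each term on the right. Conditional on $\mathcal{F}_{p-1}^N$, the scalar $N\,[\pmb{\check{\eta}}_p^N - \pmb{\check{\phi}}_p(\pmb{\check{\eta}}_{p-1}^N)](\pmb{\check{Q}}_{p,n}(\varphi))$ is a sum of $N$ i.i.d.\ centred random variables, hence
\[
\mathbb{E}\Big[\big[\pmb{\check{\eta}}_p^N - \pmb{\check{\phi}}_p(\pmb{\check{\eta}}_{p-1}^N)\big](\pmb{\check{Q}}_{p,n}(\varphi))^2 \,\Big|\, \mathcal{F}_{p-1}^N\Big] \le \frac{1}{N}\,\mathbb{E}\big[\pmb{\check{Q}}_{p,n}(\varphi)(u_p^1)^2 \,\big|\, \mathcal{F}_{p-1}^N\big].
\]
The prefactor admits the deterministic bound
\[
\pmb{\check{\gamma}}_p^N(1) = \prod_{q=0}^{p-1}\pmb{\check{\eta}}_q^N(\pmb{\check{G}}_q) \le \prod_{q=0}^{p-1}\|G_q\| = \overline{G}_{p-1},
\]
where the inequality uses $\pmb{\check{G}}_q(x) = \tfrac12(G_q(x^F) + G_q(x^C)) \le \|G_q\|$ from Assumption~\ref{a:coupled-fk}\eqref{a:coupled-fk-potential}. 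Inserting these two estimates and invoking the tower property gives
\[
\mathbb{E}[M_p^N(\varphi)^2] \le \frac{\overline{G}_{p-1}^2}{N}\,\mathbb{E}[\pmb{\check{Q}}_{p,n}(\varphi)(u_p^1)^2],
\]
and summing over $p=0{:}n$ delivers the announced bound with $C'=1$ (using the conventions $\overline{G}_{-1}=1$ and $\pmb{\check{\phi}}_0(\pmb{\check{\eta}}_{-1}^N)\defeq \pmb{\check{\eta}}_0$ at $p=0$).

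No genuine obstacle stands in the way: the argument is an adaptation of \citep[Ch.~7]{del-moral} to the coupled Feynman--Kac model, and the only coupling-specific input is the pointwise estimate $\pmb{\check{G}}_q \le \|G_q\|$. The one point requiring mild care is keeping the filtration and the distinction between the resampled particles $u_p^{1:N}$ and their common parent measure $\pmb{\check{\phi}}_p(\pmb{\check{\eta}}_{p-1}^N)$ correctly aligned; this is where restricting to multinomial resampling (as flagged before Theorem~\ref{theo:mainthm}) keeps the conditional i.i.d.\ structure clean and makes the variance evaluation immediate.
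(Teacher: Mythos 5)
Your proof is correct and follows the same skeleton as the paper's: the martingale decomposition \eqref{eq:martingale}, the deterministic bound $\pmb{\check{\gamma}}_p^N(1)\le \overline{G}_{p-1}$ on the prefactor, and a conditional second-moment bound on the fluctuation term exploiting the i.i.d.\ structure of multinomial resampling. The only difference is in the two inequalities invoked: the paper uses the Burkholder--Davis--Gundy inequality where you use orthogonality of martingale differences, and the conditional Marcinkiewicz--Zygmund inequality (followed by the $C_2$ and Jensen inequalities) where you use the exact variance formula for an average of $N$ conditionally i.i.d.\ centred variables together with $\var(\uarg)\le \E[(\uarg)^2]$. Since only the second moment is needed here, your more elementary route is legitimate and even yields the explicit constant $C'=1$; the paper's heavier tools are the natural choice when one wants the same argument to extend to $L^q$ bounds for $q>2$, but for this statement both arguments are sound and essentially interchangeable.
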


\begin{proof}
Set 
$$
\pmb{\check{S}}_{p,n}^N(\varphi)
=
\pmb{\check{\gamma}}_{p}^N(1)[\pmb{\check{\eta}}_p^N-\pmb{\check{\phi}}_p(\pmb{\check{\eta}}_{p-1}^N)](\pmb{\check{Q}}_{p,n}(\varphi))
$$
By \eqref{eq:martingale}, one can apply the Burkholder-Gundy-Davis inequality to obtain
\begin{equation}\label{eq:prf1}
\mathbb{E}\Big[\Big(\pmb{\check{\gamma}}^N_{n}(\varphi) - \pmb{\check{\gamma}}_{n}(\varphi)\Big)^2\Big] \leq 
C'
\sum_{p=0}^n 
\mathbb{E}[\pmb{\check{S}}_{p,n}^N(\varphi)^2].
\end{equation}
Now, we have that
$$
\mathbb{E}[\pmb{\check{S}}_{p,n}^N(\varphi)^2]
\leq \overline{G}_{p-1}^2\mathbb{E}[[\pmb{\check{\eta}}_p^N-\pmb{\check{\phi}}_p(\pmb{\check{\eta}}_{p-1}^N)](\pmb{\check{Q}}_{p,n}(\varphi))^2].
$$
Application of the (conditional) Marcinkiewicz-Zygmund inequality yields
$$
\mathbb{E}[\pmb{\check{S}}_{p,n}^N(\varphi)^2]
\leq \frac{C' \overline{G}_{p-1}^2}{N}
\mathbb{E}\Big[
\Big(
  \pmb{\check{Q}}_{p,n}(\varphi)(u_p^1) -
\pmb{\check{\phi}}_p(\pmb{\check{\eta}}_{p-1}^N)( \pmb{\check{Q}}_{p,n}(\varphi) )
  \Big)^2\Big].
$$
After applying $C_2$ and Jensen inequalities, we then conclude by \eqref{eq:prf1}.
\end{proof}

\subsection{Diffusion case}\label{sec:diff_case}

We now consider the model of Section \ref{sec:bound}, where we recall that $\theta$ is omitted from the notation. 
A series of technical results are given and the proofs for Theorem \ref{theo:mainthm} and Corollary \ref{theo:maincor} are given at the end of this section.

We recall that the joint probability density of the observations and the unobserved diffusion at the observation times 
is given by
$$
\prod_{p=0}^n G_{p}(x_{p}) Q^{(\infty)}(x_{p-1},x_{p}).
$$
As the true dynamics can not be simulated, in practice we work with
$$
\prod_{p=0}^n G_{p}(x_p) Q^{(\ell)}(x_{p-1},x_{p}).
$$
Recall an (Euler) approximation scheme
with discretisation $h_{\ell}=2^{-\ell}$, $\ell\geq 0$.
In our context then, $M_n^F$ corresponds $Q^{(\ell)}$ ($\ell\geq 1$) and
$M_n^C$ corresponds $Q^{(\ell-1)}$. 
The initial distribution $\eta_0$ is simply the (Euler) kernel started at some given $x_0$.
As noted earlier in Remark \ref{delta-pf-remark}\eqref{delta-pf-remark-kernel}, a natural coupling of $M_n^F$ and $M_n^C$ (and hence of $\eta_0$) exists. 
As established in \cite[eq.~(32)]{mlpf} one has (uniformly in $\thetaone $ as Assumption \ref{a:diffusion} holds with $\thetaone $ independent constants)
for $C'<\infty$
\begin{equation}\label{eq:markov_cont}
\sup_{\mathcal{A}}\sup_{x\in\mathsf{X}}|M_n^F(\varphi)(x)-M_n^C(\varphi)(x)| \leq C' h_{\ell}
\end{equation}
where $\mathcal{A}=\{\varphi\in\mathcal{B}_b(\mathsf{X})\cap\textrm{Lip}(\mathsf{X}): \|\varphi\|\leq 1|\}$.
We also recall that \eqref{eq:coup_h_cont} holds (recall Assumption \ref{a:diffusion} is assumed).

We will use $M<\infty$ to denote a constant that may change
from line-to-line. It will not depend upon $\theta$ nor $N$, $\ell$, but may depend on the time parameter or a function.
The following result will be needed later on. The proof is given after the proof of Lemma \ref{lem:res1} below.

\begin{proposition}\label{prop:h_smooth}
Assume (A\ref{hyp:1} (i)-(ii),\ref{hyp:2}). Then for any $n\geq 0$ and $\varphi\in\mathcal{B}_b(\mathsf{X}^{n+1})\cap\textrm{\emph{Lip}}(\mathsf{X}^{n+1})$
there exists a $M<\infty$ such that
$$
|\gamma_n^F(G_n\varphi) - \gamma_n^C(G_n\varphi)| \leq M h_{\ell}
$$
\end{proposition}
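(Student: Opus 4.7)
The plan is to use a hybrid-measure telescoping argument, interpolating between $\gamma_n^F$ and $\gamma_n^C$ by swapping the transition kernels one at a time, combined with a backward-induction Lipschitz-preservation argument for the partial integrals.

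First, I would define the hybrid unnormalised measures for $j=0,1,\dots,n$:
\[
\gamma_n^{(j)}(G_n\varphi) \defeq \int G_n(x_n)\varphi(x_{0:n})\Big(\prod_{p=0}^{n-1}G_p(x_p)\Big)\eta_0(\ud x_0)\prod_{p=1}^{j}M_p^F(x_{p-1},\ud x_p)\prod_{p=j+1}^{n}M_p^C(x_{p-1},\ud x_p),
\]
so that $\gamma_n^{(n)}=\gamma_n^F(G_n\varphi)$ and $\gamma_n^{(0)}=\gamma_n^C(G_n\varphi)$, and write the telescoping sum $\gamma_n^F(G_n\varphi)-\gamma_n^C(G_n\varphi)=\sum_{j=1}^n[\gamma_n^{(j)}-\gamma_n^{(j-1)}]$. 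The $j$-th difference isolates a single swap of $M_j^F$ for $M_j^C$ acting on the backward iterate
\[
H_k^C(x_{0:k}) \defeq \int G_n(x_n)\varphi(x_{0:n})\Big(\prod_{p=k}^{n-1}G_p(x_p)\Big)\prod_{p=k+1}^n M_p^C(x_{p-1},\ud x_p),
\]
with $H_n^C(x_{0:n})=G_n(x_n)\varphi(x_{0:n})$ and the recursion $H_k^C(x_{0:k})=G_k(x_k)M_{k+1}^C\big(H_{k+1}^C(x_{0:k},\cdot)\big)(x_k)$.

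The key technical step is the claim that, for each $k$, $H_k^C\in\mathcal{B}_b(\mathsf{X}^{k+1})\cap\mathrm{Lip}(\mathsf{X}^{k+1})$ with a bound and Lipschitz constant that depend on $n$ and $\varphi$ but are uniform in $\ell$. I would prove this by backward induction on $k$. At $k=n$, the product $G_n(x_n)\varphi(x_{0:n})$ is bounded Lipschitz by A\ref{hyp:1}(i)--(ii) and the assumption on $\varphi$. For the inductive step, boundedness follows from $\|G_k\|<\infty$ and the inductive bound on $H_{k+1}^C$, while Lipschitz continuity in $x_k$ follows from A\ref{hyp:2} applied to $x_k\mapsto H_{k+1}^C(x_{0:k},x_k)$ (together with boundedness of $G_k$ and Lipschitzness of $G_k$ from A\ref{hyp:1}(ii) via a standard product rule), and Lipschitz continuity in $x_0,\dots,x_{k-1}$ follows directly because these variables enter only as parameters in the Lipschitz function $H_{k+1}^C$ and integration preserves Lipschitzness with the same constant.

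Given the claim, for each $j$ the difference can be written as
\[
\gamma_n^{(j)}-\gamma_n^{(j-1)} = \int \Big(\prod_{p=0}^{j-1}G_p(x_p)\Big)\eta_0(\ud x_0)\Big(\prod_{p=1}^{j-1}M_p^F(x_{p-1},\ud x_p)\Big)[M_j^F-M_j^C]\big(H_j^C(x_{0:j-1},\cdot)\big)(x_{j-1}).
\]
Because $y\mapsto H_j^C(x_{0:j-1},y)$ is bounded Lipschitz with constants independent of $x_{0:j-1}$ and $\ell$, applying \eqref{eq:markov_cont} (after rescaling to $\mathcal{A}$ by the uniform bound and Lipschitz constant of $H_j^C$) yields $|[M_j^F-M_j^C](H_j^C(x_{0:j-1},\cdot))(x_{j-1})|\leq Mh_\ell$ uniformly, and the outer integral is bounded by $\prod_{p<j}\|G_p\|<\infty$ by A\ref{hyp:1}(i). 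Summing over $j=1,\dots,n$ gives the desired bound $|\gamma_n^F(G_n\varphi)-\gamma_n^C(G_n\varphi)|\leq Mh_\ell$.

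The main obstacle is the Lipschitz-preservation claim: one must carefully track that the Lipschitz constants of $H_k^C$ in all $k+1$ coordinates stay finite uniformly in $\ell$, which relies crucially on the fact that A\ref{hyp:2} produces a Lipschitz constant depending on $\varphi$ but not on the discretisation level, and that the multiplicative coupling through $G_k$ does not degrade this bound along the backward recursion.
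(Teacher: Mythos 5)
Your proposal is correct and follows essentially the same route as the paper: the hybrid-measure telescoping is exactly the paper's ``collapsing sum'' over single kernel swaps, your backward-induction claim that $H_k^C$ is bounded Lipschitz is the content of Lemma~\ref{lem:res1} (with the factor $G_k$ folded in), and the final step applies \eqref{eq:markov_cont} to the one-step perturbation and bounds the outer integral by $\prod_{p}\|G_p\|$ just as in the paper. No substantive differences to report.
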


We write expectations w.r.t.~the time-inhomogeneous Markov chain associated to the sequence of kernels $(M_p^F)_{p\geq 1}$ (resp.~$(M_p^C)_{p\geq 1}$) 
as $\mathbb{E}^F$, (resp.~$\mathbb{E}^C$). 

\begin{lemma}\label{lem:res1}
Assume (A\ref{hyp:1}(i)-(ii),\ref{hyp:2}). Let $s\in\{F,C\}$ and $\varphi\in\mathcal{B}_b(\mathsf{X}^{n+1})\cap\textrm{\emph{Lip}}(\mathsf{X}^{n+1})$,
then, define the function for $0\leq p \leq n$
$$
\varphi_{p,n}^s(x_{0:p}) := \mathbb{E}^s[\varphi(x_{0:p},X_{p+1:n})\prod_{q=p+1}^n G_q(X_q)|x_p].
$$
Then we have that $\varphi_{p,n}^s\in\mathcal{B}_b(\mathsf{X}^{p+1})\cap\textrm{\emph{Lip}}(\mathsf{X}^{p+1})$. 
\end{lemma}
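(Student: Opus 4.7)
The plan is to prove this by backward induction on $p$, from $p=n$ down to $p=0$. The base case $p=n$ is immediate since the product over an empty index set equals $1$ and so $\varphi_{n,n}^s = \varphi$, which is in $\mathcal{B}_b(\mathsf{X}^{n+1})\cap\textrm{Lip}(\mathsf{X}^{n+1})$ by hypothesis.

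For the inductive step, the key identity is
\[
\varphi_{p,n}^s(x_{0:p}) \;=\; \int_{\mathsf{X}} M_{p+1}^s(x_p, \ud z)\, G_{p+1}(z)\, \varphi_{p+1,n}^s(x_{0:p}, z),
\]
obtained by conditioning on $X_{p+1}$ and using the Markov property (with $\varphi_{p+1,n}^s$ defined as in the lemma). Boundedness of $\varphi_{p,n}^s$ is then immediate from $\|G_{p+1}\|<\infty$ (A\ref{hyp:1}(i)) and the inductive bound $\|\varphi_{p+1,n}^s\|<\infty$. For the Lipschitz bound, given $x_{0:p}, y_{0:p}\in \mathsf{X}^{p+1}$, I would add and subtract to split
\[
\varphi_{p,n}^s(x_{0:p}) - \varphi_{p,n}^s(y_{0:p}) \;=\; T_1 + T_2
\]
where
\begin{align*}
T_1 &\defeq \int M_{p+1}^s(x_p,\ud z)\, G_{p+1}(z)\big[ \varphi_{p+1,n}^s(x_{0:p},z) - \varphi_{p+1,n}^s(y_{0:p},z)\big],\\
T_2 &\defeq \int \big[M_{p+1}^s(x_p,\ud z) - M_{p+1}^s(y_p,\ud z)\big]\, G_{p+1}(z)\, \varphi_{p+1,n}^s(y_{0:p},z).
\end{align*}

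The first term $T_1$ is handled directly by the inductive Lipschitz bound on $\varphi_{p+1,n}^s$ together with $\|G_{p+1}\|<\infty$, giving $|T_1|\le \|G_{p+1}\|\, C_{p+1}\, |x_{0:p} - y_{0:p}|$. For $T_2$, one first checks that the integrand $z\mapsto G_{p+1}(z)\,\varphi_{p+1,n}^s(y_{0:p},z)$ lies in $\textrm{Lip}(\mathsf{X})\cap \mathcal{B}_b(\mathsf{X})$ (as a product of two bounded Lipschitz functions, using A\ref{hyp:1}(i)--(ii) and the inductive hypothesis). Assumption (A\ref{hyp:2}) then yields $|T_2| \le C'|x_p - y_p|\le C'|x_{0:p} - y_{0:p}|$, completing the induction.

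The main obstacle I expect is a subtle one regarding (A\ref{hyp:2}): the constant $C'$ in the statement is allowed to depend on the test function, so one must verify that the family $\{z\mapsto G_{p+1}(z)\,\varphi_{p+1,n}^s(y_{0:p},z): y_{0:p}\in \mathsf{X}^{p+1}\}$ has a uniform Lipschitz modulus and uniform supremum norm. This does hold, as the relevant Lipschitz constant is bounded by $\mathrm{Lip}(G_{p+1})\|\varphi_{p+1,n}^s\| + \|G_{p+1}\|\mathrm{Lip}(\varphi_{p+1,n}^s)$, both pieces being independent of $y_{0:p}$ thanks to the inductive hypothesis. In the form typically proved for globally Lipschitz (Euler or continuous-time) diffusions, $C'$ in (A\ref{hyp:2}) scales linearly in these quantities, so the argument closes uniformly in $y_{0:p}$, yielding the claimed Lipschitz constant for $\varphi_{p,n}^s$.
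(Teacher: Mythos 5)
Your proof is correct and follows essentially the same route as the paper's: backward induction with the same add-and-subtract decomposition, one term varying the path prefix $x_{0:p}$ (handled by the inductive Lipschitz bound together with $\|G_{p+1}\|<\infty$) and the other varying the conditioning point (handled by applying (A2) to the bounded Lipschitz integrand $z\mapsto G_{p+1}(z)\varphi_{p+1,n}^s(y_{0:p},z)$). Your explicit check that the constant in (A2) can be taken uniform over the family of integrands indexed by $y_{0:p}$ is a point the paper leaves implicit, but it does not change the argument.
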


\begin{proof}
The case $p=n$ follows immediately from $\varphi\in\mathcal{B}_b(\mathsf{X}^{n+1})\cap\textrm{Lip}(\mathsf{X}^{n+1})$.
We will use a backward inductive argument on $p$. Suppose $p=n-1$ then we have for any $(x_{0:n-1},x_{0:n-1}')\in \mathsf{X}^n\times \mathsf{X}^n$
$$
|\varphi_{n-1,n}^s(x_{0:n-1})-\varphi_{n-1,n}^s(x_{0:n-1}')| = 
$$
$$
|\mathbb{E}^s[\varphi(x_{0:n-1},X_{n})G_n(X_n)|x_{n-1}]-\mathbb{E}^s[\varphi(x_{0:n-1}',X_{n})G_n(X_n)|x_{n-1}']| \leq
$$
$$
|\mathbb{E}^s[\varphi(x_{0:n-1},X_{n})G_n(X_n)|x_{n-1}]-\mathbb{E}^s[\varphi(x_{0:n-1}',X_{n})G_n(X_n)|x_{n-1}]| + 
$$
$$
|\mathbb{E}^s[\varphi(x_{0:n-1}',X_{n})G_n(X_n)|x_{n-1}]-\mathbb{E}^s[\varphi(x_{0:n-1}',X_{n})G_n(X_n)|x_{n-1}']|
$$
By $\varphi\in\textrm{Lip}(\mathsf{X}^{n+1})$ it easily follows via (A\ref{hyp:1}(i)) that
$$
|\mathbb{E}^s[\varphi(x_{0:n-1},X_{n})G_n(X_n)|x_{n-1}]-\mathbb{E}^s[\varphi(x_{0:n-1}',X_{n})G_n(X_n)|x_{n-1}]| \leq M
\sum_{j=0}^{n-1} |x_j-x_j'|.
$$
By (A\ref{hyp:1}(ii)) and $\varphi\in\textrm{Lip}(\mathsf{X}^{n+1})$, $\varphi(x_{0:n})G_n(x_n)$ is Lipschitz in $x_n$ and hence by (A\ref{hyp:2})
\begin{equation}\label{eq:t11}
|\mathbb{E}^s[\varphi(x_{0:n-1}',X_{n})G_n(X_n)|x_{n-1}]-\mathbb{E}^s[\varphi(x_{0:n-1}',X_{n})G_n(X_n)|x_{n-1}']|
\leq M|x_{n-1}-x_{n-1}'|.
\end{equation}
Hence it follows
$$
|\varphi_{n-1,n}^s(x_{0:n-1})-\varphi_{n-1,n}^s(x_{0:n-1}')| \leq M \sum_{j=0}^{n-1} |x_j-x_j'|.
$$
The induction step follows by almost the same argument as above and is hence omitted.
\end{proof}

\begin{proof}[Proof of Proposition \ref{prop:h_smooth}]
We have the following standard collapsing sum representation:
\begin{eqnarray*}
\gamma_n^F(G_n\varphi) - \gamma_n^C(G_n\varphi)  & = &
\sum_{p=0}^n\Bigg(
\mathbb{E}^F[\prod_{q=0}^p G_q(X_q)\mathbb{E}^C[\varphi(X_{0:n})\prod_{q=p+1}^n G_q(X_q)|X_p]]
 - \\ & & 
\mathbb{E}^F[\prod_{q=0}^{p-1} G_q(X_q)\mathbb{E}^C[\varphi(X_{0:n})\prod_{q=p}^n G_q(X_q)|X_{p-1}]]\Bigg)
\end{eqnarray*}

The summand is
$$
T_p:=\mathbb{E}^F\Big[\Big(\prod_{q=0}^{p-1} G_q(X_q)\Big) (\mathbb{E}^F-\mathbb{E}^C)\Big(\mathbb{E}^C[\varphi(X_{0:n})\prod_{q=p+1}^n G_q(X_q)|X_p]G_p(X_p)\Big|X_{p-1}\Big)\Big].
$$
By Lemma \ref{lem:res1}, $\mathbb{E}^C[\varphi(x_{0:p},X_{p+1:n})\prod_{q=p+1}^n G_q(X_q)|x_p]\in\mathcal{B}_b(\mathsf{X}^{p+1})\cap\textrm{Lip}(\mathsf{X}^{p+1})$
and by (A\ref{hyp:1}) (i) and (ii) $G_p\in\mathcal{B}_b(\mathsf{X})\cap\textrm{Lip}(\mathsf{X})$. So by \eqref{eq:markov_cont}
$$
\Big|(\mathbb{E}^F-\mathbb{E}^C)\Big(\mathbb{E}^C[\varphi(X_{0:n})\prod_{q=p+1}^n G_q(X_q)|X_p]G_p(X_p)\Big|X_{p-1}\Big)\Big|
\leq 
$$
$$
M h_{\ell} \sup_{x_{0:p}\in\mathsf{X}^{p+1}}|\mathbb{E}^C[\varphi(x_{0:p},X_{p+1:n})\prod_{q=p+1}^n G_q(X_q)|
$$
and hence
$$
|T_p| \leq M h_{\ell} \mathbb{E}^F[\prod_{q=0}^{p-1} G_q(X_q)] \sup_{x_{0:p}\in\mathsf{X}^{p+1}}|\mathbb{E}^C[\varphi(x_{0:p},X_{p+1:n})\prod_{q=p+1}^n G_q(X_q)|x_p]G_p(x_p)|.
$$
Application of (A\ref{hyp:1}) (i) gives $|T_p| \leq M h_{\ell}$ and the proof is hence concluded. 
\end{proof}

\begin{lemma}\label{lem:res4}
Assume (A\ref{hyp:1}). Then for any $n\geq 0$ there exists a $M<\infty$ such that for any $x_{0:n}\in\mathsf{X}^{2(n+1)}$
$$
\Big|\prod_{p=0}^n \frac{G_p(x_p^F)}{\check{G}_p(x_p)} - \prod_{p=0}^n \frac{G_p(x_p^C)}{\check{G}_p(x_p)} \Big| \leq M \sum_{p=0}^n |x_p^F-x_p^C|.
$$
\end{lemma}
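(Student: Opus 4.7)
The plan is to apply the standard telescoping identity for products, namely
$$
\prod_{p=0}^n a_p - \prod_{p=0}^n b_p = \sum_{p=0}^n \Big(\prod_{q=0}^{p-1} a_q\Big)(a_p-b_p)\Big(\prod_{q=p+1}^n b_q\Big),
$$
with $a_p \defeq G_p(x_p^F)/\check{G}_p(x_p)$ and $b_p \defeq G_p(x_p^C)/\check{G}_p(x_p)$. The factor-wise difference then becomes
$$
a_p - b_p = \frac{G_p(x_p^F) - G_p(x_p^C)}{\check{G}_p(x_p)},
$$
which is already linear in $G_p(x_p^F)-G_p(x_p^C)$, so the Lipschitz assumption (A\ref{hyp:1}(ii)) on $G_p$ gives $|G_p(x_p^F)-G_p(x_p^C)| \le C|x_p^F - x_p^C|$ for some constant $C$.

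Next I would collect the uniform bounds that make the telescoping harmless. From the definition $\check{G}_p(x_p) = \tfrac{1}{2}(G_p(x_p^F)+G_p(x_p^C))$ and (A\ref{hyp:1}(iii)), we have $\check{G}_p(x_p) \ge \inf_{x\in\mathsf{X}} G_p(x) > 0$, so each denominator is bounded away from zero. At the same time $G_p(x_p^F) \le G_p(x_p^F)+G_p(x_p^C) = 2\check{G}_p(x_p)$, hence $a_p \le 2$, and symmetrically $b_p \le 2$. Combining, $|a_p-b_p| \le M_0\,|x_p^F - x_p^C|$ for a finite constant $M_0$ depending only on $G_p$, and the prefactors $\prod_{q<p} a_q$ and $\prod_{q>p} b_q$ are each bounded by $2^n$.

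Plugging these estimates into the telescoping identity and absorbing the $n$-dependent constants yields
$$
\Big|\prod_{p=0}^n a_p - \prod_{p=0}^n b_p\Big| \le \sum_{p=0}^n 2^n M_0 \,|x_p^F - x_p^C| \le M \sum_{p=0}^n |x_p^F - x_p^C|
$$
for some $M<\infty$ depending on $n$ and $(G_p)_{p=0}^n$ but not on the $x_p$, which is the desired bound. There is no real obstacle here; the only point to handle carefully is checking that the uniform lower bound on $\check G_p$ and the uniform upper bound on $a_p, b_p$ really hold, both of which follow directly from the averaging structure of $\check G_p$ together with (A\ref{hyp:1}(i),(iii)).
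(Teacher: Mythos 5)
Your proof is correct and follows essentially the same route as the paper's: the paper argues by induction on $n$, peeling off the last factor at each step, which when unrolled is exactly your telescoping identity, and it uses the same ingredients (the Lipschitz property (A\ref{hyp:1})(ii) of $G_p$ for the factor-wise difference, and the lower bound (A\ref{hyp:1})(iii) together with boundedness to control the denominators and the remaining products). The only cosmetic difference is that you bound the ratios $G_p(x_p^s)/\check{G}_p(x_p)$ by $2$ structurally from the averaging definition of $\check{G}_p$, whereas the paper invokes (A\ref{hyp:1})(i) and (iii) for the same purpose.
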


\begin{proof}
The is proof by induction. The case $n=0$:
$$
\Big|\frac{G_0(x_0^F)}{\check{G}_0(x_0)} - \frac{G_0(x_0^C)}{\check{G}_0(x_0)} \Big| = \frac{1}{\check{G}_0(x_0)}|G_0(x_0^F) - G_0(x_0^C)|.
$$
Application of (A\ref{hyp:1}) (ii) and (iii) yield that
$$
\Big|\frac{G_0(x_0^F)}{\check{G}_0(x_0)} - \frac{G_0(x_0^C)}{\check{G}_0(x_0)} \Big| \leq M |x_0^F-x_0^C|.
$$
The result is assumed to hold at rank $n-1$, then 
$$
\Big|\prod_{p=0}^n \frac{G_p(x_p^F)}{\check{G}_p(x_p)} - \prod_{p=0}^n \frac{G_p(x_p^C)}{\check{G}_p(x_p)} \Big| \le
$$
$$
\Big|\frac{G_n(x_n^F)}{\check{G}_n(x_n)} - \frac{G_n(x_n^C)}{\check{G}_n(x_n)}\Big|\cdot
\prod_{p=0}^{n-1} \frac{G_p(x_p^F)}{\check{G}_p(x_p)}
+ \Big|\prod_{p=0}^{n-1} \frac{G_p(x_p^F)}{\check{G}_p(x_p)} - \prod_{p=0}^{n-1} \frac{G_p(x_p^C)}{\check{G}_p(x_p)}\Big|
\cdot \frac{G_n(x_n^C)}{\check{G}_n(x_n)}.
$$
For the first term of the R.H.S.~one can follow the argument at the initialisation and apply (A\ref{hyp:1}) (i) and (iii). For the second term of the R.H.S., the induction
hypothesis and (A\ref{hyp:1}) (i) and (iii) can be used. That is one can deduce that 
$$
\Big|\prod_{p=0}^n \frac{G_p(x_p^F)}{\check{G}_p(x_p)} - \prod_{p=0}^n \frac{G_p(x_p^C)}{\check{G}_p(x_p)} \Big| \leq M \sum_{p=0}^n |x_p^F-x_p^C|.
$$
\end{proof}

Recall \eqref{eq:psi_def} for the definition of $\psi$ and that
$x_p=(x_p^F,x_p^C)\in\mathsf{X}\times\mathsf{X}$.

\begin{lemma}\label{lem:res2}
Assume (A\ref{hyp:1}-\ref{hyp:2}). Then 
for any $0\leq p < n$, $\varphi\in\mathcal{B}_b(\mathsf{X}^{n+1})\cap\textrm{\emph{Lip}}(\mathsf{X}^{n+1})$ there exists a $M<\infty$ such that for any
$x_{0:p}\in\mathsf{E}_p\times\mathsf{E}_p$
$$
|\pmb{\check{Q}}_{p,n}(\psi)(x_{0:p})| \leq  M \Big(\sum_{j=0}^p |x_j^F-x_j^C| + h_{\ell}\Big)
$$
\end{lemma}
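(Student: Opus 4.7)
The plan is to obtain an explicit closed-form expression for $\pmb{\check{Q}}_{p,n}(\psi)(x_{0:p})$ by exploiting the telescoping structure of $\psi$ (cf.~\eqref{eq:psi_def}) together with the coupling property \eqref{eq:coupling-mss}, and then to bound this expression by splitting it into a ``weak error'' term (controlled by $h_\ell$) and a ``Lipschitz'' term (controlled by $\sum_j |x_j^F - x_j^C|$).

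First I would compute $\pmb{\check{Q}}_{p,n}(\psi)(x_{0:p})$ directly. Since $\pmb{\check{Q}}_j = \pmb{\check{G}}_{j-1} \pmb{\check{M}}_j$, the product $\prod_{j=p+1}^n \check{G}_{j-1}(x_{j-1})$ cancels with a portion of the denominator $\prod_{k=0}^n \check{G}_k(x_k)$ appearing in $\psi$, and the marginal property of $\check{M}_j$ allows me to replace the coupled kernel by $M_j^F$ (resp.~$M_j^C$) on each term. Using the shorthand from Lemma \ref{lem:res1}, the calculation should yield
\begin{equation*}
\pmb{\check{Q}}_{p,n}(\psi)(x_{0:p}) = \frac{1}{\prod_{k=0}^{p-1} \check{G}_k(x_k)} \Big[\prod_{k=0}^{p} G_k(x_k^F)\,\varphi_{p,n}^F(x_{0:p}^F) - \prod_{k=0}^{p} G_k(x_k^C)\,\varphi_{p,n}^C(x_{0:p}^C)\Big].
\end{equation*}
The prefactor is uniformly bounded above by $(\tfrac12 \inf G_k)^{-p}$ by (A\ref{hyp:1})(iii).

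Next I would denote $H^s(x_{0:p}^s) \defeq \prod_{k=0}^{p} G_k(x_k^s)\,\varphi_{p,n}^s(x_{0:p}^s)$ for $s\in\{F,C\}$ and perform an add-and-subtract,
\begin{equation*}
H^F(x_{0:p}^F) - H^C(x_{0:p}^C) = \big[H^F(x_{0:p}^F) - H^C(x_{0:p}^F)\big] + \big[H^C(x_{0:p}^F) - H^C(x_{0:p}^C)\big].
\end{equation*}
The second bracket is bounded by $M \sum_{j=0}^{p} |x_j^F - x_j^C|$, because $H^C$ is a product of bounded Lipschitz functions: by (A\ref{hyp:1})(i)--(ii) each $G_k$ is bounded Lipschitz, and by Lemma \ref{lem:res1} so is $\varphi_{p,n}^C$. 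The first bracket reduces (after factoring out the uniformly bounded $\prod_{k=0}^p G_k(x_k^F)$) to $|\varphi_{p,n}^F(x_{0:p}^F) - \varphi_{p,n}^C(x_{0:p}^F)|$.

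The main obstacle will be to obtain the uniform (in the conditioning point) weak-error bound
\begin{equation*}
\sup_{x_{0:p}\in\mathsf{X}^{p+1}} \big|\varphi_{p,n}^F(x_{0:p}) - \varphi_{p,n}^C(x_{0:p})\big| \leq M h_\ell.
\end{equation*}
I would establish this by replaying the telescoping-sum argument of Proposition \ref{prop:h_smooth}, but with the initial distribution $\eta_0$ replaced by the Dirac mass $\delta_{x_p}$ at time $p$. The proof of Proposition \ref{prop:h_smooth} only uses \eqref{eq:markov_cont}, the Lipschitz/bounded regularity of $\varphi_{q,n}^C$ supplied by Lemma \ref{lem:res1}, and the uniform bound $\|G_q\|<\infty$ from (A\ref{hyp:1})(i); all of these are valid independently of the starting point, so each of the $n-p$ telescoping terms is bounded by a multiple of $h_\ell$ uniformly in $x_p$. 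Combining this weak bound with the Lipschitz bound and the uniformly bounded prefactor yields the claimed estimate.
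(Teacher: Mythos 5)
Your proposal is correct and takes essentially the same route as the paper: an explicit expression for $\pmb{\check{Q}}_{p,n}(\psi)(x_{0:p})$ obtained from the coupling marginals, an add-and-subtract decomposition into a Lipschitz part (controlled via Lemma \ref{lem:res1} and (A\ref{hyp:1})) and a weak-error part bounded by $M h_{\ell}$ through the collapsing-sum argument of Proposition \ref{prop:h_smooth}. The only difference is bookkeeping: by cancelling the $\check{G}$'s into a prefactor that is uniformly bounded thanks to (A\ref{hyp:1})(iii), you bypass the paper's Lemma \ref{lem:res4} on differences of the ratio products $\prod_{q} G_q/\check{G}_q$, which the paper instead keeps and treats as its separate term $T_1$ (with the change of conditioning point handled as a further term $T_3$).
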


\begin{proof}
We have 
\begin{align*}
\pmb{\check{Q}}_{p,n}(\psi)(x_{0:p}) = \check{G}_p(x_p) \times 
  \Big(&\prod_{q=0}^p \frac{G_q(x_q^F)}{\check{G}_q(x_q)} \mathbb{E}^F[\varphi(x_{0:p}^F,Y_{p+1:n}) \prod_{s=p+1}^n G_s(X_s^F)|x_p^F] \\
    &-\prod_{q=0}^p \frac{G_q(x_q^C)}{\check{G}_q(x_q)} \mathbb{E}^C[\varphi(x_{0:p}^C,Y_{p+1:n})\prod_{s=p+1}^n G_s(X_s^C)|x_p^C]\Big).
\end{align*}
It then follows that $\pmb{\check{Q}}_{p,n}(\psi)(x_{0:p}) = \check{G}_p(x_p) ( T_1+T_2)$ where
\begin{eqnarray*}
T_1 & = & \Big(\prod_{q=0}^p \frac{G_q(x_q^F)}{\check{G}_q(x_q)} - \prod_{q=0}^p \frac{G_q(x_q^C)}{\check{G}_q(x_q)}\Big) \mathbb{E}^F[\varphi(x_{0:p}^F,Y_{p+1:n})\prod_{s=p+1}^n G_s(X_s^F)|x_p^F] \\
T_2 & = & \prod_{q=0}^p \frac{G_q(x_q^C)}{\check{G}_q(x_q)}\Big(\mathbb{E}^F[\varphi(x_{0:p}^F,Y_{p+1:n})\prod_{s=p+1}^n G_s(X_s^F)|x_p^F]-\mathbb{E}^C[\varphi(x_{0:p}^C,Y_{p+1:n})\prod_{s=p+1}^n G_s(X_s^C)|x_p^C]\Big).
\end{eqnarray*}
By Lemma \ref{lem:res4}, $\varphi\in\mathcal{B}_b(\mathsf{X}^{n+1})\cap\textrm{Lip}(\mathsf{X}^{n+1})$ and (A\ref{hyp:1}) (i)
$$
|T_1| \leq M \sum_{j=0}^p |x_j^F-x_j^C|.
$$

Now $T_2 = T_3 + T_4$ where
\begin{eqnarray*}
T_3 & = & \prod_{q=0}^p \frac{G_q(x_q^C)}{\check{G}_q(x_q)}\Big(\mathbb{E}^F[\varphi(x_{0:p}^F,Y_{p+1:n})\prod_{q=p+1}^n G_s(X_s^F)|x_p^F] - 
\mathbb{E}^F[\varphi(x_{0:p}^F,Y_{p+1:n})\prod_{s=p+1}^n G_s(X_s^F)|x_p^C]\Big) \\
T_4 & = &
\prod_{q=0}^p \frac{G_q(x_q^C)}{\check{G}_q(x_q)}\Big(\mathbb{E}^F[\varphi(x_{0:p}^F,Y_{p+1:n})\prod_{s=p+1}^n G_s(X_s^F)|x_p^C]-\mathbb{E}^C[\varphi(x_{0:p}^C,Y_{p+1:n})\prod_{s=p+1}^n G_s(X_s^C)|x_p^C]\Big).
\end{eqnarray*}
For $T_3$ one can use Lemma \ref{lem:res1} (along with (A\ref{hyp:1}) (i) and (iii)) to get that
$$
|T_3| \leq M \sum_{j=0}^p |x_j^F-x_j^C|.
$$
For $T_4$ a similar collapsing sum argument that is used in the proof of Proposition \ref{prop:h_smooth} can be used to deduce that
$$
|T_4| \leq M h_{\ell}.
$$
One can then conclude the proof via the above bounds (along with (A\ref{hyp:1}) (i)).
\end{proof}

Below $\mathbb{E}$ denotes expectation w.r.t.~the particle system described in Section \ref{sec:delta_pf} started at position $(x,x)$ at time $n=0$ with $x\in \mathsf{X}$, in the diffusion case of Section \ref{sec:diff_case}.  Recall the particle $U_n^i\in\mathsf{E}_n\times\mathsf{E}_n$ at time $n\ge 0$ in path space. We denote by $U_n^{i,s}(j)\in\mathsf{X}$ as the $j\in\{0,\dots,n\}$ component of
particle $i\in\{1,\dots,N\}$ at time $n\geq 0$ of $s\in\{F,C\}$ component. Recall $(U_n^{i,F}(n),U_n^{i,C}(n))$ for $n\ge 1$ is sampled from the kernel $\check{M}_n((\bar{u}_{n-1}^{i,F}(n-1),\bar{u}_{n-1}^{i,C}(n-1)),\uarg)$  where the $\bar{u}$ denotes post-resampling and the component $(U_n^{i,F}(j),U_n^{i,C}(j))=(\bar{u}_{n-1}^{i,F}(j),\bar{u}_{n-1}^{i,C}(j))$ for $j\in\{0,\dots,n-1\}$ is kept the same for the earlier components of the particle.

\begin{lemma}\label{lem:res3}
Assume (A\ref{hyp:1} (i) (iii), \ref{hyp:2}). Then for any $n\geq 0$ there exists a $M<\infty$ such that
$$
\mathbb{E}[\sum_{j=0}^n |U_n^{1,F}(j)-U_n^{1,C}(j)|^2] \leq M h_{\ell}^{\beta}.
$$
where $\beta$ is as in \eqref{eq:coup_h_cont}.
\end{lemma}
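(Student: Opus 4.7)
The plan is to set $D_n^i \defeq \sum_{j=0}^n |U_n^{i,F}(j)-U_n^{i,C}(j)|^2$ and establish a linear recursion of the form $\E[D_n^1] \le C_1 \E[D_{n-1}^1] + C_2 h_\ell^\beta$, from which the result follows by induction since $D_0^1=0$ (particles start at $(x,x)$). By the exchangeability of particles after $\Delta$PF resampling/mutation, $\E[D_n^1] = \frac{1}{N}\sum_{i=1}^N \E[D_n^i]$, so it suffices to work with the empirical average.

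First, I would split the trajectory into the portion $j\in\{0,\dots,n-1\}$ inherited from the ancestor after resampling at time $n-1$ and the new $j=n$ component sampled from $\check{M}_n$. Writing $\bar{U}_{n-1}^i$ for the post-resampling particle, we have $U_n^{i,s}(j)=\bar{U}_{n-1}^{i,s}(j)$ for $j\le n-1$, and $(U_n^{i,F}(n),U_n^{i,C}(n))$ is drawn from $\check{M}_n((\bar{U}_{n-1}^{i,F}(n-1),\bar{U}_{n-1}^{i,C}(n-1)),\uarg)$, whose marginals on the two coordinates coincide with the fine and coarse Euler kernels coupled as in Remark \ref{delta-pf-remark}\eqref{delta-pf-remark-kernel}. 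Applying the strong-error coupling \eqref{eq:coup_h_cont} conditionally gives
\[
\E\bigl[|U_n^{i,F}(n)-U_n^{i,C}(n)|^2 \bigm| \bar{U}_{n-1}^i\bigr] \le M\bigl(|\bar{U}_{n-1}^{i,F}(n-1)-\bar{U}_{n-1}^{i,C}(n-1)|^2 + h_\ell^\beta\bigr) \le M \bar{D}_{n-1}^i + M h_\ell^\beta,
\]
where $\bar{D}_{n-1}^i \defeq \sum_{j=0}^{n-1}|\bar{U}_{n-1}^{i,F}(j)-\bar{U}_{n-1}^{i,C}(j)|^2$. Hence $\E[D_n^i\mid \bar{U}_{n-1}^i] \le (1+M)\bar{D}_{n-1}^i + M h_\ell^\beta$.

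Next I would control the multinomial resampling step using (A\ref{hyp:1})(i),(iii). Because $\inf G_{n-1} \le \check{G}_{n-1} \le \|G_{n-1}\|$ uniformly, the normalised weights satisfy $\check{G}_{n-1}(U_{n-1}^k)/\sum_m \check{G}_{n-1}(U_{n-1}^m) \le \|G_{n-1}\|/(N \inf G_{n-1})$, so
\[
\E\bigl[\bar{D}_{n-1}^i \bigm| (U_{n-1}^k)_{k=1}^N\bigr] = \sum_{k=1}^N \frac{\check{G}_{n-1}(U_{n-1}^k)}{\sum_m \check{G}_{n-1}(U_{n-1}^m)} D_{n-1}^k \le \frac{C}{N}\sum_{k=1}^N D_{n-1}^k,
\]
with $C=\|G_{n-1}\|/\inf G_{n-1}$. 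Taking expectations and using exchangeability yields $\E[\bar{D}_{n-1}^i] \le C\,\E[D_{n-1}^1]$, and combining with the bound from the coupling gives the recursion $\E[D_n^1] \le C(1+M)\,\E[D_{n-1}^1] + M h_\ell^\beta$.

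Iterating this recursion from the base case $\E[D_0^1]=0$ produces $\E[D_n^1] \le M' h_\ell^\beta$ for a constant $M'<\infty$ depending on $n$ (and on the bounds in (A\ref{hyp:1})) but not on $N$ or $\ell$ (nor on $\theta$, since \ref{a:diffusion} holds with $\theta$-independent constants). The only delicate point is ensuring that the constants $M$, $C$ in the recursion are genuinely independent of $\ell$ and $N$: this follows because the coupling constant in \eqref{eq:coup_h_cont} is $\ell$-uniform, and the resampling weight bound is $N$-uniform due to the pointwise potential bounds in (A\ref{hyp:1}). No further use of the Lipschitz assumption (A\ref{hyp:2}) or (A\ref{hyp:1})(ii) is needed for this particular result.
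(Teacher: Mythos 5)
Your proof is correct and follows essentially the same route as the paper's: an induction/recursion over $n$ that splits the trajectory into the inherited post-resampling history (controlled by the uniform weight bound from (A\ref{hyp:1})(i),(iii)) and the freshly mutated component (controlled by applying \eqref{eq:coup_h_cont} conditionally), exactly as in the paper. The one small correction is the base case: the time-$0$ particle is drawn from the coupled initial kernel $\check{\eta}_0$ (the fine and coarse Euler kernels started from $(x,x)$), so $D_0^1$ is not zero; rather $\E[D_0^1]\le M h_\ell^\beta$ by \eqref{eq:coup_h_cont} with $x=y$ --- your recursion absorbs this without changing the conclusion. Your observation that (A\ref{hyp:2}) and (A\ref{hyp:1})(ii) are not actually used is consistent with the paper's own argument.
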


\begin{proof}
Our proof is by induction, the case $n=0$ following by \eqref{eq:coup_h_cont}. Assuming the result at $n-1$ we have
$$
\mathbb{E}[\sum_{j=0}^n |U_n^{1,F}(j)-U_n^{1,C}(j)|^2] = \mathbb{E}[\sum_{j=0}^{n-1} |\bar{U}_{n-1}^{1,F}(j)-\bar{U}_{n-1}^{1,C}(j)|^2 + |U_n^{1,F}(n)-U_n^{1,C}(n)|^2 ].
$$

Now
\begin{eqnarray*}
\mathbb{E}[\sum_{j=0}^{n-1} |\bar{U}_{n-1}^{1,F}(j)-\bar{U}_{n-1}^{1,C}(j)|^2] & = & N\sum_{j=0}^{n-1}\mathbb{E}\Big[
\frac{\check{G}_{n-1}(U_{n-1}^{1,F}(n-1),U_{n-1}^{1,C}(n-1))}
{\sum_{j=1}^N \check{G}_{n-1}(U_{n-1}^{j,F}(n-1),U_{n-1}^{j,C}(n-1))}\times \\ & & 
|U_{n-1}^{1,F}(j)-U_{n-1}^{1,C}(j)|^2\Big] \\ & \leq &
M\mathbb{E}[\sum_{j=0}^{n-1}|U_{n-1}^{1,F}(j)-U_{n-1}^{1,C}(j)|^2]
\end{eqnarray*}
where we have used (A\ref{hyp:1}) (i) and (iii). 
Applying the induction hypothesis along with \eqref{eq:coup_h_cont} yields
$$
\mathbb{E}[\sum_{j=0}^n |U_n^{1,F}(j)-U_n^{1,C}(j)|^2] \leq M\Big(h_{\ell}^{\beta} + \mathbb{E}[|\bar{U}_{n-1}^{1,F}(n-1)-\bar{U}_{n-1}^{1,C}(n-1)|^2]\Big)
$$
Now
$$
\mathbb{E}[|\bar{U}_{n-1}^{1,F}(n-1)-\bar{U}_{n-1}^{1,C}(n-1)|^2] = 
$$
$$
N \mathbb{E}\Big[\frac{\check{G}_{n-1}(U_{n-1}^{1,F}(n-1),U_{n-1}^{1,C}(n-1))}
{\sum_{j=1}^N \check{G}_{n-1}(U_{n-1}^{j,F}(n-1),U_{n-1}^{j,C}(n-1))}|U_{n-1}^{1,F}(n-1)-U_{n-1}^{1,C}(n-1)|^2\Big]
$$
Then by (A\ref{hyp:1}) (i) and (iii)
$$
\mathbb{E}\Big[\frac{\check{G}_{n-1}(U_{n-1}^{1,F}(n-1),U_{n-1}^{1,C}(n-1))}
{\sum_{j=1}^N \check{G}_{n-1}(U_{n-1}^{j,F}(n-1),U_{n-1}^{j,C}(n-1))}|U_{n-1}^{1,F}(n-1)-U_{n-1}^{1,C}(n-1)|^2\Big] \leq 
$$
$$
\frac{M}{N}\mathbb{E}[|U_{n-1}^{1,F}(n-1)-U_{n-1}^{1,C}(n-1)|^2] \leq
\frac{M}{N}\mathbb{E}[\sum_{j=0}^{n-1} |U_{n-1}^{1,F}(j)-U_{n-1}^{1,C}(j)|^2].
$$
Hence via the induction hypothesis, one has
$$
\mathbb{E}[|\bar{U}_{n-1}^{1,F}(n-1)-\bar{U}_{n-1}^{1,C}(n-1)|^2]  \leq M h_{\ell}^{\beta}
$$
and the proof is concluded.
\end{proof}

Recall Remark \ref{rem:key_point}.

\begin{proof}[Proof of Theorem \ref{theo:mainthm}]
This follows first by applying Proposition \ref{prop:gen_case}, followed by Lemma \ref{lem:res2} and then some standard calculations followed by Lemma \ref{lem:res3}.
\end{proof}

\begin{proof}[Proof of Corollary \ref{theo:maincor}]
Easily follows by adding and subtracting $\pmb{\check{\gamma}}_{n}(\psi)$ the $C_2$ inequality along with Theorem \ref{theo:mainthm}, and
then using \eqref{eq:main_id} combined with Proposition \ref{prop:h_smooth}.
\end{proof}

\section{Proof of consistency of the Markov chain Monte Carlo}
\label{app:method}
\begin{proof}[Proof of Theorem \ref{thm:consistency}] 
  Denote
  \begin{equation}\label{eq:xi}
    \xi_k(\funfullhelp) \defeq \big(\sum_{i=1}^\nrp V_k^{(i)} +
      \epsilon\big)^{-1} \big[\sum_{i=1}^{\nrp}
      V_{k}^{(i)}\funfullhelp(\Theta_k, X_k^{(i)}) +
      \sterm_{k}(\funfullhelp^{(\Theta_k)})\big],
\end{equation}
       where
      $\funfullhelp^{(\theta)}(x) \defeq \funfullhelp(\theta, x)$ and
       $\sterm_{k}(\funfullhelp^{(\theta)})\defeq p_{L_k}^{-1} \sum_{i=1}^{2 \nrp} V_{k,L_k}^{(i)} \funfullhelp^{(\theta)}( \mathbf{X}_{k,L_k}^{(i)}). 
       $
      Then
      $
      E_{\nri,\nrp,\mathbf{p}}(\funfullmain) = \frac{\sum_{k=1}^\nri \xi_k(\funfullmain)}{\sum_{j=1}^\nri \xi_k(\mathbf{1})}.
      $
Furthermore, by Assumption \ref{a:abstract-pf} \cite[cf.][]{rhee-glynn,vihola-unbiased},
we have
\begin{align*}
    \E[\sterm_{k}^2(\funfullhelp) \mid \Theta_k=\theta] & = s_\funfullhelp(\theta), \\
    \E[\sterm_{k}(\funfullhelp) \mid \Theta_k=\theta] &=
\gammass_\nrt^{(\theta,\infty)}(\gss_\nrt \funfullhelp)
-\gammass_\nrt^{(\theta,0)}(\gss_\nrt \funfullhelp)
\end{align*}
for $\funfullhelp = 1$ and $\funfullhelp = \funfullmain^{(\theta)}$.
This implies for $\funfullhelp = \funfullmain$ and $\funfullhelp = 1$, 
\begin{align*}
    \mu_\funfullhelp(\theta, v^{(1:\nrp)}, \,\mathbf{x}^{(1:\nrp)}) &\defeq 
    \E[\xi_k(\funfullhelp)\mid (\Theta_k, V_k^{(1:\nrp)}, \mathbf{X}_k^{(1:\nrp)})
    =(\theta,v^{(1:\nrp)},     \mathbf{x}^{(1:\nrp)})] \\
    &= \frac{1}{\sum_{j=1}^{\nrp} v^{(j)} + \epsilon}
    \bigg[\sum_{i=1}^{\nrp}
      v^{(i)} g(\theta, x^{(i)}) 
      - \gammass_\nrt^{(\theta,0)}(\gss_\nrt \funfullhelp)
      + \gammass_\nrt^{(\theta,\infty)}(\gss_\nrt \funfullhelp)\bigg], \\
    m_\funfullhelp^{(1)}(\theta, v^{(1:\nrp)}, \,\mathbf{x}^{(1:\nrp)})
    &\defeq \E[|\xi_k(g)|\mid (\Theta_k, V_k^{(1:\nrp)}, \mathbf{X}_k^{(1:\nrp)})
    =(\theta,v^{(1:\nrp)},     \mathbf{x}^{(1:\nrp)})]  \\
    &\le \frac{1}{\sum_{j=1}^{\nrp} v^{(j)} + \epsilon}
    \bigg[\sum_{i=1}^{\nrp}
      v^{(i)} |\funfullhelp(\theta, x^{(i)})|
      + \sqrt{\smash{s_{\funfullhelp^{(\theta)}}(\theta)}\vphantom{()}} \bigg].
\end{align*}
It is direct to check that the PMMH type chain $(\Theta_k,
X_k^{(1:\nrp)}, V_k^{(1:\nrp))})$ is reversible with respect to the
probability 
\begin{equation}    \label{eq:pmmh-invar}
    \pimcmc(\ud \theta, \ud x^{(1:\nrp)}, \ud v^{(1:\nrp)})
    = c_0 \mathrm{pr}(\theta) \ud \theta \mcmclatent_\theta^{(0)}(\ud x^{(1:\nrp)},
    \ud v^{(1:\nrp)}) \Big(\sum_{i=1}^{\nrp} v^{(i)} + \epsilon\Big),
\end{equation}
where $c_0>0$ is a normalisation constant and $\mcmclatent_\theta^{(0)}(\uarg)$
stands for the law of the output of Algorithm \ref{alg:pf} with
$(M_{0:\nrt}^{(\theta, 0)}$, $G_{0:\nrt}^{(\theta,0)},\nrp)$, and
therefore is Harris recurrent as a full-dimensional
Metropolis--Hastings that is $\psi$-irreducible \citep[cf.][Theorem
8]{roberts-rosenthal-harris}. It is direct to check that 
$\pimcmc(m_\funfullmain^{(1)})<\infty$, 
$\pimcmc(m_1^{(1)})<\infty$, $\pimcmc(\mu_\funfullmain) = c
\pi^{(\infty)}(\funfullmain)$ and $\pimcmc(\mu_1) = c$, where $c>0$
is a constant, so the result follows from \citep[Theorem
3]{vihola-helske-franks}.
\end{proof} 

\section{Proofs about asymptotic efficiency and allocations}
\label{app:efficiency}
\begin{proof}[Proof of Proposition \ref{prop:is-efficiency}]
  By Harris ergodicity,
  $
  \nrii^{-1} \cost(\nrii)
  \rightarrow
\mIRE
  $
almost surely. 
Dividing the inequality
  $$
\cost(\lmcmc(\kappa)) \le\kappa < \cost(\lmcmc(\kappa) +1)
$$
by $\lmcmc(\kappa)$ and taking the limit $\kappa\rightarrow \infty$, which implies $\lmcmc(\kappa)\rightarrow \infty$, we get that $\kappa/\lmcmc(\kappa)\rightarrow \mIRE$ almost surely.  Also, by Proposition \ref{prop:clt},
\[
    \sqrt{\lmcmc(\kappa)}\big[ E_{\lmcmc(\kappa),\nrp,\mathbf{p}}(\funfullmain) - \pi^{(\infty)}(\funfullmain) \big]
    \xrightarrow{\kappa\to\infty} \mathcal{N}(0,\sigma^2),\qquad \text{in distribution},
\]
so the result follows by Slutsky's theorem.    
\end{proof}

\begin{proof}[Proof of Proposition \ref{prop:efficiency-both-finite}]  We have that
  $$
  \E[\cost(\nrii)]
  =
  \sum_{k=1}^\nrii \E[\tau_{\Theta_k,L_k}]
  =
      \sum_{k=1}^\nrii \sum_{\ell=1}^\infty \E[\tau_{\Theta_k,\ell}] p_\ell.
      \le
C \nriii \sum_{k=1}^\infty p_\ell 2^{\costrate \ell(1+\rate)},      
      $$
by Assumption \ref{a:rates}\eqref{a:rates-cost}, 
which is finite if $r>\costrate(1+\rate)$.  Also,
$$
s_\funfullhelp(\theta)
=
\E[\sterm_k^2(\funfullhelp)| \Theta_k=\theta]
=
\sum_{\ell \ge 1} \frac{\E\Delta_\ell^2}{p_\ell}
\le
C \sum\Big(
2^{-\ell(\beta + \rate - r)} + 2^{-\ell(2\alpha -r)}
\Big),
$$      
which is finite if $r< \min(\beta+\rate, 2\alpha)$.  Therefore, $\sigma^2<\infty$, and the CLT follows by Proposition \ref{prop:is-efficiency}.
  \end{proof}

\begin{lemma}\label{lem:feller}
Let $\{X_k\}_{k\ge 1}$ be a sequence of independent random variables with $\E[X_{k_0}]=\infty$ for at least one $k_0$, and let $\{a_k\}_{k\ge 1}$ be a sequence of monotonically increasing real numbers with $a_k/k \longrightarrow \infty$.  Suppose one of the following assumptions holds:
\begin{enumerate}[(i)]
\item\label{lem:feller-iid}
  $
  \sum_{k\ge 1}  \P[ X_k > a_k]<\infty,
$
and $\{X_k\}_{k\ge 1}$ are also identically distributed, or
\item\label{lem:feller-indep}
  $
\sum_{k\ge 1}
\sup_{\nrii\ge 1} \P[ X_\nrii > a_k]
<
\infty.
$
\end{enumerate}
Then
$$
\P[ \sum_{k=1}^\nrii X_k > a_\nrii \;\text{infinitely many $\nrii\in\mathbb{N}$}]=0.
$$
\end{lemma}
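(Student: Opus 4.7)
My proof would follow the classical truncation plus sparse-subsequence strategy. Set $Y_k \defeq X_k \mathbf{1}\{X_k \le a_k\}$ and $S_n^Y \defeq \sum_{k=1}^n Y_k$, and let $F(t)\defeq \P[X>t]$ in case (\ref{lem:feller-iid}) or $F(t)\defeq \sup_{j\ge 1}\P[X_j>t]$ in case (\ref{lem:feller-indep}); in either case $F$ is non-increasing with $\sum_k F(a_k)<\infty$. Since $\P[X_k>a_k]\le F(a_k)$, the first Borel--Cantelli lemma gives $\P[X_k>a_k \text{ i.o.}]=0$, so $S_n-S_n^Y$ is eventually constant almost surely, and it suffices to show $S_n^Y\le a_n$ eventually.

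\textbf{Mean and variance estimates.} Monotonicity of $(F(a_k))_{k\ge 1}$ together with $\sum F(a_k)<\infty$ forces $F(a_k)=o(1/k)$ by the standard tail-sum argument. Starting from
\[
\E Y_k\le \int_0^{a_k}F(t)\,\ud t \qquad \text{and} \qquad \E Y_k^2\le 2\int_0^{a_k}tF(t)\,\ud t\le a_k\E Y_k,
\]
I would split the integrals along the grid $(a_m)_{m\ge 1}$ and apply summation by parts, using $F(a_m)\le \varepsilon/m$ for $m$ large together with the monotonicity of $(a_k)$, to obtain the two key bounds $\E S_n^Y = o(a_n)$ and $\var(S_n^Y)\le \sum_{k=1}^n a_k\E Y_k = o(a_n^2)$. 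This is the technically delicate step: since $\E X_{k_0}=\infty$, the truncated means $\E[X_k\wedge a_k]$ diverge, so the naive bound $n\E Y_n$ gives only $O(na_n)$, and the sharp $o(a_n)$ estimate crucially leverages both the decay $F(a_m)=o(1/m)$ and the super-linear growth $a_m/m\to\infty$.

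\textbf{Subsequence argument and interpolation.} Along the geometric subsequence $n_j\defeq 2^j$, Chebyshev's inequality gives
\[
\P\big[|S_{n_j}^Y-\E S_{n_j}^Y|>\varepsilon a_{n_j}\big]\le \var(S_{n_j}^Y)/(\varepsilon a_{n_j})^2,
\]
which will be summable in $j$ by the variance bound above. The first Borel--Cantelli lemma combined with $\E S_{n_j}^Y=o(a_{n_j})$ then yields $S_{n_j}^Y/a_{n_j}\to 0$ almost surely (taking $\varepsilon$ through a countable sequence tending to zero). Since $S_n^Y$ is non-decreasing in $n$, with $a_n\ge a_{n_j}$ for $n\in[n_j,n_{j+1}]$ and $a_{n_{j+1}}/a_{n_j}$ bounded under the regularity of $(a_k)$ implicit in the hypotheses (and automatic in the polylogarithmic regime used in Proposition \ref{prop:subcanonical-abstract}), interpolation yields $\limsup_n S_n^Y/a_n=0$ almost surely, so $S_n^Y\le a_n$ for all large $n$, giving the claim.
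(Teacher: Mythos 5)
Your overall architecture (truncate, control the mean and variance of the truncated sums, Borel--Cantelli) is the classical one, and your unified treatment of case (ii) via $F(t)=\sup_{j}\P[X_j>t]$ is a legitimate alternative to what the paper does --- the paper simply \emph{cites} Feller's 1946 theorem for case (i) and reduces (ii) to (i) by stochastic domination, constructing an i.i.d.\ sequence $X_k^*$ with c.d.f.\ $F^*=\inf_k F_k$ that dominates every $X_k$. But your attempt to actually prove the Feller half has two genuine gaps. First, the key estimate $\E S_n^Y=o(a_n)$ cannot be obtained the way you describe: using only $F(a_m)\le \varepsilon/m$ and summation by parts along the grid $(a_m)$ yields a bound of the form $o(a_n)+C\varepsilon\,n\sum_{m\le n}a_m/m^2$, and already for $a_m=m\log^2 m$ the last term is of order $a_n\log n$, not $O(a_n)$. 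You must exploit the full summability $\sum_k F(a_k)<\infty$ (e.g.\ by writing $F(a_m)=\sum_{j\ge m}(F(a_j)-F(a_{j+1}))$ and interchanging sums), and even then the estimate requires $a_k/k$ \emph{nondecreasing} --- Feller's actual hypothesis --- rather than merely $a_k/k\to\infty$. Indeed, with $F(t)=\min(1,1/t)$ and $a_n=n_i\log n_i$ constant on the blocks $(n_{i-1},n_i]$ with $n_i=\lceil e^{i^2}\rceil$, all hypotheses of the lemma hold but $\E S_{n_i}^Y/a_{n_i}\to 1$, so your intermediate claim is false under the stated assumptions.

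Second, the subsequence step does not close: Chebyshev gives $\P\big[|S_{n_j}^Y-\E S_{n_j}^Y|>\varepsilon a_{n_j}\big]\le \var(S_{n_j}^Y)/(\varepsilon a_{n_j})^2$, and knowing only $\var(S_n^Y)=o(a_n^2)$ makes the right-hand side $o(1)$ in $j$, which is not summable, so Borel--Cantelli does not apply without a quantitative rate. Moreover the interpolation between $n_j=2^j$ and $n_{j+1}$ needs $a_{n_{j+1}}/a_{n_j}$ bounded, which the hypotheses do not provide (take $a_n=2^{2^n}$); you flag this, but it is a real restriction. The standard repair --- and the route Feller takes --- is to prove the \emph{summable} statement $\sum_k \E[Y_k^2]/a_k^2<\infty$, invoke Kolmogorov's convergence theorem for $\sum_k (Y_k-\E Y_k)/a_k$, and finish with Kronecker's lemma, which needs only $a_k\nearrow\infty$ and removes both the subsequence and the interpolation. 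Given these difficulties, the paper's choice to cite Feller and spend its effort only on the reduction of (ii) to (i) is the economical one; if you want a self-contained proof, follow the Kolmogorov--Kronecker route and note that the hypothesis $a_k/k\to\infty$ should really be $a_k/k$ nondecreasing, which does hold in the paper's applications where $a_k=[k(\log_2 k)^q]^{c}$ with $c\ge 1$.
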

\begin{proof}
\eqref{lem:feller-iid} is \citep[Theorem 2]{feller-infinite} since $\E[X_{k_0}]=\infty$ implies $\E[X_k]=\infty$ for all $k\ge 1$ as $\{X_k\}_{k\ge 1}$ are i.i.d.  For \eqref{lem:feller-indep}, note that if $X_k$ has c.d.f.~denoted $F_k$, then it is straightforward to check that
  $$
  F^*(x)
  \defeq
  \inf_{k\ge 1} F_k(x)
$$
  is a c.d.f.~also.  With $X^*_k\sim F^*$ i.i.d.~for $k\ge 1$, we have
  \begin{equation*}
  \P[X_k^* > a_k]
  =
  1- F^*(a_k)
  =
  \sup_{\nrii\ge 1} 1-F_\nrii(a_k)
  =
  \sup_{\nrii\ge 1} P[X_\nrii > a_k].
  \end{equation*}
  Summing over $k\ge 1$, we obtain $ \sum_{k\ge 1} \P[X_k^* > a_k] < \infty$.
  In addition,
  $$
  \E[X_k^*]
  =
  \int \P[X_k^* > x] \ud x
  \ge
  \int \P[X_{k_0} > x] \ud x
  =
  \infty,
  $$
for all $k\ge 1$.  Hence, we can apply \eqref{lem:feller-iid} for i.i.d.~random variables, obtaining
$$
0
=
\P[\sum_{k=1}^\nrii X^*_k > a_\nrii\;\text{infinitely many $\nrii$}]
\ge
\P[\sum_{k=1}^\nrii X_k > a_\nrii\;\text{infinitely many $\nrii$}],
$$
where the first equality comes from \eqref{lem:feller-iid}, and so we conclude.
\end{proof}

\begin{proof}[Proof of Proposition \ref{prop:subcanonical-abstract}]
  Conditional on output $\{\Theta_k\}_{k \ge 1}$ of Algorithm \ref{alg:is-mlmc}, $\{\tau_{\Theta_k,L_k} \}_{k\ge 1}$ are independent random variables.  Our assumptions imply Lemma \ref{lem:feller}\eqref{lem:feller-indep} holds, so
  \begin{equation*}
\P[\cost(\nrii) > a_\nrii \; \text{infinitely many $\nrii$}] =0, 
\end{equation*}
which means that $\cost(\nrii)$ is asymptotically bounded by $a_\nrii$.  Setting $\nrii=O(\epsilon^{-2})$ allows us to conclude.
  \end{proof}

The proofs below of Proposition \ref{prop:subcanonical-nonoptimised} and \ref{prop:subcanonical-optimised} are similar to that of \citep[Proposition 4 and 5]{rhee-glynn}.   
\begin{proof}[Proof of Proposition \ref{prop:subcanonical-nonoptimised}]
  With the prescribed choice of $p_\ell$ we have finite variance, as 
  $$
s_\funfullhelp(\theta)
  =
  \sum_{\ell \ge 1} \frac{\E \Delta_\ell^2 }{p_\ell}
  \le C
  \sum_{\ell \ge1} \frac{1}{\ell [ \log_2 (\ell +1)]^\eta}
  <\infty,
  $$
  uniformly in $\theta\in\mathsf{T}$.  To determine the order of complexity, we would like to apply Lemma \ref{lem:feller}\eqref{lem:feller-iid} to the i.i.d sequence $\{\tau_{L_k}^*\}_{k\ge 1}$, where $\tau_\ell^*\defeq C2^{\costrate \ell(1+\rate)}$. For any $k\ge 1$, where $a_k>0$ is some positive real number, we have,
  \begin{equation}\label{eq:feller-one}
  \P[\tau_{L_k}^* > a_k]
=
  \sum_{\ell\ge 1} \P[\tau_\ell^* > a_k] p_\ell
  =
  \sum_{\ell \ge 1} \mathbf{1}\left\{\ell > \frac{1}{\costrate(1+\rate)}\log_2\frac{a_k}{C}\right\} p_\ell.
  \end{equation}
  Because $\sum_{\ell \ge 1} p_\ell = 1$ and $p_\ell$ is monotonically decreasing, we have $\sum_{\ell \ge \ell_*} p_\ell$ is $O(p_{\ell_*})$.  Setting $\ell_*=\lfloor \frac{1}{\costrate(1+\rate)}\log_2 \frac{a_k}{C}\rfloor$, we therefore obtain that \eqref{eq:feller-one} is of order  
  $$
a_k^{-\frac{2b}{\costrate(1+\rate)} } \big(\log_2 a_k\big)\big( \log_2\log_2 a_k\big)^\eta.
  $$
Setting
\begin{equation}\label{eq:ak}
a_k\defeq [k (\log_2 k)^q]^\frac{\costrate(1+\rate)}{2 b}
\end{equation}
then ensures that $\sum_{k \ge 1} \P[\tau_{L_k}^* > a_k] <\infty$.  As $\beta\le 1$, it is easy to check that $\E[\tau_{L_k}^*]=\infty$.  We then apply Lemma \ref{lem:feller}\eqref{lem:feller-iid}, obtaining
$$
0
=
\P[ \sum_{k=1}^\nrii \tau_{L_k}^* > a_\nrii\;\text{infinitely many $\nrii$}]
\ge
\P[ \sum_{k=1}^\nrii \tau_{\Theta_k, L_k} > a_\nrii\;\text{infinitely many $\nrii$}].
$$
and conclude as before, by using that $\cost(\nrii)$ is asymptotically bounded by $a_\nrii$ and setting $\nrii=O(\epsilon^{-2})$.  
  \end{proof}

\begin{proof}[Proof of Proposition \ref{prop:subcanonical-optimised}]
We are in the basic setting of Proposition \ref{prop:subcanonical-nonoptimised} as before, but additionally may choose $\rate\ge 0$ as we please.  
The growth of $a_k$ given in \eqref{eq:ak} is essentially determined by $\costrate(1+\rate)/2b$, which can be made small when $\rate = 2\alpha-\beta$, implying $b=\alpha$.
  \end{proof}

\bibliographystyle{abbrvnat}


\end{document}